\def\single{\text{sgl}}
\def\complete{\text{cmpl}}
\def\average{\text{avg}}
\def\median{\text{med}}
\def\medoid{\text{geomed}}
\newcommand{\sign}{\textup{\textsf{sign}}}
\newcommand{\sgn}{\textup{\textsf{sign}}}
\newcommand{\Appendix}[1]{the full version for}
\newtheorem{theorem}{Theorem}[section]
\newtheorem{lemma}[theorem]{Lemma}
\newtheorem{corollary}[theorem]{Corollary}
\newtheorem{remark}{Remark}
\newtheorem{example}{Example}
\newtheorem{definition}{Definition}
\newcommand{\e}{\mathbf{e}}
\newcommand{\x}{\bm{x}}
\newcommand{\cM}{\mathcal{M}}
\newcommand{\N}{\mathbb{N}}
\newcommand{\R}{\mathbb{R}}
\newcommand{\Z}{\mathbb{Z}}
\renewcommand{\comment}[1]{}
\newcommand{\red}[1]{}
\newcommand{\cA}{\mathcal{A}}
\newcommand{\cB}{\mathcal{B}}
\newcommand{\cC}{\mathcal{C}}
\newcommand{\cF}{\mathcal{F}}
\newcommand{\cG}{\mathcal{G}}
\newcommand{\cH}{\mathcal{H}}
\newcommand{\cL}{\mathcal{L}}
\newcommand{\cU}{\mathcal{U}}
\newcommand{\cO}{\mathcal{O}}
\newcommand{\cP}{\mathcal{P}}
\newcommand{\cQ}{\mathcal{Q}}
\newcommand{\cY}{\mathcal{Y}}
\newcommand{\cX}{\mathcal{X}}
\newcommand{\bbI}{\mathbb{I}}
\def\s{\sigma}
\definecolor{colorY}{rgb}{0.7 , 0.7 , 0.2}
\DeclareMathOperator*{\argmax}{argmax}
\DeclareMathOperator*{\argmin}{argmin}
\newenvironment{proofoutline}{\noindent{\emph{Proof Sketch. }}}{\hfill$\square$\medskip}
\font\myfont=cmr12 at 15.33pt
\title{\myfont Accelerating ERM for data-driven algorithm design using output-sensitive techniques\red{\\Output-sensitive enumeration for faster data-driven algorithm design}}
\author{Maria-Florina Balcan\thanks{Carnegie Mellon University, \texttt{ninamf@cs.cmu.edu}.}\\
\and
Christopher Seiler\thanks{Work done by Christopher Seiler while he was at CMU.}\\
\and
Dravyansh Sharma\thanks{Corresponding author: \texttt{dravy@ttic.edu}. Work done by Dravyansh Sharma while he was at CMU.}}
\date{}
\begin{document}
\maketitle

\begin{abstract}%
  \noindent Data-driven algorithm design is a promising, learning-{based} approach for beyond worst-case analysis of algorithms with tunable parameters.
An important open problem is the design of computationally efficient data-driven algorithms for combinatorial algorithm families with multiple parameters. As one fixes the problem instance and varies the parameters, the ``dual'' loss function typically has a piecewise-decomposable structure, i.e.\ is well-behaved except at certain sharp transition boundaries. {Motivated by prior empirical work,} we initiate the study of techniques to develop efficient ERM learning algorithms for data-driven algorithm design by enumerating the pieces of the sum dual loss functions for a collection of problem instances. The running time of our approach scales with the actual number of pieces that appear as opposed to worst case upper bounds on the number of pieces.
Our approach involves two novel ingredients -- an output-sensitive algorithm for enumerating polytopes induced by a set of hyperplanes using tools from computational geometry, and an {\it execution graph} which compactly represents all the states the algorithm could attain for all possible parameter values. We illustrate our techniques by giving algorithms for pricing problems, linkage-based clustering and dynamic-programming based sequence alignment.
\end{abstract}

\section{Introduction\red{3pg}}

The classic approach to the design and analysis of combinatorial algorithms (e.g.\ \cite{kleinberg2006algorithm}) considers worst-case problem instances on which the guarantees of the algorithm, say running time or the quality of solution, must hold.  However, in several applications of algorithms in practice, such as aligning genomic sequences or pricing online ads, one often ends up solving multiple related problem instances {(often not corresponding to the worst-case)}. {\color{black}The data-driven algorithm design paradigm captures this common setting and allows the design and analysis of algorithms that use machine learning to learn how to solve the instances which come from the same domain  \cite{gupta2016pac,balcan2020data}}. Typically there are large (often infinite) parameterized algorithm families to choose from, and data-driven algorithm design approaches provide techniques to select algorithm parameters that provably perform well for instances from the same domain. Data-driven algorithms have been proposed and analyzed for a variety of combinatorial problems, including clustering, computational biology, mechanism design, integer programming, semi-supervised learning and low-rank approximation \cite{balcan2018general,balcan2019learning,balcan2021much,balcan2021sample,balcan2021data,bartlett2022generalization}. {But most of the prior work has focused on {\it sample efficiency} of learning good algorithms from the parameterized families, and a major open question for this line of research is to design {\it computationally efficient} learning algorithms \cite{gupta2020data,blum2021learning}.}\looseness-1

\begin{table*}[t]
\centering
\begin{tabular}{ccccc}
\hline
Problem & Dimension & Prior work (one instance) & $T_S$ (one instance) \red{Add theorem numbers} &$T_{\text{ERM}}$ ($m$ instances)
\\ \hline
\hline
Two-part tariff \red{add item-pricing?} & $\ell=1$& $O(K^3)$ \cite{balcan2020efficient} & $\tilde{O}(R+K),$ \red{$R=O(N^2K)$}
& ${O}(R_{\Sigma}+mK\log mK)$\\
 pricing & any $\ell$& $K^{O(\ell)}$ \cite{balcan2020efficient} &$\Tilde{O}(R^2K)$ & $mT_{S}+\tilde{O}(mKR_{\Sigma}^2)$\\\hline
 
Linkage-based &$d=2$&$O(n^{18}\log n)$  \cite{balcan2019learning}
&$O(Rn^3)$& $mT_{S}+\tilde{O}(mn^2R_{\Sigma})$\\clustering
& any $d$& $O(n^{8d+2}\log n)$ \cite{balcan2019learning}& $\Tilde{O}(R^2n^3)$& $mT_{S}+\tilde{O}(mn^2R_{\Sigma}^2)$\\\hline

DP-based sequ-&$d=2$&$O(R^2+RT_{\textsc{dp}})$ \cite{gusfield1994parametric} & $O(RT_{\textsc{dp}})$& $mT_{S}+\tilde{O}(mT_{\textsc{dp}}R_{\Sigma})$\\
ence alignment&any $d$&$s^{O(sd)}T_{\textsc{dp}}$ \cite{balcan2021much}& $\Tilde{O}(\Tilde{R}^{2L+1}T_{\textsc{dp}})$& $mT_{S}+\tilde{O}(mT_{\textsc{dp}}R_{\Sigma}^2)$\\

\hline
\end{tabular}
\caption{Summary of running times of the proposed algorithms. $T_{\text{ERM}}$ denotes the running time for computing the pieces in the sum dual class function in terms of $T_S$, the time for enumerating the pieces on a single problem instance (Theorem \ref{thm:erm}). $R_\Sigma$ (resp.\ $R$) denotes  the number of pieces in the sum dual class function for given $m$ problem instances (resp.\ dual class function for a single problem instance). 
$K$ is the number of units of the item sold 
and $\ell$ is the menu length. $n$ is the size of the clustering instance. $s$ is the length of sequences to be aligned, $L$ is the maximum number of subproblems in the sequence alignment DP update, $\Tilde{R}$ is the maximum  number of pieces in the dual class function over all subproblems, $T_{DP}$ is the time to solve the DP on a single instance. 
 {The $\Tilde{O}$ notation suppresses logarithmic terms and multiplicative terms that only depend on $d$ or $L$.}
}

\label{tab:comparison}
\end{table*}

The parameterized family may occur naturally in well-known algorithms used in practice, or one could potentially design new families interpolating known heuristics. For the problem of aligning pairs of genomic sequences, one typically obtains the best alignment using a dynamic program with some costs or weights assigned to edits of different kinds, such as insertions, substitutions, or reduplications \cite{waterman1989mathematical}. These costs are the natural parameters for the alignment algorithm. The quality of the alignment can strongly depend on the parameters, and the best parameters  vary depending on the application (e.g.\ aligning DNAs or RNAs), the pair of species being compared, or the purpose of the alignment. Similarly, item prices are natural parameters in automated mechanism design \cite{balcan2018general,balcan2020efficient}. On the other hand, for linkage-based clustering, one usually chooses from a set of different available heuristics, such as single or complete linkage. Using an interpolation of these heuristics to design a parameterized family and tune the parameter, one can often obtain significant improvements in the quality of clustering \cite{balcan2017learning,balcan2019learning}.\looseness-1

{A common property satisfied by a large number of interesting parameterized algorithm  families is that the loss\footnote{or utitity, basically a  function that measures the performance of any algorithm in the family on any problem instance (Section \ref{sec:pw}).} as a function of the real-valued parameters for any fixed problem instance---called the ``dual class function'' \cite{balcan2021data}---is a piecewise structured function, i.e.\ the parameter space can be partitioned into ``pieces'' via sharp transition boundaries such that the loss function is well-behaved (e.g.\ constant or linear) within each piece \cite{balcan2020data}. Prior work on data-driven algorithm design has largely focused on the sample complexity of the empirical risk minimization (ERM) algorithm which finds the loss-minimizing value of the parameter over a collection of problem instances drawn from some fixed unknown distribution. The ERM on a collection of problem instances can be implemented by enumerating the pieces of  the sum dual class loss function. We will design algorithms for computationally efficient enumeration of these pieces, when the transition boundaries are linear, and our techniques can be used to learn good domain-specific values of these parameters given access to multiple problem instances from the problem domain. More precisely, we use techniques from computational geometry to obtain ``output-sensitive'' algorithms (formalized in Section \ref{sec:ofpt}), that scale with the output-size $R_{\Sigma}$ (i.e.\ number of pieces in the partition) of the piece enumeration problem for the sum dual class function on a collection of problem instances. Often, on commonly occurring problem instances, the number of pieces is much smaller than worst case bounds on it \cite{balcan2019learning,gusfield199628} and our results imply significant gains in running time whenever $R_{\Sigma}$ is small. 
}

 \red{
\subsection{Output-sensitive Parameterized Complexity}

Output-sensitive algorithms have a running time that depends on the size of the output for any input problem instance. Output-sensitive analysis is popular in computational geometry, for example Chan's algorithm \cite{chan1996optimal} computes the convex hull of a set of 2-dimensional points in time $O(n\log R)$, where $R$ is the size of the (output) convex hull. Output-sensitive algorithms are useful if the output size is variable, and `typical' output instances are much smaller than worst case instances. Parameterized complexity extends classical complexity theory by taking into account not only the total input length $n$, but also other aspects of the problem encoded in a {\it parameter} $k$. The motivation is to confine the super-polynomial runtime needed for solving many natural problems strictly to the parameter. Formally, a parameterized decision problem (or language) is a subset $L \subseteq \Sigma^* \times\N$, where $\Sigma$ is a fixed alphabet, i.e. an input $(x,k)$ to a parameterized problem consists of two parts, where the second part $k$ is the parameter. A parameterized  problem $L$ is {\it fixed-parameter tractable} if there exists an algorithm which on a given input $(x,k) \in \Sigma^* \times\N$, decides whether $(x,k) \in L$ in $f(k) \cdot \text{poly}(|x|)$ time, where f is an arbitrary computable function in $k$ \cite{downey2012parameterized}. FPT is the class of all parameterized problems  which are fixed-parameter tractable. In contrast, the class XP (aka {\it slicewise-polynomial}) consists of problems for which there is an algorithm with running time $|x|^{f(k)}$. It is known that FPT $\subsetneq$ XP. We consider an extension of the above to function problems and incorporate output-sensitivity in the following.

\begin{definition}[Output-polynomial Fixed Parameter Tractable] A parameterized function problem $P: \Sigma^* \times\N\rightarrow \Tilde{\Sigma}^*$ is said to be output-polynomial fixed-parameter tractable if there exists an algorithm which on a given input $(x,k) \in \Sigma^* \times\N$, computes the output $P(x,k)\in \Tilde{\Sigma}$ in time $f(k) \cdot \text{poly}(|x|,R)$, where $R=|P(x,k)|$ is the output size and $f$ is an arbitrary computable function in $k$.
    \label{def:ofpt}
\end{definition}

\noindent As discussed above, output-sensitvity and fixed-parameter tractability both offer more fine-grained complexity analysis than traditional (input) polynomial time complexity. In this work, we consider the optimization problem of selecting tunable parameters over a continuous domain $\cC\subset\R^d$ with the fixed-parameter $k=d$, the number of tunable parameters. We will design OFPT enumeration algorithms which, roughly speaking, output a finite ``search space'' of size $R$ which can be used to easily find the best parameter for the problem instance. {For further context on related works, see Appendix \ref{app:related work}.}
}

\subsection{Our contributions}

{We design computational geometry based 
approaches that lead to output-sensitive algorithms for learning good parameters by implementing the ERM (Empirical Risk Minimization) for several distinct data-driven design problems.} The resulting learning algorithms scale polynomially  with the number of sum dual class function pieces $R_{\Sigma}$ in the worst case (See Table \ref{tab:comparison}) and are efficient for small constant $d$.

\begin{enumerate}[leftmargin=*,topsep=0pt,partopsep=1ex,parsep=1ex]\itemsep=-4pt
\item We present a computational geometry based output-sensitive algorithm for enumerating the cells induced by a collection of hyperplanes. Our approach applies to any problem where the loss is piecewise linear, with convex polytopic piece boundaries. We achieve output-polynomial time by removing redundant constraints in any polytope using Clarkson's algorithm \cite{clarkson1994more} and performing an implicit search over the graph of neighboring polytopes (formally Definition \ref{def:rag}). Our results are useful in obtaining output-sensitive running times for several distinct data-driven algorithm design problems. {\it Theorem \ref{thm:erm} bounds the running time of the implementation of the ERM for piecewise-structured duals with linear boundaries (Definition \ref{def:ps}) that scales with $R_{\Sigma}$ provided we can enumerate the pieces of the dual class function for a single problem instance.} It is therefore sufficient to enumerate the pieces of dual function for a single instance, for which Theorem \ref{thm:enumeration} is useful. 

\item For the two-part tariff pricing problem  (as well as item-pricing with anonymous prices), we show how our approach can be used to provide output-sensitive algorithms for enumerating  the pieces of the sum dual class function which computes total revenue across multiple buyers as a function of the price parameters. 
\red{For the case of single-menu TPT, i.e. $d=2$, we show that our bounds are asymptotically optimal. For general menu length we establish fixed-parameter tractability, while prior best approach is only slicewise-polynomial.} 
  \item We show how to learn multidimensional parameterized  families of hierarchical clustering algorithms\footnote{One should carefully distinguish the parameterized clustering algorithm from the ERM-based learning algorithm which learns a parameter from this family given problem instances. Our algorithmic approaches focus on implementing the ERM efficiently.}. Our approach for enumerating the pieces of the dual class function on a single instance extends the execution tree based approach introduced for the much simpler single-parameter family in \cite{balcan2019learning} to multiple dimensions. The key  idea is to track the convex polytopic subdivisions of the parameter space corresponding to a single merge step in the linkage procedure via nodes of the execution tree and apply our output-sensitive cell-enumeration algorithm at each node.  \red{In particular, our algorithm establishes that the tuning problem is FPT (fixed-parameter tractable) while previous work shows that the problem is in the strictly larger complexity class XP \citep{balcan2017learning}. ...... add } \looseness-1 
    \item For dynamic programming (DP) based sequence alignment, our approach for enumerating pieces of the dual class function  extends the execution tree  to an execution directed acyclic graph (DAG). For a small fixed $d$, our algorithm is efficient for small constant $L$, the maximum number of subproblems needed to compute any single DP update. Prior work \cite{gusfield1994parametric} only provides an algorithm for computing the full partition for $d=2$, {and the naive approach of comparing all pairs of alignments} takes exponential time. 

\end{enumerate}
Our main contribution is to show that when the number of pieces ($R_{\Sigma}$ in Table \ref{tab:comparison}) is small we can improve over the computational efficiency of prior work. Furthermore even on worst case instances 
we also improve the previous bounds, but only in very special cases. In the notation of Table \ref{tab:comparison}, prior work \cite{balcan2020efficient} on single-menu two-part tariff pricing ($L'=1$) gives an algorithm with runtime $O(m^3K^3)$, while we obtain a running time of $O(R_{\Sigma}+mK\log mK)$ and we further show that $R_{\Sigma}=O(m^2K)$. For sequence alignment, the algorithm due to \cite{gusfield1994parametric} scales quadratically with the number of pieces  for $d=2$, while we obtain linear dependence.  

{
\subsection{Motivation from prior empirical work}\label{sec:r}
{For the clustering problem, we give theoretical bounds on $R$ depending on the specific parametric family considered (e.g. Lemmas \ref{lem:clustering-piecewise-2point}, \ref{lem:piecewise-decomp-2}, \ref{lem:clustering-piecewise-main}). In Table \ref{tab:comparison}, these bounds imply a strict improvement (e.g. from $\Tilde{O}(p^{18})$ to $O(p^{11})$ by Lemma \ref{lem:clustering-piecewise-main} for $d=2$) even for worst-case $R$, but can be much faster for typical $R$. Indeed, prior empirical work indicates $R$ is much smaller than the worst case bounds in practice \cite{balcan2019learning}, where even for $d=1$ there is a dramatic speed up of over $10^{15}$ times by being output-sensitive. For the sequence alignment problem, again the upper bounds on R depend on the nature of cost functions involved. For example, Theorem 2.2 of \cite{gusfield1994parametric} gives an upper bound on $R$ for the 2-parameter problem with substitutions and deletions. For the TPT pricing problem we prove new theoretical bounds on $R$ (Theorem \ref{lem:tariff-pieces}) that show worst-case improvements (cubic to quadratic) in the running time over prior work.
In practice, $R$ is usually much smaller than the worst case bounds, making our results even stronger.
In prior experimental work on computational biology data (sequence alignment with $d=2$, sequence length {\raisebox{0.5ex}{\texttildelow}}100), \cite{gusfield199628} observe that of a possible more than 
2$^{200}$ alignments, only $R=120$ appear as possible optimal sequence alignments (over $\R^2$) for a pair of immunoglobin sequences (a speed-up of over 58 orders of magnitude).}\looseness-1
}
\red{maybe new experiments for item-pricing}

{
\subsection{Key insights and challenges}

We first present an algorithm for enumerating the cells induced by a finite collection of hyperplanes in $\R^d$ in output-sensitive time. Our approach is based on an output-sensitive algorithm for determining the non-redundant constraints in a linear system due to Clarkson \cite{clarkson1994more}. At any point, we compute the {\it locally relevant hyperplanes} that constitute the candidate hyperplanes which could bound the cell (or piece) containing that point, using a problem-specific sub-routine, and apply Clarkson's algorithm over these to determine the facet-inducing hyperplanes for the cell as well as a collection of points in neighboring cells. This allows us to compute all the induced cells by traversing an implicit {\it cell adjacency graph} (Definition \ref{def:rag}) in a breadth-first order. Our approach gives a recipe for computing the pieces of piecewise structured dual class functions with linear boundaries, which can be used to find the best parameters over a single problem instance.  We further show how to compute the pieces of the sum dual class function for a collection of problem instances in output-sensitive time, by applying our approach to the collection of facet-inducing hyperplanes from each problem instance. Our approach is useful for several mechanism design problems which are known to be piecewise structured \cite{balcan2018general}, and we instantiate our results for two-part tariff pricing and item pricing with anonymous prices. For the single menu two-part tariff case ($d=2$), we use additional structure of the dual class function to give a further improvement in the running time.


For linkage-based clustering, we extend the {\it execution tree} based approach of \cite{balcan2019learning} for single-parameter families to $d>1$. The key idea is that linkage-based clustering algorithms involve a sequence of steps (called {\it merges}), and the algorithmic decision at any step can only refine the partition of the parameter space corresponding to a fixed sequence of steps so far. Thus, the pieces can be viewed as leaves of a tree whose nodes at depth $t$ correspond to a partition of the parameter space such that the first $t$ steps of the algorithm are identical in each piece of the partition. While in the single parameter family the pieces are intervals, 
we approach the significantly harder challenge of efficiently computing convex polytopic subdivisions {(formally Definition \ref{def:subdivision})}. We essentially compute the execution tree top-down for any given problem instance, and use our hyperplane cell enumeration to compute the children nodes for any node of the tree.

{{We further extend the execution tree approach to a problem where all the algorithmic states corresponding to different parameter values cannot be concisely represented via a tree.}} 
For dynamic-programming based sequence alignment, we define a compact representation of the execution graph and  use primitives from computational geometry, in particular for overlaying two convex subdivisions in output-polynomial time.  A key challenge in this problem is that the execution graph is no longer a tree but a DAG, and we need to design an efficient representation for this DAG. We cannot directly apply Clarkson's algorithm since the number of possible alignments of two strings (and therefore the number of candidate hyperplanes) is exponential in the size of the strings. Instead, we carefully combine the polytopes of subproblems in accordance with the dynamic program update for solving the sequence alignment (for a fixed value of parameters), and the number of candidate hyperplanes is output-sensitive inside regions where all the relevant subproblems have fixed optimal alignments. 
}
\red{[decide the amount on emphasis on this]}





{ 
\subsection{Related work}

{\it Data-driven algorithm design.} The framework for data-driven design was introduced by \cite{gupta2016pac} and is surveyed in \cite{balcan2020data}. It allows the design of algorithms with powerful formal guarantees when used repeatedly on inputs consisting of related problem instances, as opposed to designing and analyzing for worst-case instances. Data-driven algorithms have been proposed and analyzed for a wide variety of  problems, including clustering \cite{balcan2018data,balcan2019learning,balcan2024provable}, decision tree learning \cite{balcan2024learning}, computational biology \cite{balcan2021much}, mechanism design \cite{balcan2018general,sharma2024no}, game theory \cite{balcan2024subsidy}, mixed integer linear programming \cite{balcan2018learning}, semi-supervised learning \cite{balcan2021data,sharma2023efficiently}, linear regression \cite{balcan2022provably}, low-rank approximation \cite{bartlett2022generalization} and adversarially robust learning \cite{balcan2023analysis}. 
Typically these are NP-hard problems with a variety of known heuristics. Data-driven design provides techniques to select the heuristic (from a collection or parameterized family) which is best suited for data coming from some domain. Tools for data-driven design have also been studied in the online learning setting where problems arrive in an online sequence \cite{cohen2017online,balcan2018dispersion,sharma2020learning}. In this work, we focus on the statistical learning setting where the ``related'' problems are drawn from a fixed but unknown distribution.

{\it Linkage-based clustering.} \cite{balcan2017learning} introduce several single parameter families to interpolate well-known linkage procedures for linkage-based clustering, and study the sample complexity of learning the parameter. \cite{balcan2019learning} consider families with multiple parameters, and also consider the interpolation of different distance metrics. However the proposed execution-tree based algorithm for computing the constant-performance regions in the parameter space only applies to single-parameter families, and therefore the results can be used to interpolate linkage procedures or distance metrics, but not both. Our work removes this limitation and provides algorithms for general multi-parameter families.

Other approaches to learning a distance metric to improve the quality of clustering algorithms have been explored. For instance, \cite{weinberger2009distance} and \cite{kumaran2021active} consider versions of global distance metric learning, a technique to learn a linear transformation of the input space before applying a known distance metric. \cite{sun2011hierarchical} also consider a hierarchical distance metric learning, which allows the learned distance function to depend on already-merged clusters. For such distance learning paradigms, the underlying objective function is often continuous, admitting gradient descent techniques. However, in our setting, the objective function is neither convex nor continuous in $\rho$; instead, it is piecewise constant for convex pieces. As a result, instead of relying on numerical techniques for finding the optimum $\rho$, our multidimensional approach enumerates all the pieces for which the objective function is constant. Furthermore, this technique determines the \emph{exact} optimum $\rho$ rather than an approximation; one consequence of doing so is that our analysis can be extended to apply generalization guarantees from \cite{balcan2019learning} when learning the optimal $\rho$ over a family of instances.

{\it Sequence alignment.} For the problem of aligning pairs of genomic sequences, one typically obtains the best alignment using a dynamic program with some costs or weights assigned to edits of different kinds, such as insertions, substitutions, or reduplications \cite{waterman1989mathematical}. Prior work has considered learning these weights by examining the cost of alignment for the possible weight settings. \cite{balcan2021much} considers the problem for how many training examples (say pairs of DNA sequences with known ancestral or evolutionary relationship) are needed to learn the best weights for good generalization on unseen examples. More closely related to our work is the line of work which proposes algorithms which compute the partition of the space of weights which result in different optimal alignments \cite{gusfield1994parametric,gusfield199628}. In this work, we propose a new algorithm which computes this partition more efficiently.

{\it Pricing problems.} Data-driven algorithms have been proposed and studied for automated mechanism design \cite{morgenstern2015pseudo,balcan2018general,balcan2018dispersion}. Prior work on designing multi-dimensional mechanisms has focused mainly on the generalization guarantees, and include studying classes  of structured mechanisms for 
revenue maximization \cite{morgenstern2015pseudo,balcan2016sample,syrgkanis2017sample,balcan2018general}. 
Computationally efficient algorithms have been proposed for the two-part tariff pricing problem \cite{balcan2020efficient}. Our algorithms have output-sensitive complexity and are more efficient than \cite{balcan2020efficient} even for worst-case outputs. Also, our techniques are more general and apply to mechanism design besides two-part tariff pricing. \cite{balcan2023learning} consider discretization-based techniques which are not output-sensitive and known to not work in problems beyond mechanism design, for example, data-driven clustering \cite{balcan2017learning}. 

{\it On cell enumeration of hyperplane arrangements.} A finite set of $t$ hyperplanes in $\R^d$ induces a collection of faces in dimensions $\le d$, commonly referred to as an {\it arrangement}. A well-known inductive argument implies that the number of $d$-faces, or {\it cells}, in an arrangement of $t$ hyperplanes in $\R^d$ is at most $O(t^d)$. Several algorithms, ranging from topological sweep based \cite{edelsbrunner1986topologically} to incremental construction based \cite{xu2020tractability}, can enumerate all the cells (typically represented by collections of facet-inducing hyperplanes) in optimal worst-case time $O(t^d)$. The first output-sensitive algorithm proposed for this problem was the  reverse search based approach of \cite{avis1996reverse}, which runs in $O(tdR\cdot \mathrm{LP}(d,t))$ time, where $R$ denotes the output-size (number of cells in the output) and $\mathrm{LP}(d,t)$ is the time for solving a linear program in $d$ variables and $t$ constraints. This was further improved by a factor of $t$ via a more refined reverse-search \cite{sleumer1999output}, and by additive terms via a different incremental construction based approach by \cite{rada2018new}. We propose a novel approach for cell enumeration based on Clarkson's algorithm \cite{clarkson1994more} for removing redundant constraints from linear systems of inequalities, which asymptotically matches the worst case output-sensitive running times of these algorithms while  improving over their runtime in the presence of additional structure possessed by the typical piecewise structured (Definition \ref{def:ps}) loss functions encountered in data-driven algorithm design.

}

\section{Preliminaries\red{1pg}}

We introduce the terminology of data-driven algorithm design for parameterized algorithm families, and also present background and terminology related to computational complexity relevant to our results.

\subsection{Piecewise structured with linear boundaries}\label{sec:pw}

We follow the notation of \cite{balcan2020data}. For a given algorithmic problem (say clustering, or sequence alignment), let $\Pi$ denote the set of problem instances of interest. We also fix a (potentially infinite) family of algorithms $\cA$, parameterized by a set $\cP\subseteq\R^d$. Here $d$ is the number of real parameters, and will also be the `fixed' parameter in the FPT sense (Section \ref{sec:ofpt}). Let $A_\rho$ denote the algorithm in the family $\cA$ parameterized by $\rho\in\cP$.  The performance of any algorithm on any problem instance is given by a utility (or loss) function $u:\Pi\times\cP\rightarrow[0,H]$, i.e.\ $u(x,\rho)$ measures the performance on problem instance $x\in\Pi$ of algorithm $A_\rho\in\cA$. The utility of a fixed algorithm $A_\rho$ from the family is given by $u_\rho:\Pi\rightarrow[0,H]$, with $u_\rho(x)=u(x,\rho)$. We are interested in the structure of the {\it dual class} of functions $u_x:\cP\rightarrow[0,H]$, with $u_x(\rho)=u_\rho(x)$, which measure the performance of all algorithms of the family for a fixed problem instance $x\in\Pi$. We will use $\mathbb{I}\{\cdot\}$ to denote the 0-1 valued indicator function. For many parameterized algorithms, the dual class functions are piecewise structured in the following sense \cite{balcan2021much}.

\begin{definition}[Piecewise structured with linear boundaries]\label{def:ps}
    A function class $\cH \subseteq \R^\cP$ that maps a domain $\cP\subseteq\R^d$ to $\R$ is $(\cF, t)$-\emph{piecewise decomposable} for a class $\cF \subseteq \R^\cP$ of piece functions if the following holds: for every $h \in \cH$, there are $t$ linear threshold functions $g_1,\dots,g_t:\cP\rightarrow\{0,1\}$, i.e. $g_i(x)=\mathbb{I}\{a_i^Tx+b_i\}$ and a piece function $f_\mathbf{b} \in \cF$ for each bit vector $\mathbf{b} \in \{0, 1\}^t$ such that for all $y \in\cP$, $h(y) = f_{\mathbf{b}_y}(y)$ where $\mathbf{b}_y = (g_1(y),\dots,g_t(y))\in \{0, 1\}^t$.
\end{definition}

\noindent We will refer to connected subsets of the parameter space where the dual class function is a fixed piece function $f_c$ as the {\it (dual) pieces} or {\it regions}, when the dual class function is clear from the context. Past work \cite{balcan2018general} defines a similar definition for mechanism classes called {\it $(d,t)$-delineable} classes which are a special case of Definition \ref{def:ps}, where the piece function class $\cF$ consists of linear functions, and focus on sample complexity of learning, i.e.\ the number of problem instances from the problem distribution that are sufficient learn near-optimal parameters with high confidence over the draw of the sample. Our techniques apply for a larger class, where $\cF$ is a class of convex functions.  
{Past work \cite{balcan2018general,balcan2019learning,balcan2021much} provides {\it sample complexity} guarantees for problems that satisfy Definition \ref{def:ps}, on the other hand we will focus on developing fast algorithms.} We develop techniques that yield efficient algorithms for computing these pieces in the multiparameter setting, i.e.\ constant $d>1$.
{\color{black}The running times of our algorithms are polynomial in the output size (i.e, number of pieces).} For $i\in\Z^+$, we will use $[i]$ to denote the set of positive integers $\{1,\dots,i\}$.

\subsection{Output-sensitive Parameterized Complexity}\label{sec:ofpt}

Output-sensitive algorithms have a running time that depends on the size of the output for any input problem instance. Output-sensitive analysis is frequently employed in computational geometry, for example Chan's algorithm \cite{chan1996optimal} computes the convex hull of a set of 2-dimensional points in time $O(n\log R)$, where $R$ is the size of the (output) convex hull. Output-sensitive algorithms are useful if the output size is variable, and `typical' output instances are much smaller than worst case instances. Parameterized complexity extends classical complexity theory by taking into account not only the total input length $n$, but also other aspects of the problem encoded in a {\it parameter} $k$\footnote{N.B. The term ``parameter'' is overloaded. It is used to refer to the real-valued parameters in the algorithm family, as well as the parameter to the optimization problem over the algorithm family.}. The motivation is to confine the super-polynomial runtime needed for solving many natural problems strictly to the parameter. Formally, a parameterized decision problem (or language) is a subset $L \subseteq \Sigma^* \times\N$, where $\Sigma$ is a fixed alphabet, i.e. an input $(x,k)$ to a parameterized problem consists of two parts, where the second part $k$ is the parameter. A parameterized  problem $L$ is {\it fixed-parameter tractable} if there exists an algorithm which on a given input $(x,k) \in \Sigma^* \times\N$, decides whether $(x,k) \in L$ in $f(k) \cdot \text{poly}(|x|)$ time, where f is an arbitrary computable function in $k$ \cite{downey2012parameterized}. FPT is the class of all parameterized problems  which are fixed-parameter tractable. In contrast, the class XP (aka {\it slicewise-polynomial}) consists of problems for which there is an algorithm with running time $|x|^{f(k)}$. It is known that FPT $\subsetneq$ XP. We consider an extension of the above to search problems and incorporate output-sensitivity in the following definition. \red{consider restriction to just search problems;}
\red{If the output size is at most $|x|^{f(k)}$, it is easy to see that FPT $\subseteq$ OFPT $\subseteq$ XP.}

\begin{definition}[Output-polynomial Fixed Parameter Tractable] A parameterized search problem $P: \Sigma^* \times\N\rightarrow \Tilde{\Sigma}^*$ is said to be output-polynomial fixed-parameter tractable if there exists an algorithm which on a given input $(x,k) \in \Sigma^* \times\N$, computes the output $P(x,k)\in \Tilde{\Sigma}^*$ in time $f(k) \cdot \text{poly}(|x|,R)$, where $R=|P(x,k)|$ is the output size and $f$ is an arbitrary computable function in $k$.
    \label{def:ofpt}
\end{definition}

\noindent As discussed above, output-sensitvity and fixed-parameter tractability both offer more fine-grained complexity analysis than traditional (input) polynomial time complexity. Both techniques have been employed in efficient algorithmic enumeration \cite{fernau2002parameterized,nakano2004efficient} and have gathered recent interest \cite{fernau2019algorithmic}. In this work, we consider the optimization problem of selecting tunable parameters over a continuous domain $\cC\subset\R^d$ with the fixed-parameter $k=d$, the number of tunable parameters. We will design OFPT enumeration algorithms which, roughly speaking, output a finite ``search space'' 
which can be used to easily find the best parameter for the problem instance. 

\red{discuss why we care about output sensitivity: theory bounds for parallel-ish case and empirical  examples}

\red{for general hyperplane arrangement, we recover the best known output sensitive bounds for hyperplane arrangements and improves on them in several cases of interest.}

\section{Output-sensitive cell enumeration {and data-driven algorithm design}}\label{sec:algo}
\red{*****TODO}

\red{
give overview of Algorithm \ref{alg:subpolytopes}.}

\red{Define output format for the corollary..}

\red{Add applications/examples from \cite{balcan2018general}.}

\noindent Suppose $H$ is a collection of $t$ hyperplanes in $\R^d$. 
We consider the problem of enumerating the $d$-faces (henceforth {\it cells}) of the convex polyhedral regions induced by the hyperplanes. The enumerated cells will be represented as sign-patterns\footnote{For simplicity we will denote these by vectors of the form $\{0,1,-1\}^t$ where non-zero co-ordinates  correspond to signs of facet-inducing hyperplanes. Hash tables would be a practical data structure for implementation.} of facet-inducing hyperplanes. We will present an approach for enumerating these cells in OFPT time, which involves two key ingredients: (a) locality-sensitivity, and (b) output-sensitivity. By locality sensitivity, we mean that our algorithm exploits problem-specific local structure in the neighborhood of each cell to work with a smaller candidate set of hyperplanes which can potentially constitute the cell facets. This is abstracted out as a sub-routine \textsc{ComputeLocallyRelevantSeparators} which we will instantiate and analyse for each individual problem. In this section we will focus more on the output-sensitivity aspect.

To provide an output-sensitive guarantee for this enumeration problem, we compute only the \emph{non-redundant} hyperplanes which provide the boundary of each cell $c$ in the partition induced by the hyperplanes. We denote the closed polytope bounding cell $c$ by $P_c$. 
A crucial ingredient for ensuring good output-sensitive runtime of our algorithm is Clarkson's algorithm for computing non-redundant constraints in a system of linear inequalities \cite{clarkson1994more}. A constraint is {\it redundant} in a  system if removing it does not change the set of solutions. 
The key idea is to maintain a set $I$ of non-redundant constraints detected so far, and solve LPs that detect the redundancy of a remaining constraint (not in $I$) when added to $I$. If the constraint is redundant relative to $I$, it must also be redundant in the full system, otherwise we can add a (potentially different) non-redundant constraint to $I$ 
(see Appendix \ref{app:clarkson} for details). 
The following runtime guarantee is known for the algorithm.\looseness-1

\begin{restatable}[Clarkson's algorithm]{theorem}{thmclarkson} \label{thm:redundancy}
    Given a list $L$ of $k$ half-space constraints in $d$ dimensions, Clarkson's algorithm outputs the set $I \subseteq L$ of non-redundant constraints in $L$ in time $O(k \cdot \mathrm{LP}(d, |I|+1))$, where $\mathrm{LP}(v, c)$ is the time for solving an LP with $v$ variables and $c$ constraints.
\end{restatable}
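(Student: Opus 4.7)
The plan is to describe Clarkson's algorithm explicitly and then verify its correctness and runtime bound in turn. The algorithm maintains a growing subset $I \subseteq [k]$ of constraints already confirmed to be non-redundant, together with a strictly interior point $x_0$ of $\bigcap_{i \in [k]}\{x : a_i^T x \le b_i\}$ (produced by one preliminary LP). I would process $j = 1, \ldots, k$ in turn: for each $j \notin I$, solve the LP $\max\{a_j^T x : a_i^T x \le b_i,\ i \in I\}$, which has $d$ variables and at most $|I| \le s$ constraints. If the optimum is at most $b_j$, declare constraint $j$ redundant and discard it; otherwise, take the optimal vertex $x^*$ and walk along the segment from $x_0$ to $x^*$, computing the smallest $t \in (0,1]$ at which some $a_{i^*}^T(x_0 + t(x^* - x_0)) = b_{i^*}$ is hit, and add $i^*$ to $I$.

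For correctness I would verify two invariants. First, whenever constraint $j$ is declared redundant we have $\bigcap_{i \in I}\{x : a_i^T x \le b_i\} \subseteq \{x : a_j^T x \le b_j\}$; since $I \subseteq [k]$, the intersection over $I$ contains the intersection over $[k]$, so the constraint is redundant in $L$, and because $I$ only grows this implication persists at termination. Second, every $i^*$ added is non-redundant with respect to the full list $L$ (and hence with respect to any subset of $L$, including the final $I$): the hit point $y = x_0 + t(x^* - x_0)$ lies on the boundary of constraint $i^*$ and strictly inside every other constraint of $L$ (by the first-hit rule together with strict feasibility of $x_0$), so deleting constraint $i^*$ would permit a slight displacement past $y$, strictly enlarging the feasible region. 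Together these invariants yield that the final $I$ is irredundant and defines the same polytope as $L$.

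The runtime bound is then immediate. There are $k$ outer iterations, each executing one LP with $d$ variables and at most $s+1$ constraints together with an $O(kd)$ walk over $L$; provided the LP solver time $\mathrm{LP}(d, s+1)$ is at least linear in its input size (as is standard), the walk cost is absorbed, yielding the claimed $O(k \cdot \mathrm{LP}(d, s+1))$ total.

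The main delicacy I expect is the step from \emph{redundant relative to the current partial $I$} to \emph{redundant relative to the full $L$}: the weaker, cheaper LP is justified only because intersections reverse inclusion and $I$ never shrinks, which together force any discarded constraint to remain redundant at termination. A secondary issue is obtaining the strictly interior seed $x_0$ for the walk, which requires a preliminary LP of the full size and a mild nondegeneracy (or affine-hull reduction) assumption; this overhead is absorbed by the overall $O(k \cdot \mathrm{LP}(d, s+1))$ budget.
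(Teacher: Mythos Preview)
There is a genuine gap in your correctness argument. You describe processing $j = 1, \ldots, k$ in turn and, in the non-redundant branch, adding the ray-shoot hit $i^*$ to $I$ and then advancing to $j+1$. But $i^*$ need not equal $j$: the segment from $x_0$ to $x^*$ may first cross a facet other than the one you were testing. When $i^* \ne j$, index $j$ ends up neither in $I$ nor discarded, and your two invariants say nothing about such indices. Your claim that ``together these invariants yield that the final $I$ \ldots defines the same polytope as $L$'' therefore does not follow: you have shown that discarded constraints are redundant relative to the final $I$ and that elements of $I$ are irredundant, but nothing forces constraints that are neither discarded nor in $I$ to be redundant relative to $I$. (A two-constraint example already exhibits the failure if the LP happens to return an optimizer whose ray first meets the other constraint.)

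The paper's version of the algorithm (Appendix~\ref{app:clarkson}, Algorithm~\ref{alg:clarkson}) avoids this by maintaining an explicit unresolved set $J$: in the non-redundant branch it removes the ray-shoot output from $J$ (and adds it to $I$) while leaving the originally tested index in $J$ to be selected again later. Each iteration still removes exactly one element from $J$, so there are exactly $k$ iterations and $k$ LP solves, but now termination ($J = \emptyset$) guarantees that every index is either in $I$ or was discarded; with that extra fact, your two invariants do give the equivalence of polytopes. A smaller point: your LP $\max\{a_j^\top x : a_i^\top x \le b_i,\ i \in I\}$ is unbounded whenever $I$ is too small to cut off the objective direction; the paper adds the extra constraint $a_j^\top x \le b_j + 1$ to keep the LP bounded and to ensure an optimizer $x^*$ exists for the ray-shoot step.
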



\noindent Algorithm~\ref{alg:subpolytopes} uses \textsc{AugmentedClarkson}, which modifies Clarkson's algorithm with  some additional bookkeeping (details in Appendix \ref{app:clarkson}) to facilitate a search for neighboring regions in our  algorithm, while retaining the same asymptotic runtime complexity.  Effectively, our algorithm can be seen as a breadth-first search  over an implicit underlying graph (Definition \ref{def:rag}), where the neighbors (and some auxiliary useful information) are computed dynamically by \textsc{AugmentedClarkson}.


\begin{algorithm2e}

    \caption{\textsc{OutputSensitivePartitionSearch}\label{alg:subpolytopes}}
    
  \begin{algorithmic}[1]
    \STATE\textbf{Input}: {
    Set $H=\{\mathbf{a_i\cdot x}=b_i\}_{i\in[t]}$ of $t$ hyperplanes in $\R^d$, convex  \red{fix bounded aspect} polytopic domain $\cP\subseteq\R^d$ given by a set of hyperplanes $P$, \red{Mention this assumption on P in prelims}} procedure \textsc{ComputeLocallyRelevantSeparators}($\mathbf{x},H$) with $\mathbf{x}\in\R^d$.\\
\textbf{Output}: Partition cells $\tilde{C}=\{\tilde{c}^{(j)}\}$ , $c^{(j)}\in\{0,1,-1\}^t$ with $|\tilde{c}^{(j)}_i|=1$ iff $\mathbf{a_i\cdot x}=b_i$ is a bounding hyperplane for cell $j$, and $\sgn(\tilde{c}^{(j)}_i)=\sgn(\mathbf{a_i\cdot x_j}-b_i)$ for interior point $\mathbf{x_j}$, $\sign(\cdot)\in\{\pm 1\}$.\\ 
    \STATE$\mathbf{x}_1 \gets $ an arbitrary point in $\R^d$ (assumed general position w.r.t.\ $H$)\\
    \STATE Cell $c^{(1)}$ with $\sgn(c^{(1)}_i)=\sgn(\mathbf{a_i\cdot x_1}-b_i)$
    \STATE 
    $\texttt{q} \gets $ empty queue; $\texttt{q}.\texttt{enqueue}([c^{(1)},\mathbf{x}_1])$\\
    \STATE $C\gets \{\};\tilde{C}\gets\{\}$\\
    
    \WHILE{$\texttt{q}.\texttt{non\_empty}()$}
    
    \STATE{
    $[c,\mathbf{x}]\gets \texttt{q}.\texttt{dequeue}()$
    \STATE Continue to next iteration if $c \in C$\\
    \STATE$C \gets C \cup \{ c \}$\\
    \STATE$\Tilde{H}\gets\textsc{ComputeLocallyRelevantSeparators}(\mathbf{x},H)$ \red{nickname: CoLoRS or CLRS?}\label{line:clrs} \tcc*{problem-specific subset of hyperplanes in $H$ that can  be facets for cell containing $\mathbf{x}$}
    \STATE $H'\gets \{(-\sign(c_i)\mathbf{a_i\cdot x},-\sign(c_i)b_i)\mid \mathbf{a_i\cdot x}=b_i\in \Tilde{H}\}\cup P$\red{clarify convention for P}
    \STATE \label{line:clarkson}
    $(\tilde{c}, \texttt{neighbors}) \gets \textsc{AugmentedClarkson}(
    \mathbf{x},H',c)$\tcc*{Algorithm \ref{alg:clarkson}}
    \STATE$\tilde{C}\gets \tilde{C} \cup \{ \tilde{c} \} $\label{line:insert-c}\\
    \STATE$\texttt{q}.\texttt{enqueue}([c',\mathbf{x}'])$ for each $[c',\mathbf{x}'] \in \texttt{neighbors}$ \label{line:insert-q}
    }\ENDWHILE
    
    \STATE\textbf{return} $\tilde{C}$
    
\end{algorithmic}
\end{algorithm2e}
\red{make terminology more precise using Sleumer paper}
\begin{definition}[Cell adjacency graph]\label{def:rag}\red{define for polytopic complex not arrangement}
    Define the \emph{cell adjacency graph} for a set $H$ of hyperplanes in $\R^d$, written $G_H = (V_H, E_H)$, as:
    \begin{itemize}[leftmargin=*,topsep=0pt]\itemsep-2pt
        \item There is a vertex $v\in V_H$ corresponding to each cell in the partition $\Tilde{C}$ of $\R^d$ induced by the hyperplanes;
        \item For $v,v' \in V_H$, add the edge $\{ v, v' \}$ to $E_H$ if the corresponding polytopes intersect, i.e. $P_v \cap P_{v'} \neq \emptyset$. 
    \end{itemize}
    This generalizes to a subdivision (Definition \ref{def:subdivision}) of a polytope in $\R^d$. 
\end{definition}

\red{Alg 1 is effectively a search over the cell adjacency graph with two key ingredients:... remark that \textsc{ComputeLocallyRelevantSeparators} encodes problem-specific knowledge as we will illustrate in our applications}

\noindent This allows us to state the following guarantee about the runtime of Algorithm \ref{alg:subpolytopes}. In the notation of Table \ref{tab:comparison}, we have $R=|V_H|$ and $|E_H|=O(R^2)$. In the special case $d=2$, $G_H$ is planar and $|E_H|=O(R)$.  


\red{add figure to depict locality sensitivity? e.g. if buyer buys q1 items in a cell then q2-vs-q3 hyperplane is not locally relevant... similarly for clustering?}
\begin{theorem} \label{thm:enumeration}
   Let $H$ be a set of $t$ hyperplanes in $\R^d$.  Suppose that $|E_H| = E$ and $|V_H| = V$ in the cell adjacency graph $G_H=(V_H,E_H)$ of $H$; then if the domain $\cP$ is bounded by $|P|\le t$ hyperplanes, \red{general |P|} Algorithm \ref{alg:subpolytopes} computes the \red{signed-cell representation of the}set $V_H$ in time $\tilde{O}(dE+VT_{\text{CLRS}}+t_{\text{LRS}}  \cdot \sum_{c\in V_H}\mathrm{LP}(d, |I_c|+1))$, where $\rm{LP}(r,s)$ denotes the time to solve an LP in $r$ variables and $s$ constraints, $I_c$ denotes the number of facets for cell $c\in V_H$, $T_{\text{CLRS}}$ denotes the running time of \textsc{ComputeLocallyRelevantSeparators} and $t_{\text{LRS}}$ denotes an upper bound on the number of locally relevant hyperplanes in Line \ref{line:clrs}. 
\end{theorem}

\begin{proof}
Algorithm \ref{alg:subpolytopes} maintains a set of {\it visited} or {\it explored} cells in $C$, and their bounding hyperplanes (corresponding to cell facets) in $\Tilde{C}$. It also maintains a queue of cells, such that each cell in the queue has been discovered in Line \ref{line:clarkson} as a neighbor of some visited cell in $C$, but is yet to be explored itself. The algorithm detects if a cell has been visited before by using its sign pattern on the hyperplanes in $H$. This can be done in $O(d\log t)$ time, since $|C|=O(t^d)$ using a well-known combinatorial fact (e.g. \cite{buck1943partition}). For a new cell $c$, we run \textsc{AugmentedClarkson} to compute its bounding hyperplanes $I_c$ as well as sign patterns for neighbors in time $O(t_{\text{LRS}} \cdot \mathrm{LP}(d, |I_c|+1))$ by Theorem \ref{thm:redundancy}. The computed neighbors are added to the queue in Line \ref{line:insert-q}. A cell $c$ gets added to the queue at this step up to $|I_c|$ times, but is not explored if already visited. Thus we run up to $V$ iterations of \textsc{AugmentedClarkson} and up to $1+\sum_{c\in V_H}|I_c|=2E+1$ queue insertions/removals. Using efficient union-find data-structures, the set union and membership queries (for $C$ and $\Tilde{C}$) can be done in $\Tilde{O}(1)$ time per iteration of the loop.  So total time over the $V$ cell explorations and no more than $2E+1$ iterations of the \textbf{while} loop is $\Tilde{O}(dE)+O(t_{\text{LRS}} \cdot \sum_{c\in V_H}\mathrm{LP}(d, |I_c|+1))+O(VT_{\text{CLRS}})$.
\end{proof}

\noindent We will demonstrate several data-driven parameter selection problems which satisfy Definition \ref{def:ps} and for which the \textsc{ComputeLocallyRelevantSeparators} can be implemented in FPT (often poly$(d,t)$) time for fixed-parameter $d$. The above result then implies that the problem of enumerating the induced polytopic cells is in OFPT (Definition \ref{def:ofpt}).

\begin{corollary}\label{cor:clrs} If \textsc{ComputeLocallyRelevantSeparators} runs in FPT time for fixed-parameter $d$, Algorithm \ref{alg:subpolytopes} enumerates the cells 
induced by a set $H$ of $t$ hyperplanes in $\R^d$ in OFPT time.
\end{corollary}

\begin{proof} Several approaches for solving a (low-dimensional) LP in $r$ variables and $s$ constraints in deterministic  $r^{O(r)}s$ time are known \cite{chazelle1996linear,chan2018improved} (the well-known randomized algorithm of \cite{seidel1991small} runs in expected $r^{O(r)}s$ time). 
    Noting $t_{\text{LRS}}\le T_{\text{CLRS}}$, the runtime bound in Theorem \ref{thm:enumeration} simplifies to $\Tilde{O}(d^{O(d)}(E+V)T_{\text{CLRS}})$, and the result follows by the assumption on $T_{\text{CLRS}}$ and since $E\le V^2$.
\end{proof}

\noindent We illustrate the significance of output-sensitivity and locality-sensitivity in our results with some simple examples.

\begin{example} The worst-case size of $V_H$ is $O(t^d)$ and standard (output-insensitive) enumeration algorithms for computing $V_H$ (e.g. \cite{edelsbrunner1986topologically,xu2020tractability}) take $O(t^d)$ time even when the output size may be much smaller. For example, if $H$ is a collection of parallel planes in $\R^3$, the running time of these approaches is $O(t^3)$. Even a naive implementation of \textsc{ComputeLocallyRelevantSeparators} which always outputs the complete set $H$ gives a better runtime of $O(t^2)$. \red{The output-sensitive approach of \cite{sleumer1999output} has running time $O(tV)=O(t^2)$.} By employing a straightforward algorithm which binary searches the closest hyperplanes to $\mathbf{x}$ (in a pre-processed $H$) as \textsc{ComputeLocallyRelevantSeparators} we  have $t_{\text{LRS}}=O(1)$ and $T_{\text{CLRS}}=\Tilde{O}(1)$, and Algorithm \ref{alg:subpolytopes} attains a \red{optimal} running time of $\tilde{O}(t)$. \red{think of better example, this seems to need pre-processing/sorting after which the algo 1 is redundant anyway.;; mention we use algo 1 inside/after a preprocessing step.} Analogously, if $H$ is a collection of $t$ hyperplanes in $\R^d$ with $1\le k<d$ distinct unit normal vectors, then output-sensitivity improves the runtime from $O(t^d)$ to $O(t^{k+1})$, and locality-sensitivity can be used to further improve it to $\tilde{O}(t^k)$.
\end{example}

\begin{example}
    If $H$ is a collection of $t$ hyperplanes in $\R^d, d\ge 2$ which all intersect in a $d-2$ hyperplane (e.g. collinear planes in $\R^3$), we can again obtain worst-case $O(t^2)$ runtime due to output-sensitivity of Algorithm \ref{alg:subpolytopes} and further improvement to $\Tilde{O}(t)$ runtime by exploiting locality sensitivity similar to the above example.
\end{example}

\subsection{ERM in the statistical learning setting}\label{sec:erm}

\noindent We now use Algorithm \ref{alg:subpolytopes} to compute the sample mimimum (aka ERM, Empirical Risk Minimization) for the $(\cF,t)$ piecewise-structured dual losses with linear boundaries (Definition \ref{def:ps}) over a problem sample $S\in\Pi^m$, provided piece functions in $\cF$ can be efficiently optimized over a polytope (typically the piece functions are constant or linear functions in our examples).
Formally, we define search-ERM for a given parameterized algorithm family $\cA$ with parameter space $\cP$ and the dual class utility function being $(\cF,t)$-piecewise decomposable (Definition \ref{def:ps}) as follows: given a set of $m$ problem instances $S\in\Pi^m$, compute the pieces, i.e.\ a partition of the parameter space into connected subsets such that the utility function is a fixed piece function in $\cF$ over each subset for each of the $m$ problem instances. 
The following result gives a recipe for efficiently solving the search-ERM problem provided we can efficiently compute the dual function pieces in individual problem instances, and the the number of pieces in the sum dual class function over the sample $S$ is not too large.
The key idea is to apply Algorithm \ref{alg:subpolytopes} for each problem instance, and once again for the search-ERM problem. In the notation of Table \ref{tab:comparison}, we have $R_{\Sigma}=|V_S|$, the number of vertices in the cell adjacency graph corresponding to the polytopic pieces in the sum utility function $\sum_{i=1}^m u_i$. $|E_S|=O(R_{\Sigma})$ when $d=2$ and $O(R_{\Sigma}^2)$ for  general $d$.

\begin{theorem} \label{thm:erm}
Let $\tilde{C}_i$ denote the cells partitioning the polytopic parameter space $\cP\subset\R^d$ corresponding to pieces of the dual class utility function $u_i$ on a single problem instance $x_i\in\Pi$, from a collection $s=\{x_1,\dots,x_m\}$ of $m$ problem instances. Let $(V_S,E_S)$ be the cell adjacency graph corresponding to the polytopic pieces in the sum utility function $\sum_{i=1}^m u_i$. Then there is an algorithm for computing $V_S$ given the cells $\tilde{C}_i$ in time $\tilde{O}((d+m)|E_S|+mt_{\text{LRS}}  \cdot \sum_{c\in V_S}\mathrm{LP}(d, |I_c|+1))$, where $I_c$ denotes the number of facets for cell $c\in V_S$, and $t_{\text{LRS}}$ is the number of locally relevant hyperplanes in a single instance.
\end{theorem}


\begin{proof} 
We will apply Algorithm \ref{alg:subpolytopes} 
and compute the locally relevant hyperplanes by simply taking a union of the facet-inducing hyperplanes at any point $\mathbf{x}$, across the problem instances.

Let $G^{(i)}=(V^{(i)},E^{(i)})$ denote the cell adjacency graph for the cells in $\tilde{C_i}$.
We apply Algorithm \ref{alg:subpolytopes} implicitly over  $H=\cup_i\overline{H}^{(i)}$ where $\overline{H}^{(i)}$ is the collection of facet-inducing hyperplanes in $\tilde{C}_i$. 
To implement \textsc{ComputeLocallyRelevantSeparators}$(\mathbf{x},H)$ we simply search the computed partition cell $\tilde{C}_i$ for the cell containing $\mathbf{x}$ in each problem instance $x_i$ in time $O(\sum_i|E^{(i)}|)=O(m|E_S|)$, obtain the set of corresponding facet-inducing hyperplanes $H_{\mathbf{x}}^{(i)}$, and output $\tilde{H}_{\mathbf{x}}=\cup_iH_{\mathbf{x}}^{(i)}$ in time $O(mt_{\text{LRS}})$. 
The former step only needs to be computed once, as the partition cells for subsequent points can be tracked in $O(m)$ time.
Theorem \ref{thm:enumeration} now gives a running time of $\tilde{O}(d|E_S|+m|E_S|+mt_{\text{LRS}}|V_S|+mt_{\text{LRS}}  \cdot \sum_{c\in V_S}\mathrm{LP}(d, |I_c|+1))$. 
\end{proof}

\red{The $m|V_S||E_S|$ term can probably be reduced to $m|E_S|$ as the points $\x$ would only need to be located once, and subsequent cells will be adjacent to previous ones.}

\noindent An important consequence of the above result is an efficient output-sensitive algorithm for data-driven algorithm design when the dual class utility function is $(\cF,t)$-piecewise decomposable. In the following sections, we will instantiate the above results for various data-driven parameter selection problems where the dual class functions are piecewise-structured with linear boundaries (Definition \ref{def:ps}). Prior work \cite{balcan2018general,balcan2019learning,balcan2021much} has shown polynomial bounds on the sample complexity of learning near-optimal parameters via the ERM algorithm for these problems in the statistical learning setting, i.e.\ the problem instances are drawn from a fixed unknown distribution. In other words, ERM over polynomially large sample size $m$ is sufficient for learning good parameters. In particular, we will design and analyze running time for problem-specific algorithms for computing locally relevant hyperplanes.  Given Theorem \ref{thm:erm}, it will be sufficient to give an algorithm for computing the pieces of the dual class function  for a single problem instance. 

\red{******** Add theorems for implementing ERM and online learning with FOCS paper.. perhaps just a remark?}

\red{ Elaborate more on  \textsc{ComputeLocallyRelevantSeparators}... 
$T_{\text{CLRS}}$ about $\Tilde{O}(d)$ and $t_{\text{LRS}}$ assuming $\Tilde{O}(dt)$ preprocessing (make weight vectors unit vectors and identify duplicates)}

\subsection{Online learning}
In the online learning setting, one receives a sequence of problem instances $x_1,\dots,x_T\in\Pi$, and needs to select an algorithm in each round.
Recent research has shown that no-regret learning in online algorithm design is possible under additional smoothness assumptions on the online adversary \cite{balcan2018dispersion,durvasula2023smoothed}, but has noted computationally efficient implementation of the online learner as an open direction. Efficient algorithms are known for the case where the utility is a piecewise constant function of a single parameter, our tools allow output-sensitive enumeration for the $(\cF,t)$ piecewise-structured dual functions with linear boundaries (Definition \ref{def:ps}) even in the multiparameter setting.

Suppose the dual class functions for the online sequence of problem instances are given by $u_1(\rho),\dots,u_T(\rho)$. A key ingredient in implementing an online learner is to compute the partition of the parameter space where the partial sum function $\sum_{i=1}^tu_i(\rho)$ is well-behaved. This allows us to use the Exponential Forecaster algorithm for the $(\cF,t)$ piecewise-structured dual functions with linear boundaries (Algorithm 2 of \cite{balcan2018dispersion}).

\begin{theorem} \label{thm:online}
Let $\tilde{C}_{u}$ denote the cells partitioning the polytopic parameter space $\cP\subset\R^d$ corresponding to pieces of the dual class utility function $u$, and $u_1,\dots,u_T$ denote an online sequence of dual class functions corresponding to problem instances $\{x_1,\dots,x_m\}$. Let $(V_t,E_t)$ be the cell adjacency graph corresponding to the polytopic pieces in the partial sum utility function $U_t:=\sum_{i=1}^t u_i$ for any $t\in[T]$. Then there is an algorithm for computing $V_t$ given the cells $\tilde{C}_{U_{t-1}}$  in time $\tilde{O}((d+|V_t|)|E_t|+T_{\text{CLRS}}|V_t|+tt_{\text{LRS}}  \cdot \sum_{c\in V_t}\mathrm{LP}(d, |I_c|+1))$, where $I_c$ denotes the number of facets for cell $c\in V_t$, and $t_{\text{LRS}}$ is the number of locally relevant hyperplanes in a single instance.
\end{theorem}

\begin{proof} 
We will apply Algorithm \ref{alg:subpolytopes} 
and compute the locally relevant hyperplanes by simply taking a union of the facet-inducing hyperplanes of the cell containing $\mathbf{x}$ in $\tilde{C}_{U_{t-1}}$, and the locally relevant hyperplanes for the new problem instance $x_t$.

Let $G^{(i)}=(V^{(i)},E^{(i)})$ denote the cell adjacency graph for the cells in $\tilde{C_i}$.
We apply Algorithm \ref{alg:subpolytopes} implicitly over  $H=\cup_i\overline{H}^{(i)}$ where $\overline{H}^{(i)}$ is the collection of facet-inducing hyperplanes in $\tilde{C}_i$. 
To implement \textsc{ComputeLocallyRelevantSeparators}$(\mathbf{x},H)$ we simply search the computed partition cell $\tilde{C}_{U_{t-1}}$ for the cell containing $\mathbf{x}$  in time $O(|E_{t-1}|)$, obtain the set of corresponding facet-inducing hyperplanes $H_{\mathbf{x}}^{(t-1)}$, compute the locally relevant separators  $H_{\mathbf{x}}^{t}$ for the new problem instance $x_t$ and output $\tilde{H}_{\mathbf{x}}=H_{\mathbf{x}}^{(t-1)}\cup H_{\mathbf{x}}^{t}$ in time $O(|E_{t-1}|+T_{\text{CLRS}}+tt_{\text{LRS}})$. 
Theorem \ref{thm:enumeration} now gives a running time of $\tilde{O}(d|E_t|+(|E_{t-1}|+T_{\text{CLRS}}+tt_{\text{LRS}})|V_t|+tt_{\text{LRS}}  \cdot \sum_{c\in V_t}\mathrm{LP}(d, |I_c|+1))$. 
\end{proof}

\noindent Above result gives a way to bound the per-iteration complexity of the online learning algorithm of \cite{balcan2018dispersion} in an output-sensitive way. In particular, if the utility functions $u_1,\dots,u_T$ are concave in $\rho$, the remaining per-iteration runtime overhead of sampling from $\tilde{C}_{U_t}$ is  $O(|\tilde{C}_{U_t}|\text{poly}(d,T))$ using approximate logconcave sampling and integration algorithms of \cite{lovasz2006fast}. This online learner enjoys no-regret guarantees under a mild assumption on the sequence $u_1,\dots,u_T$, called {\it dispersion}. Roughly speaking, a sequence of $(\cF,t)$ piecewise-structured utility functions $u_1,\dots,u_T$ is said to be dispersed if  the number of functions for which the piece boundaries intersect any small ball in the parameter space is bounded (see \cite{balcan2018dispersion,balcan2020semi} for formal definition). 

\begin{remark}
    A related recent work \cite{balcan2023learning} studies an alternative discretization-based algorithm for online learning of pricing problems, including the two-part tariff pricing studied here. The proposed approach is however not output-sensitive and only known to work for pricing problems, but it obtains no-regret even without the dispersion assumption.
\end{remark}




\red{extension to polynomial boundaries}

\section{Profit maximization in pricing problems\red{1pg}\label{sec:auction}}
Prior work \cite{morgenstern2015pseudo,balcan2018general} on data-driven mechanism design has shown that the profit as a function of the prices (parameters) is $(\cF,t)$-decomposable with $\cF$ the set of linear functions on $\R^d$ for a large number of mechanism classes (named $(d,t)$-delineable mechanisms in \cite{balcan2018general}). We will instantiate our approach for \red{two} multi-item pricing problems which are $(\cF,t)$-decomposable and analyse the running times. In contrast, recent work \cite{balcan2023learning} employs
data-independent discretization for computationally efficient data-driven algorithm design for mechanism design problems even in the worst-case. This discretization based approach is not output-sensitive and is known to not work well for other applications like data-driven clustering.

\subsection{Two-part tariff pricing}

In the {\it two-part tariff} problem \cite{lewis1941two,oi1971disneyland}, the seller with multiple identical items charges a fixed price, as well as a price per item purchased. 
For example, cab meters often charge a base cost for any trip and an additional cost per mile traveled. Subscription or membership programs often require an upfront joining fee plus a membership fee per renewal period, or per service usage. Often there is a menu or tier of prices, i.e.\ a company may design multiple subscription levels (say basic, silver, gold, platinum), each more expensive than the previous but providing a cheaper per-unit price. Given access to market data (i.e.\ profits for different pricing schemes for typical buyers) we would like to learn how to set the base and per-item prices to maximize the profit.  We define these settings formally as follows. 

\red{add more context for TPT problem: citations/examples}

\noindent\textit{Two-part tariffs.} The seller has $K$ identical units of an item. Suppose the buyers have valuation functions $v_i:\{1,\dots,K\}\rightarrow\R_{\ge0}$ where $i\in\{1,\dots,m\}$ denotes the buyer, and the value is assumed to be zero if no items are bought. Buyer $i$ will purchase $q$ quantities of the item that maximizes their utility $u_i(q)=v_i(q)-(p_1+p_2q)$, buying zero units if the utility is negative for each $q>0$. The revenue, which we want to maximize as the seller, is zero if no item is bought, and $p_1+p_2q$ if $q>0$ items are bought. The algorithmic parameter we want to select is the price $\rho=\langle p_1,p_2\rangle$, and the problem instances are specified by the valuations $v_i$. We also consider a generalization of the above scheme: instead of just a single two-part tariff (TPT), suppose the seller provides a menu of TPTs $(p_1^1,p_2^1),\dots,(p_1^\ell,p_2^\ell)$ of length $\ell$. Buyer $i$ selects a tariff $(p_1^j,p_2^j)$ from the menu as well as the item quantity $q$ to maximize their utility $u_i^j(q)=v_i(q)-(p_1^j+p_2^jq)$. This problem has $2\ell$ parameters, $\rho=(p_1^1,p_2^1,\dots,p_1^\ell,p_2^\ell)$, and the single two-part tariff setting corresponds to $\ell=1$.  

The dual class functions in this case are known to be piecewise linear with linear boundaries (\cite{balcan2018general}, restated as Lemma \ref{lem:menu-tariff-dual-piecewise} in Appendix \ref{app:auction}). We will now implement Algorithm \ref{alg:subpolytopes} for this problem by specifying how to compute the locally relevant hyperplanes. For any price vector $\mathbf{x}=\rho$, say the buyers buy quantities $(q_1,\dots,q_m)\in\{0,\dots,K\}^m$ according to tariffs $(j_1,\dots,j_m)\in[\ell]^m$. For a fixed price vector this can be done in time $O(mK\ell)$ by computing $\argmax_{q,j} u_i^j(q)$ for each buyer at that price for each two-part tariff in the menu. Then for each buyer we have $K(\ell-1)$ potential alternative quantities and tariffs given by hyperplanes $u_i^{j_i}(q_i)\ge u_i^{j'}(q'), q'\ne q_i, j'\ne j_i$, for a total of $t_{\text{LRS}}=mK(\ell-1)$ locally relevant hyperplanes. Thus $T_{\text{CLRS}}=O(mK\ell)$ for the above approach, and Theorem \ref{thm:enumeration} implies the following runtime bound.


\begin{theorem}\label{thm:auction-menu}
There exists an implementation\red{add formal implementation to appendix} of \textsc{ComputeLocallyRelevantSeparators} in Algorithm \ref{alg:subpolytopes}, which given valuation function $v(\cdot)$ for a single problem instance, computes all the $R$ pieces of the dual class function $u_{v}(\cdot)$ in time $\tilde{O}(R^2(2\ell)^{\ell+2}K)$, where the menu length is $\ell$, and there are $K$ units of the good.
\end{theorem}
\red{add proof to appendix}

\noindent Theorem \ref{thm:auction-menu} together with Theorem \ref{thm:erm} implies an implementation of the search-ERM problem over $m$ buyers (with valuation functions $v_i(\cdot)$ for $i\in[m]$) in time $O(R_{\Sigma}^2(2\ell)^{\ell+2}mK)$, where $R_{\Sigma}$ denotes the number of pieces in the total dual class function $U_{\langle v_1,\dots, v_m\rangle}(\cdot)=\sum_iu_{v_i}(\cdot)$ (formally, Corollary \ref{cor:auction-menu} in Appendix \ref{app:auction}). In contrast, prior work for this problem has only obtained an XP runtime of $(mK)^{O(\ell)}$ \cite{balcan2020efficient}. For the special case $\ell=1$, we also provide an algorithm (Algorithm \ref{alg:cfar} in Appendix \ref{app:tpt-opt}) that uses additional structure of the polytopes 
and employs a computational geometry algorithm due to \cite{chazelle1992optimal} to compute the pieces in optimal $O(mK\log(mK)+R_{\Sigma})$ time, improving over the previously best known runtime of $O(m^3K^3)$ due to \cite{balcan2020efficient} even for worst-case $R_{\Sigma}$. The worst-case improvement follows from a bound of $R_{\Sigma}=O(m^2K)$ on the number of pieces, which we establish in Theorem \ref{lem:tariff-pieces}. We further show that our running time for $\ell=1$ is asymptotically optimal under the algebraic decision-tree model of computation, by a linear time reduction from the element uniqueness problem (Theorem \ref{thm:tpt-lb}).  

\red{add another example from ec'18: emphasize m is constant or another fixed parameter in the following}

\subsection{Item-Pricing with anonymous prices}

We will consider a market with a single seller, interested in designing a mechanism to sell $m$ \red{heterogeneous}distinct items to $n$ buyers. We represent a {\it bundle} of items by a quantity vector $q \in \Z^m_{\ge0}$, such that the number of units of the $i$th item in the bundle denoted by $q$ is given by its $i$th component $q[i]$. In particular, the bundle consisting of a single copy of item $i$ is denoted by the standard basis vector $e_i$, where $e_i[j] = \bbI\{i=j\}$, where $\bbI\{\cdot\}$ is the 0-1 valued indicator function.  Each buyer $j \in [n]$ has a valuation function $v_j:\Z^m_{\ge0}\rightarrow \R_{\ge 0}$ over bundles of items. 
We denote an allocation as $Q = (q_1,\dots,q_n)$ where $q_j$ is the bundle of items that buyer $j$ receives under allocation $Q$. 
Under {\it anonymous} prices, the seller sets a price $p_i$ per item $i$.  There is some fixed but arbitrary ordering on the buyers such that the first buyer in the ordering arrives first and buys the bundle of items that maximizes their utility, then the next buyer in the ordering arrives, and so on. For a given buyer $j$ and bundle pair $q_1, q_2 \in \{0,1\}^m$, buyer $j$ will prefer bundle $q_1$ over bundle $q_2$ so long as $u_j(q_1)>u_j(q_2)$ where $u_j(q)=v_j(q) -\sum_{i:q[i]=1}p_i$. Therefore, for a fixed set of buyer values and for each buyer $j$, their preference ordering over the bundles is completely determined by the ${2^m\choose 2}$ hyperplanes. The dual class functions are known to be piecewise linear with linear boundaries \cite{balcan2018general}.

To implement Algorithm \ref{alg:subpolytopes} for this problem we specify how to compute the locally relevant hyperplanes. For any price vector $\mathbf{x}=(p_1,\dots,p_m)$, say the buyers buy bundles $(q_1,\dots,q_n)$. For a fixed price vector this can be done in time $O(n{2^m})$ by computing $\argmax_{q\subseteq I_j} u_j(q)$ for each buyer at that price vector, where $I_j$ denotes the remaining items after allocations to buyers $1,\dots,j-1$ at that price. Then for each buyer we have at most $2^m-1$ potential alternative bundles given by hyperplanes $u_j(q)\ge u_j(q'), q'\ne q_i$, for a total of $t_{\text{LRS}}\le n(2^m-1)$ locally relevant hyperplanes. Thus $T_{\text{CLRS}}=O(n{2^m})$ for the above approach, and Theorem \ref{thm:enumeration} implies the following runtime bound.

\begin{theorem}\label{thm:auction-item}
There exists an implementation\red{add formal implementation to appendix} of \textsc{ComputeLocallyRelevantSeparators} in Algorithm \ref{alg:subpolytopes}, which given valuation functions $v_i(\cdot)$ for $i\in[n]$, computes all the $R$ pieces of the  dual class function in time $\tilde{O}(R^2(2m)^{m}n)$, where there are $m$ items and $n$  buyers.
\end{theorem}
\begin{proof}
    In the terminology of Theorem \ref{thm:enumeration}, we have $d=m$, $E\le R^2$, $V=R$, $T_{\text{CLRS}}=O(n{2^m})$, $t_{\text{LRS}}=n(2^m-1)$. By \cite{chan2018improved}, we have $\sum_{c\in V_H}\mathrm{LP}(d, |I_c|+1)\le O(Ed^d)\le O(R^2m^m)$. Thus, Theorem \ref{thm:enumeration} implies a runtime bound on Algorithm \ref{alg:subpolytopes} of $\tilde{O}(dE+VT_{\text{CLRS}}+t_{\text{LRS}}  \cdot \sum_{c\in V_H}\mathrm{LP}(d, |I_c|+1))=\tilde{O}(R^2(2m)^mn)$.
\end{proof}

\noindent Our approach yields an efficient algorithm when the number of items $m$ and the number of dual class function pieces $R$ are small. Prior work on item-pricing with anonymous prices has only focussed on sample complexity of the data-driven design problem \cite{balcan2018general}.

\red{add proof to appendix}

\section{Linkage-based clustering\red{2pg}}
\red{*****define execution tree}

\label{sec:clustering}

\noindent Clustering data into groups of similar points is a fundamental tool in data analysis and unsupervised machine learning. A variety of clustering algorithms have been introduced and studied but it is not clear which algorithms will work best on specific tasks. Also the quality of clustering is heavily dependent on the distance metric used to compare data points. Interpolating multiple metrics and clustering heuristics can result in significantly better clustering \cite{balcan2019learning}.

\paragraph{Problem setup.} Let $\cX$ be the data domain. A clustering instance from the domain consists of a point set $S = \{x_1, \dots, x_n\} \subseteq \cX$ and an (unknown) target clustering $\cC = (C_1, \dots , C_k)$, where the sets $C_1, \dots , C_k$ partition $S$ into $k$ clusters. Linkage-based clustering algorithms output a hierarchical clustering of the input data, represented by a cluster tree. We measure the agreement of a cluster tree $T$ with the target clustering $\cC$ in terms of the Hamming distance between $\cC$ and the closest pruning of $T$ that partitions it into $k$ clusters (i.e., $k$ disjoint subtrees that contain all the leaves of $T$). More formally, the loss $\ell(T, \cC) = \min_{P_1,\dots,P_k} \min_{\sigma\in S_n} \frac{1}{|S|} \sum_{i=1}^k |C_i \setminus P_{\sigma_i} |$, where the first minimum is over all prunings $P_1, \dots , P_k$ of the cluster tree $T$ into $k$ subtrees, and the second minimum is over all permutations of the $k$ cluster indices.

A merge function $D$ defines the distance between a pair of clusters $C_i, C_j \subseteq \cX$ in terms of the pairwise point distances given by a metric $d$. Cluster pairs with smallest values of the merge function are merged first. For example, single linkage uses the merge function $D_{\single} (C_i, C_j; d) = \min_{a\in C_i,b\in C_j} d(a, b)$ and complete linkage uses $D_{\complete} (C_i, C_j; d) = \max_{a\in C_i,b\in C_j} d(a, b)$. Instead of using extreme points to measure the distance between pairs of clusters, one may also use more central points, e.g. we define {\it median linkage} as $D_{\median}(C_i, C_j; d) = \texttt{median}(\{d(a, b)\mid{a\in C_i,b\in C_j}\}) $, where $\texttt{median}(\cdot)$ is the usual statistical median of an ordered set $S\subset\R$\footnote{$\texttt{median}(S)$ is the smallest element of $S$ such that $S$ has at most half its elements less than $\texttt{median}(S)$ and at most half its elements more than $\texttt{median}(S)$. For comparison, the more well-known average linkage is $D_{\average}(C_i, C_j; d) = \texttt{mean}(\{d(a, b)\mid{a\in C_i,b\in C_j}\})$. We may also use geometric medians of clusters. For example, we can define {\it mediod linkage} as $D_{\medoid}(C_i, C_j; d) = d(\argmin_{a\in C_i}\sum_{a'\in C_i}d(a,a'), \argmin_{b\in C_j}\sum_{b'\in C_j}d(b,b') )$.}. Appendix \ref{appendix:linkage-examples} provides synthetic clustering instances where one of single, complete or median linkage leads to significantly better clustering than the other two, illustrating the need to learn an interpolated procedure. Single, median and complete linkage are {\it 2-point-based} (Definition \ref{def:2pb} in Appendix \ref{app:clustering}), i.e.\ the merge function $D(A,B;d)$ only depends on the distance $d(a, b)$ for two points $(a, b) \in A\times B$. 

\paragraph{Parameterized algorithm families.} Let $\Delta=\{D_1,\dots,D_l\}$ denote a finite family of merge functions (measure distances between clusters) and $\delta=\{d_1,\dots,d_m\}$ be a finite collection of distance metrics (measure distances between points). We define a parameterized family of linkage-based clustering algorithms that allows us to learn both the merge function and the distance metric. It is given by the interpolated merge function
$D_{\alpha}^\Delta(A, B; \delta) = \sum_{D_i\in\Delta,d_j\in\delta} \alpha_{i,j}D_i(A, B; d_j)$, where $\alpha=\{\alpha_{i,j}\mid i\in[l],j\in[m], \alpha_{i,j} \geq 0\}$. 
In order to ensure linear boundary functions for the dual class function, our interpolated merge function $D_\alpha^\Delta(A, B; \delta)$ takes all pairs of distance metrics and linkage procedures. 
Due to invariance under constant multiplicative factors, we can set $\sum_{i, j} \alpha_{i, j} = 1$ and obtain a set of parameters  which allows $\alpha$ to be parameterized by $d=lm-1$ values\footnote{In contrast, the parametric family in \cite{balcan2019learning} has $l+m-2$ parameters but it does not satisfy Definition \ref{def:ps}.\looseness-1}. Define the parameter space $\cP = \blacktriangle^d = \left\{ \rho \in \left(\R^{\geq 0}\right)^d \mid \sum_i \rho_i \leq 1 \right\}$; for any $\rho \in \cP$ we get $\alpha^{(\rho)} \in \R^{d+1}$ as $\alpha^{(\rho)}_i = \rho_i$ for $i \in [d]$, $\alpha^{(\rho)}_{d+1}=1 - \sum_{i \in [d]} \rho_i$. \looseness-1
We focus on learning the optimal $\rho \in \cP$ for a single instance $(S, \cY)$. With slight abuse of notation we will sometimes use $D_\rho^\Delta(A, B; \delta)$ to denote the interpolated merge function $D_{\alpha^{(\rho)}}^\Delta(A, B; \delta)$. As a special case we have the family $D^\Delta_\rho(A, B; d_0)$ that interpolates merge functions (from set $\Delta$) for different linkage procedures but the same distance metric $d_0$. Another interesting family only interpolates the distance metrics, i.e. use a distance metric $d_\rho(a,b)=\sum_{d_j\in\delta}\alpha^{(\rho)}_jd_j(a,b)$ and use a fixed linkage procedure. We denote this by $D^1_\rho(A, B; \delta)$.
\vspace*{2mm}

\begin{figure}[!t]
\centering

\includegraphics[scale=0.25]{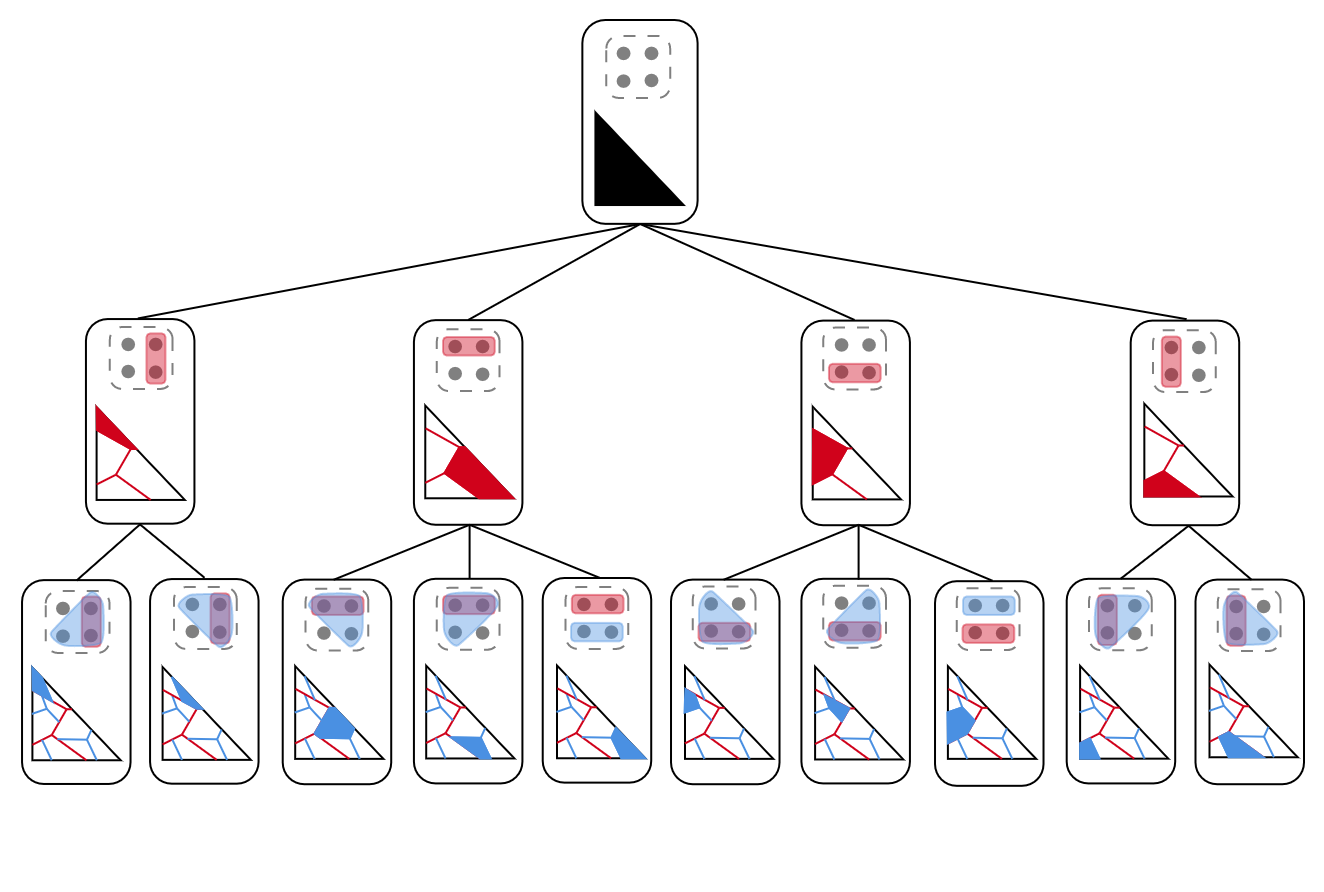}
    \caption{The first three levels of an example execution tree of a clustering instance on four points, with a two-parameter algorithm ($\cP = \blacktriangle^2$). Successive partitions $\cP_0$, $\cP_1$, $\cP_2$ are shown at merge levels $0$, $1$, and $2$, respectively, and the nested shapes show cluster merges.
    }
    \label{figure: refinement}
\end{figure}

\paragraph{Overview of techniques.} First, we show that for a fixed clustering instance $S \in \Pi$, the dual class function $u_S(\cdot)$ is piecewise constant with a bounded number of pieces and linear boundaries. We only state the result for the interpolation of merge functions $D^\Delta_\rho(A, B; d_0)$ below for $\Delta$ a set of 2-point-based merge functions, and defer analogous results for the general family $D_{\rho}^\Delta(A, B; \delta)$ and distance metric interpolation $D^1_\rho(A, B; \delta)$ to Appendix \ref{app:clustering}.


\begin{lemma}\label{lem:clustering-piecewise-2point}
Let $\Delta$ be a finite set of 2-point-based merge functions. Let $T_{\rho}^{S}$ denote the cluster tree computed using the parameterized merge function $D^\Delta_\rho(A, B; d_0)$ on sample $S$. Let $\cU$ be the set of functions 
$\{u_\rho : S \mapsto \ell(T_{\rho}^{S},\cC) \mid \rho\in\R^d\}$
 that map a clustering instance $S$ to $\R$. The dual class $\cU^*$ is $(\cF,|S|^{4d+4})$-piecewise decomposable, where $\cF$ consists of constant functions $f_c : u_\rho\mapsto c$.
\end{lemma}


\noindent We will extend the {\it execution tree} approach introduced by \cite{balcan2019learning} which computes the pieces (intervals) of {\it single-parameter} linkage-based clustering. {A formal treatment of the execution tree, and how it is extended to the multi-parameter setting, is deferred to Appendix \ref{appendix:execution-tree}.} Informally, for a single parameter, the execution tree is defined as the partition tree where each node represents an interval where the first $t$ merges are identical, and edges correspond to the subset relationship between intervals obtained by refinement from a single merge. The execution, i.e.\ the sequence of merges, is unique along any path of this tree. The same properties, i.e.\ refinement of the partition with each merge and correspondence to the algorithm's execution, continue to hold in the multidimensional case, but with convex polytopes instead of intervals {(see Definition \ref{def:execution-tree})}. Computing the children of any node of the execution tree corresponds to computing the subdivision of a convex polytope into polytopic cells where the next merge step is fixed. The children of any node of the execution tree can be computed using Algorithm \ref{alg:medag}. We compute the cells by following the neighbors, keeping track of the cluster pairs merged for the computed cells to avoid recomputation. For any single cell, we find the bounding polytope along with cluster pairs corresponding to neighboring cells by computing the tight constraints in a system of linear inequalities. {\color{black} Theorem \ref{thm:enumeration} gives the runtime complexity of the proposed algorithm for computing the children of any node of the execution tree. Theorem \ref{thm:execution-leaves-main} provides the total time to compute all the leaves of the execution tree.}


\subsection{An output-polynomial algorithm to compute the dual function pieces}
\red{Add constraints for $P$}
\red{Recurse along execution tree..}
Formally we define an execution tree (Figure \ref{figure: refinement}) as follows.
\begin{definition}[Execution tree]\label{def:execution-tree}
    Let $S$ be a clustering instance with $|S| = n$, and $\emptyset \neq \cP \subseteq [0, 1]^d$. The \emph{execution tree} on $S$ with respect to $\cP$ is a depth-$n$ rooted tree $T$, whose nodes are defined recursively as follows: $r = ([], \cP)$ is the root, where $[]$ denotes the empty sequence; then, for any node $v = ([u_1, u_2, \dots, u_t], \cQ) \in T$ with $t < n-1$, the children of $v$ are defined as
    \begin{align*}
        \text{children}(v) = \Bigg\{ \bigg([u_1, u_2, \dots, u_t, (A, B)], \cQ_{A, B} \bigg) : \begin{array}{c} A, B \subseteq S \text{ is the $(t+1)^\text{st}$ merge by $\cA_\rho$ for} \\ \text{exactly the $\rho \in \cQ_{A,B}\subseteq\cP$, with $\emptyset \neq \cQ_{A, B} \subseteq \cQ$} \end{array} \Bigg\}.
    \end{align*}
\end{definition}
\noindent For an execution tree $T$ with $v \in T$ and each $i$ with $0 \leq i \leq n$, we let $\cP_i$ denote the set of $\cQ$ such that there exists a depth-$i$ node $v \in T$ and a sequence of merges $\cM$ with $v = (\cM, \cQ)$. Intuitively, the execution tree represents all possible execution paths (i.e.\ sequences for the merges) for the algorithm family when run on the instance $S$ as we vary the algorithm parameter $\rho \in \cP$. Furthermore, each $\cP_i$ is a subdivision of the parameter space into pieces where each piece has the first $i$ merges constant. We establish the execution tree captures all possible sequences of merges by some algorithm $\cA_\rho$ in the parameterized family via its nodes, and each node corresponds to a convex polytope if the parameter space $\cP$ is a convex polytope  (Lemmata \ref{lem:execution-subdivision} and \ref{lem:subdivision-convex}).

Our cell enumeration algorithm for computing all the pieces of the dual class function now simply computes the execution tree, using Algorithm \ref{alg:subpolytopes} to compute the children nodes for any given node \red{needs version of alg1 with bounding polytope P}, starting with the root. It only remains to specify \textsc{ComputeLocallyRelevantSeparators}. For a given $\mathbf{x}=\rho$ we can find the next merge candidate in time $O(dn^2)$ by computing the merge function $D_\rho^{\Delta}(A,B;\delta)$ for all pairs of candidate (unmerged) clusters $A,B$. If $(A^*,B^*)$ minimizes the merge function, the locally relevant hyperplanes are given by $D_\rho^{\Delta}(A^*,B^*;\delta)\le D_\rho^{\Delta}(A',B';\delta)$ for $(A',B')\ne (A^*,B^*)$ i.e. $t_{\text{LRS}}\le n^2$. Now using Theorem \ref{thm:enumeration}, we can provide the following bound for the overall runtime of the algorithm (soft-O notation suppresses logarithmic terms and multiplicative constants in $d$, proof in Appendix \ref{app:clustering-main}).

\noindent 

\begin{restatable}{theorem}{thmleaves} \label{thm:execution-leaves-main}
    Let $S$ be a clustering instance with $|S| = n$, and let $R_i = |\cP_i|$ and $R = R_n$. Furthermore, let $H_t = \left|\{ (\cQ_1, \cQ_2) \in \cP_t^2 \mid \cQ_1 \cap \cQ_2 \neq \emptyset \}\right|$ denote the total number of adjacencies between any two pieces of $\cP_i$ and $H = H_n$. Then, the leaves of the execution tree on $S$ can be computed in time $\tilde{O}\left(\sum_{i=1}^n \left(
    H_i + R_i T_M \right) (n-i+1)^2\right)$, where $T_M$ is the time it takes to compute the merge function.
\end{restatable}

\noindent \noindent     In the case of single, median, and complete linkage, we may assume $T_M = O(d)$ by carefully maintaining a hashtable containing distances between every pair of clusters. Each merge requires overhead at most $O(n_t^2)$, $n_t=n-t$ being the number of unmerged clusters at the node at depth $t$, which is absorbed by the cost of solving the LP corresponding to the cell of the merge. We have the following corollary which states that our algorithm is output-linear for $d=2$.


\begin{corollary}\label{cor:d2}
    For $d = 2$ the leaves of the execution tree of any clustering instance $S$ with $|S|=n$ can be computed in time $O(RT_M n^3)$.
\end{corollary}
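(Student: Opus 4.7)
The plan is to specialize Theorem~\ref{thm:execution-leaves-main} to $d=2$ and bound the adjacency counts $H_i$ by $O(R_i)$ using planarity. Substituting $d=2$ (so $d!=2$) into the runtime bound of Theorem~\ref{thm:execution-leaves-main} gives
\[
    O\!\left(\sum_{i=1}^n \bigl(H_i + R_i K\bigr)(n-i+1)^2\right).
\]
It therefore suffices to show $H_i = O(R_i)$ and $R_i \leq R$ for all $i$, after which the sum collapses to $O(RK \sum_{i=1}^n (n-i+1)^2) = O(RKn^3)$.

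First, I would note that the partitions $\cP_i$ are nested refinements in $i$ (each merge only splits existing cells further), so the sequence $R_i$ is nondecreasing and $R_i \leq R_n = R$. This handles the $R_iK$ term.

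Next, the main step is bounding $H_i$. Here $\cP_i$ is a subdivision of the triangular parameter region $\cP = \blacktriangle^2 \subset \R^2$ into convex polygonal cells with straight-line boundaries (by Lemma~\ref{lem:clustering-piecewise-2point}, all boundaries are linear). The region adjacency graph that counts pairs $(\cQ_1,\cQ_2) \in \cP_i^2$ sharing a nonempty intersection is exactly the dual graph of this planar subdivision, which is itself a planar graph. By Euler's formula for simple planar graphs, a planar graph on $R_i$ vertices has at most $3R_i - 6$ edges, so $H_i \leq 3R_i - 6 = O(R_i) \leq O(R)$.

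Combining the two bounds inside the sum yields
\[
    O\!\left(\sum_{i=1}^n \bigl(R_i + R_i K\bigr)(n-i+1)^2\right) \;\leq\; O\!\left(RK \sum_{i=1}^n (n-i+1)^2\right) \;=\; O(RKn^3),
\]
as claimed. The only nontrivial step is the planarity observation; the rest is bookkeeping. One minor subtlety is that two cells might touch only at a single vertex rather than along an edge, but such incidences can be charged to actual edge-adjacencies in the dual graph without changing the $O(R_i)$ bound.
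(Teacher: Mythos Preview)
Your proof is correct and follows essentially the same approach as the paper: both specialize Theorem~\ref{thm:execution-leaves-main} to $d=2$, invoke planarity of the region adjacency graph to obtain $H_i = O(R_i)$, use the monotonicity $R_i \le R$, and sum $(n-i+1)^2$ to get the $n^3$ factor. Your treatment is slightly more explicit (e.g.\ citing Euler's formula and flagging the vertex-only adjacency subtlety), but the argument is the same in substance.
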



\begin{proof}
    The key observation is that on any iteration $i$, the number of adjacencies $H_i = O(R_i)$. This is because for any region $P \in \cP_i$, $P$ is a polygon divided into convex subpolygons, and the graph $G_P$ has vertices which are faces and edges which cross between faces. Since the subdivision of $P$ can be embedded in the plane, so can the graph $G_P$. Thus $G_P$ is planar, meaning $H_i = O(R_i)$. Plugging into Theorem~\ref{thm:execution-leaves-main}, noting that $(n-i+1)^2 \leq n^2$, $H_i \leq H$, and $R_i \leq R$, we obtain the desired runtime bound of $O\left(\sum_{i=1}^{n} ( R + RT_M)n^2\right) = O(RT_Mn^3)$.
\end{proof}

\noindent Above results yield bounds on $T_S$, the enumeration time for dual function of the pieces in a single problem instance. Theorem \ref{thm:erm} further implies bounds on enumeration time of the sum dual function pieces (Table \ref{tab:comparison}). We conclude this section with the following remark on the size of $R$.


\red{connection to degree 1 GJ algorithm; tree should work if no two "computation modules" share an if descendent; else graph}
\section{Dynamic Programming based sequence alignment\red{2pg}}
\red{*****define execution graph; consider putting general DP rule into appendix for clarity; POLISH}

\label{sec:seq-align}

Sequence alignment is a fundamental combinatorial problem with applications to computational biology. For example, to compare two DNA, RNA or amino acid sequences the standard approach is to align two sequences to detect similar regions and compute the optimal alignment \cite{needleman1970general,waterman1989mathematical,clote2000computational}. However, the optimal alignment depends on the relative costs or weights used for specific substitutions, insertions/deletions, or {\it gaps} (consecutive deletions) in the sequences. Given a set of weights, the optimal alignment computation is typically a simple dynamic program. 
Our goal is to learn the weights, such that the alignment produced by the dynamic program has application-specific desirable properties.

\paragraph{Problem setup.} Given a pair of sequences $s_1,s_2$ over some alphabet $\Sigma$ of lengths $m=|s_1|$ and $n=|s_2|$, and a `space' character $-\notin \Sigma$, a space-extension $t$ of a sequence $s$ over $\Sigma$ is a sequence over $\Sigma\cup\{-\}$ such that removing all occurrences of $-$ in $t$ gives $s$. A global alignment (or simply alignment) of $s_1,s_2$ is a pair of sequences $t_1,t_2$ such that $|t_1|=|t_2|$, $t_1,t_2$ are space-extensions of $s_1,s_2$ respectively, and for no $1\le i\le |t_1|$ we have $t_1[i]=t_2[i]=-$. Let $s[i]$ denote the $i$-th character of a sequence $s$ and $s[:\!i]$ denote the first $i$ characters of sequence $s$. For $1\le i\le |t_1|$, if $t_1[i]=t_2[i]$ we call it a {\it match}. If $t_1[i]\ne t_2[i]$, and one of $t_1[i]$ or $t_2[i]$ is the character $-$ we call it a {\it space}, else it is a {\it mismatch}. A sequence of $-$ characters (in $t_1$ or $t_2$) is called a {\it gap}. Matches, mismatches, gaps and spaces are commonly used {\it features} of an alignment, i.e.\ functions that map sequences and their alignments $(s_1,s_2,t_1,t_2)$ to $\Z_{\ge 0}$ (for example, the number of spaces). 
A common measure of {\it cost} of an alignment is some linear combination of features. For example if there are $d$ features given by $l_k(\cdot)$, $k\in[d]$, the cost may be given by
$c(s_1,s_2,t_1,t_2,\rho)=\sum_{k=1}^d\rho_kl_k(s_1,s_2,t_1,t_2)$
where $\rho=(\rho_1,\dots,\rho_d)$ are the parameters that govern the relative weight of the features \cite{kececioglu2006simple}.  
Let  $\tau(s,s',\rho)=\argmin_{t_1,t_2}c(s,s',t_1,t_2,\rho)$ and $C(s,s',\rho)=\min_{t_1,t_2}c(s,s',t_1,t_2,\rho)$ denote the optimal alignment and its cost respectively. 

\paragraph{A general DP update rule.} For a fixed $\rho$, suppose the sequence alignment problem can be solved, i.e.\ we can find the alignment with the smallest cost, using a dynamic program $A_\rho$ with linear parameter dependence (described below). {Our main application will be to the family of dynamic programs $A_\rho$ which compute the optimal alignment $\tau(s_1,s_2,\rho)$ given any pair of sequences $(s_1,s_2)\in\Sigma^m\times\Sigma^n=\Pi$ for any $\rho\in\R^d$, but we will proceed to provide a more general abstraction.} See Section \ref{app:example} for example DPs using well-known features in computational biology, expressed using the abstraction below. For any problem $(s_1,s_2)\in\Pi$, the dynamic program $A_\rho$ ($\rho\in\cP\subseteq \R^d${, the set of parameters}) solves a set $\pi(s_1,s_2)=\{P_{i}\mid i\in [k], P_k=(s_1,s_2)\}$ of $k$ subproblems (typically, $\pi(s_1,s_2)\subseteq \Pi_{s_1,s_2}=\{(s_1[:\!i'],s_2[:\!j'])\mid i'\in\{0,\dots,m\},j'\in\{0,\dots,n\}\}\subseteq \Pi$) in some fixed order $P_1,\dots,P_k=(s_1,s_2)$. Crucially, the subproblems sequence $P_1,\dots,P_k$ do not depend on $\rho$\footnote{For the sequence alignment DP in Section \ref{app:ex1}, we have $\pi(s_1,s_2)=\Pi_{s_1,s_2}$ and we first solve the base case subproblems (which have a fixed optimal alignment for all $\rho$) followed by problems $(s_1[:\!i'],s_2[:\!j'])$ in a non-decreasing order of $i'+j'$ for any value of $\rho$.}. In particular, a problem $P_j$ can be efficiently solved given optimal alignments and their costs $(\tau_i(\rho),C_i(\rho))$ for problems $P_{i}$ for each $i\in[j-1]$. Some initial problems in the sequence $P_1,\dots,P_k$ of subproblems are {\it base case} subproblems where the optimal alignment and its cost can be directly computed without referring to a previous subproblem.
To solve a (non base case) subproblem $P_j$, we consider $V$ alternative {\it cases} $q:\Pi\rightarrow[V]$, i.e.\ $P_j$ belongs to exactly one of the $V$ cases (e.g.\ if $P_j=(s_1[:\!i'],s_2[:\!j'])$, we could have two cases corresponding to $s_1[i']=s_2[j']$ and $s_1[i']\ne s_2[j']$). Typically, $V$ will be a small constant. For any case $v=q(P_j)\in[V]$ that $P_j$ may belong to, the cost of the optimal alignment of $P_j$ is given by a minimum over $L_v$ terms of the form $c_{v,l}(\rho,P_j)=\rho\cdot w_{v,l}+\sigma_{v,l}(\rho,P_j)$, where $l\in[L_v]$, $w_{v,l}\in\R^{d}$, $\sigma_{v,l}(\rho,P_j)=C_t(\rho)\in\{C_1(\rho),\dots,C_{j-1}(\rho)\}$ is the cost of some previously solved subproblem $P_t=(s_1[:\!i'_t],s_2[:\!j'_t])=(s_1[:\!i'_{v,l,j}],s_2[:\!j'_{v,l,j}])$ (i.e.\ $t$ depends on $v,l,j$ but not on $\rho$), and $c_{v,l}(\rho,P_j)$ is the cost of alignment $\tau_{v,l}(\rho,P_j)=T_{v,l}(\tau_t(\rho))$ which extends the optimal alignment for subproblem $P_t$ by a $\rho$-independent transformation $T_{v,l}(\cdot)$. That is, the DP update for computing the cost of the optimal alignment takes the form
\begin{align}DP(\rho,P_j) = \min_l\{\rho\cdot w_{q(P_j),l}+\sigma_{q(P_j),l}(\rho,P_j)\}, \label{eqn:dp}\end{align}
and the optimal alignment is given by $DP'(\rho,P_j) = \tau_{q(P_j),l^*}(\rho,P_j), $
where $l^*=\argmin_l\{\rho\cdot w_{q(P_j),l}+\sigma_{q(P_j),l}(\rho,P_j)\}$. The DP specification is completed by including base cases $\{C(s,s',\rho)=\rho\cdot w_{s,s'}\mid (s,s')\in\cB(s_1,s_2)\}$ (or $\{\tau(s,s',\rho)=\tau_{s,s'}\mid (s,s')\in\cB(s_1,s_2)\}$ for the optimal alignment DP) corresponding to a set of base case subproblems $\cB(s_1,s_2)\subseteq \Pi_{s_1,s_2}$. Let $L=\max_{v\in[V]}L_v$ denote the maximum number of subproblems needed to compute a single DP update in any of the cases. $L$ is often small, typically 2 or 3 (see examples in Section \ref{app:example}). Our main result is to provide an algorithm for computing the polytopic pieces of the dual class functions efficiently for small constants $d$ and $L$.

\paragraph{Overview of techniques.} \cite{gusfield1994parametric} provide a solution for computing the pieces for sequence alignment with only two free parameters, i.e.\ $d=2$, using a ray-search based approach in $O(R^2+RT_{\textsc{dp}})$ time, where $R$ is the number of pieces in $u_{(s_1,s_2)}(\cdot)$. \cite{kececioglu2006simple} solve the simpler problem of finding a point in the polytope (if one exists) in the parameter space where a given alignment $(t_1,t_2)$ is optimal by designing an efficient separation oracle for a linear program with one constraint for every alignment $(t_1',t_2')\ne (t_1,t_2)$, $c(s_1,s_2,t_1,t_2,\rho)\le c(s_1,s_2,t_1',t_2',\rho)$, where $\rho\in\R^d$ for general $d$. We provide a first (to our knowledge) algorithm for the global pairwise sequence alignment problem with general $d$ for computing the full polytopic decomposition of the parameter space with fixed optimal alignments in each polytope. We adapt the idea of execution tree to the sequence alignment problem by defining an execution DAG of decompositions for prefix sequences $(s[:\!i],s[:\!j])$. We overlap the decompositions corresponding to the subproblems in the DP update and sub-divide each piece in the overlap using fixed costs of all the subproblems within each piece. The number of pieces in the overlap are output polynomial (for constant $d$ and $L$) and the sub-division of each piece involves at most $L^2$ hyperplanes. By memoizing decompositions and optimal alignments for the subproblems, we obtain our runtime guarantee (Table \ref{tab:comparison})  for constant $d,L$. For $d=2$, we get a running time of $O(RT_{\textsc{dp}})$.

\subsection{Example dynamic programs for sequence alignment}\label{app:example}

We exhibit how two well-known sequence alignment formulations can be solved using dynamic programs which fit our model in Section \ref{sec:seq-align}. In Section \ref{app:ex1} we show a DP with two free parameters ($d=2$), and in Section \ref{app:ex2} we show another DP which has three free parameters ($d=3$).

\subsubsection{Mismatches and spaces}\label{app:ex1}

Suppose we only have two features, {\it mismatches} and {\it spaces}. The alignment that minimizes the cost $c$ may be obtained using a dynamic program in $O(mn)$ time. The dynamic program is given by the following recurrence relation for the cost function which holds for any $i,j>0$, and for any $\rho=(\rho_1,\rho_2)$,
\begin{align*}
    C&(s_1[:\!i],s_2[:\!j],\rho)=
	\begin{cases}
		C(s_1[:\!i-1],s_2[:\!j-1],\rho)  & \mbox{if } s_1[i]=s_2[j],\\
		\\
		\begin{array}{l@{}l@{}l}\min\big\{&\rho_1+C(s_1[:\!i-1],s_2[:\!j-1],\rho),\\
		&\rho_2+C(s_1[:\!i-1],s_2[:\!j],\rho),\\
		&\rho_2+C(s_1[:\!i],s_2[:\!j-1],\rho)&\big\}
		\end{array} &\mbox{if } s_1[i]\ne s_2[j].
	\end{cases}
\end{align*}
The base cases are $C(\phi,\phi,\rho)=0, C(\phi,s_2[:\!j],\rho)=j\rho_2,=C(s_1[:\!i],\phi,\rho)=i\rho_2$ for $i,j\in[m]\times[n]$. Here $\phi$ denotes the empty sequence. One can write down a similar recurrence for computing the optimal alignment $\tau(s_1,s_2,\rho)$.

We can solve the non base-case subproblems $(s_1[:\!i],s_2[:\!j])$ in any non-decreasing order of $i+j$. Note that the number of cases $V=2$, and the maximum number of subproblems needed to compute a single DP update $L=3$ ($L_1=1,L_2=3$). For a non base-case problem (i.e. $i,j>0$) the cases are given by $q(s_1[:\!i],s_2[:\!j])=1$ if $s_1[i]=s_2[j]$, and $q(s_1[:\!i],s_2[:\!j])=2$ otherwise. The DP update in each case is a minimum of terms of the form $c_{v,l}(\rho,(s_1[:\!i],s_2[:\!j]))=\rho\cdot w_{v,l}+\sigma_{v,l}(\rho,(s_1[:\!i],s_2[:\!j]))$. For example if $q(s_1[:\!i],s_2[:\!j])=2$, we have $w_{2,1}=\langle 1,0\rangle$ and $\sigma_{2,1}(\rho,(s_1[:\!i],s_2[:\!j]))$ equals $C(s_1[:\!i-1],s_2[:\!j-1],\rho)$, i.e. the solution of previously solved subproblem $(s_1[:\!i-1],s_2[:\!j-1])$, the index of this subproblem depends on $l,v$ and index of $(s_1[:\!i],s_2[:\!j])$ but not on $\rho$ itself.

\subsubsection{Mismatches, spaces and gaps}\label{app:ex2}

Suppose we have three features, {\it mismatches}, {\it spaces} and {\it gaps}. Typically gaps (consecutive spaces) are penalized in addition to spaces in this model, i.e. the cost of a sequence of three consecutive gaps in an alignment $(\dots a---b\dots,\dots a'\;p\;q\;r\;b'\dots)$ would be $3\rho_2+\rho_3$ where $\rho_2,\rho_3$ are costs for {\it spaces} and {\it gaps} respectively \cite{kececioglu2010aligning}. The alignment that minimizes the cost $c$ may again be obtained using a dynamic program in $O(mn)$ time. We will need a slight extension of our DP model from Section \ref{sec:seq-align} to capture this. We have three subproblems corresponding to any problem in $\Pi_{s_1,s_2}$ (as opposed to exactly one subproblem, which was sufficient for the  example in \ref{app:ex1}). We have a set of subproblems $\pi(s_1,s_2)$ with $|\pi(s_1,s_2)|\le 3|\Pi_{s_1,s_2}|$ for which our model is applicable. For each $(s_1[:\!i],s_2[:\!j])$ we can compute the three costs (for any fixed $\rho$)
\begin{itemize}
    \item $C_s(s_1[:\!i],s_2[:\!j],\rho)$ is the cost of optimal alignment that ends with substitution of $s_1[i]$ with $s_2[j]$.
    \item $C_i(s_1[:\!i],s_2[:\!j],\rho)$ is the cost of optimal alignment that ends with insertion of $s_2[j]$.
    \item $C_d(s_1[:\!i],s_2[:\!j],\rho)$ is the cost of optimal alignment that ends with deletion of $s_1[i]$.
\end{itemize}
The cost of the overall optimal alignment is simply $C(s_1[:\!i],s_2[:\!j],\rho)=\min\{C_s(s_1[:\!i],s_2[:\!j],\rho),C_i(s_1[:\!i],s_2[:\!j],\rho),C_d(s_1[:\!i],s_2[:\!j],\rho)\}$.

\noindent The dynamic program is given by the following recurrence relation for the cost function which holds for any $i,j>0$, and for any $\rho=(\rho_1,\rho_2,\rho_3)$,
\begin{align*}
    C_s&(s_1[:\!i],s_2[:\!j],\rho)=\min
	\begin{cases}
		\begin{array}{l@{}l@{}l}&\rho_1+C_s(s_1[:\!i-1],s_2[:\!j-1],\rho),\\
		&\rho_1+C_i(s_1[:\!i-1],s_2[:\!j-1],\rho),\\
		&\rho_1+C_d(s_1[:\!i-1],s_2[:\!j-1],\rho)&
		\end{array} 
	\end{cases}
\end{align*}

\begin{align*}
    C_i&(s_1[:\!i],s_2[:\!j],\rho)=\min
	\begin{cases}
		\begin{array}{l@{}l@{}l}&\rho_2+\rho_3+C_s(s_1[:\!i],s_2[:\!j-1],\rho),\\
		&\rho_2+C_i(s_1[:\!i],s_2[:\!j-1],\rho),\\
		&\rho_2+\rho_3+C_d(s_1[:\!i],s_2[:\!j-1],\rho)&
		\end{array} 
	\end{cases}
\end{align*}

\begin{align*}
    C_d&(s_1[:\!i],s_2[:\!j],\rho)=\min
	\begin{cases}
		\begin{array}{l@{}l@{}l}&\rho_2+\rho_3+C_s(s_1[:\!i-1],s_2[:\!j],\rho),\\
		&\rho_2+\rho_3+C_i(s_1[:\!i-1],s_2[:\!j],\rho),\\
		&\rho_2+C_d(s_1[:\!i-1],s_2[:\!j],\rho)&
		\end{array} 
	\end{cases}
\end{align*}

\noindent By having three subproblems for each $(s_1[:\!i],s_2[:\!j])$ and ordering the non base-case problems again in non-decreasing order of $i+j$, the DP updates again fit our model (\ref{eqn:dp}).

\subsection{Efficient algorithm for global sequence alignment} 

{As indicated above,} we consider the family of dynamic programs $A_\rho$ which compute the optimal alignment $\tau(s_1,s_2,\rho)$ given any pair of sequences $(s_1,s_2)\in\Sigma^m\times\Sigma^n=\Pi$ for any $\rho\in\R^d$. For any alignment $(t_1,t_2)$, the algorithm has a fixed real-valued utility (different from the cost function above) which captures the quality of the alignment, i.e.\ the utility function $u((s_1,s_2),\rho)$ only depends on the alignment $\tau(s_1,s_2,\rho)$. The dual class function is piecewise constant with convex polytopic pieces (Lemma \ref{lem:seq-dual-piecewise} in Appendix \ref{app:seq-al}). For any fixed problem $(s_1,s_2)$, the space of parameters $\rho$ can be partitioned into $R$ convex polytopic regions where the optimal alignment is fixed. The optimal parameter can then be found by simply comparing the costs of the alignments in each of these pieces. For the rest of this section we consider the algorithmic problem of computing these $R$ pieces efficiently.

For the clustering algorithm family, as we have seen in Section \ref{sec:clustering}, we get a refinement of the parameter space with each new step (merge) performed by the algorithm. This does not hold for the sequence alignment problem. 
Instead we obtain the following DAG, from which the desired pieces can be obtained by looking at nodes with no out-edges (call these {\it terminal} nodes). Intuitively, the DAG is  built by iteratively adding nodes corresponding to subproblems $P_1,\dots,P_k$ and adding edges directed towards $P_j$ from all subproblems that appear in the DP update for it. That is, for {\it base case} subproblems, we have singleton nodes with no incoming edges. Using the recurrence relation (\ref{eqn:dp}), we note that the optimal alignment for the pair of sequences $(s_1[:\!i],s_2[:\!j])$ can be obtained from the optimal alignments for subproblems $\{(s_1[:\!i_{v',l}],s_2[:\!j_{v',l}])\}_{l\in[L_{v'}]}$ where $v'=q(s_1[:\!i],s_2[:\!j])$. The DAG for $(s_1[:\!i],s_2[:\!j])$ is therefore simply obtained by using the DAGs $G_{v',l}$ for the subproblems and adding directed edges from the terminal nodes of $G_{v',l}$ to new nodes $v_{p,i,j}$ corresponding to each piece $p$ of the partition $P[i][j]$ of $\cP$ given by the set of pieces of $u_{(s_1[:\!i],s_2[:\!j])}(\rho)$. 
A more compact representation of the execution graph would have only a single node $v_{i,j}$ for each subproblem $(s_1[:\!i],s_2[:\!j])$ (the node stores the corresponding partition $P[i][j]$) and edges directed towards $v_{i,j}$ from nodes of subproblems used to solve $(s_1[:\!i],s_2[:\!j])$. Note that the graph depends on the problem instance $(s_1,s_2)$ as the relevant DP cases $v'=q(s_1[:\!i],s_2[:\!j])$ depend on the sequences $s_1,s_2$. A naive way to encode  the execution graph would be an exponentially large tree corresponding to the recursion tree of the recurrence relation (\ref{eqn:dp}).

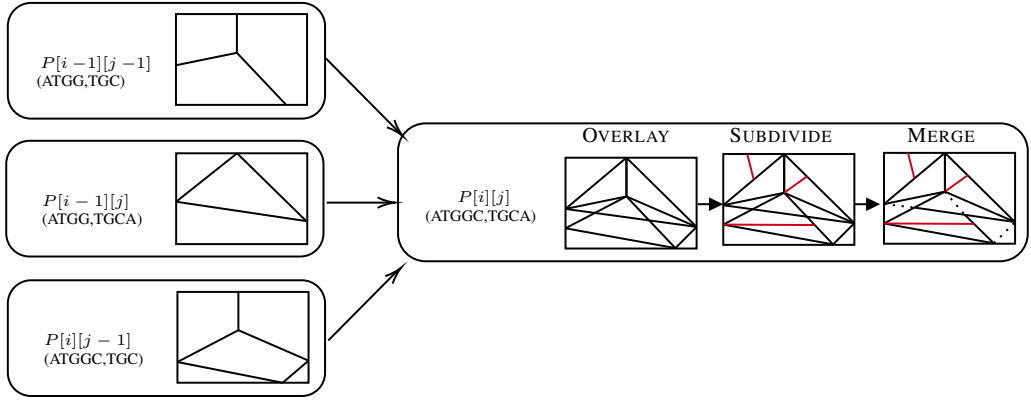
\begin{figure*}[!t]
\centering
\tikzset{every picture/.style={line width=0.75pt}} 
\tikzset{every picture/.style={line width=0.75pt}} 

\begin{tikzpicture}[x=0.75pt,y=0.75pt,yscale=-0.8,xscale=0.8]

\draw   (122,757.04) -- (204.64,757.04) -- (204.64,814.34) -- (122,814.34) -- cycle ;
\draw    (160.41,757.04) -- (160.41,781.49) ;
\draw    (160.41,781.49) -- (122,789.34) ;
\draw    (191.26,814.08) -- (160.41,781.49) ;
\draw   (122.12,844.88) -- (204.77,844.88) -- (204.77,902.18) -- (122.12,902.18) -- cycle ;
\draw    (160.53,844.88) -- (204.18,887.66) ;
\draw    (204.18,887.66) -- (122.12,875.43) ;
\draw    (122.12,875.43) -- (160.53,844.88) ;
\draw   (122.93,932.34) -- (205.58,932.34) -- (205.58,989.63) -- (122.93,989.63) -- cycle ;
\draw    (161.35,956.49) -- (205.58,975.7) ;
\draw    (189.28,989.66) -- (122.93,976.57) ;
\draw    (122.93,976.57) -- (161.35,956.49) ;
\draw    (161.35,932.34) -- (161.35,956.49) ;
\draw    (189.28,989.66) -- (205.58,975.7) ;
\draw   (367.58,847.67) -- (450.23,847.67) -- (450.23,904.96) -- (367.58,904.96) -- cycle ;
\draw    (406,847.67) -- (406,872.11) ;
\draw    (406,872.11) -- (367.58,879.97) ;
\draw    (436.84,904.7) -- (406,872.11) ;
\draw    (406,847.67) -- (450.23,891.32) ;
\draw    (450.23,891.32) -- (367.58,879.97) ;
\draw    (367.58,879.97) -- (406,847.67) ;
\draw    (406,872.11) -- (450.23,891.32) ;
\draw    (436.84,904.7) -- (367.58,892.19) ;
\draw    (367.58,892.19) -- (406,872.11) ;
\draw    (406,847.96) -- (406,872.11) ;
\draw    (436.84,904.7) -- (450.23,891.32) ;
\draw   (467.03,845.34) -- (549.68,845.34) -- (549.68,902.64) -- (467.03,902.64) -- cycle ;
\draw    (505.44,845.34) -- (505.44,869.78) ;
\draw    (505.44,869.78) -- (467.03,877.64) ;
\draw    (536.29,902.37) -- (505.44,869.78) ;
\draw    (505.44,845.34) -- (549.68,888.99) ;
\draw    (549.68,888.99) -- (467.03,877.64) ;
\draw    (467.03,877.64) -- (505.44,845.34) ;
\draw    (505.44,869.78) -- (549.68,888.99) ;
\draw    (536.29,902.37) -- (467.03,889.86) ;
\draw    (467.03,889.86) -- (505.44,869.78) ;
\draw    (505.44,845.63) -- (505.44,869.78) ;
\draw    (536.29,902.37) -- (549.68,888.99) ;
\draw [color={rgb, 255:red, 208; green, 2; blue, 27 }  ,draw opacity=1 ]   (505.44,869.78) -- (519.99,859.31) ;
\draw [color={rgb, 255:red, 208; green, 2; blue, 27 }  ,draw opacity=1 ]   (486.24,861.49) -- (482.16,845.34) ;
\draw [color={rgb, 255:red, 208; green, 2; blue, 27 }  ,draw opacity=1 ]   (467.03,889.86) -- (524.65,890.15) ;
\draw   (568.36,844.62) -- (651,844.62) -- (651,901.92) -- (568.36,901.92) -- cycle ;
\draw    (606.77,844.62) -- (606.77,869.07) ;
\draw    (606.77,869.07) -- (568.36,876.92) ;
\draw    (637.61,901.66) -- (619.28,883.69) ;
\draw    (606.77,844.62) -- (651,888.27) ;
\draw    (651,888.27) -- (587.56,879.11) ;
\draw    (568.36,876.92) -- (606.77,844.62) ;
\draw    (606.77,869.07) -- (651,888.27) ;
\draw    (637.61,901.66) -- (568.36,889.15) ;
\draw    (568.36,889.15) -- (606.77,869.07) ;
\draw    (606.77,844.91) -- (606.77,869.07) ;
\draw  [dash pattern={on 0.84pt off 2.51pt}]  (637.61,901.66) -- (651,888.27) ;
\draw [color={rgb, 255:red, 208; green, 2; blue, 27 }  ,draw opacity=1 ]   (606.77,869.07) -- (621.32,858.59) ;
\draw [color={rgb, 255:red, 208; green, 2; blue, 27 }  ,draw opacity=1 ]   (587.56,860.77) -- (583.49,844.62) ;
\draw [color={rgb, 255:red, 208; green, 2; blue, 27 }  ,draw opacity=1 ]   (568.36,889.15) -- (625.39,889.44) ;
\draw  [dash pattern={on 0.84pt off 2.51pt}]  (568.36,876.92) -- (587.56,879.11) ;
\draw  [dash pattern={on 0.84pt off 2.51pt}]  (606.77,869.07) -- (619.28,883.69) ;
\draw   (16,764.6) .. controls (16,756.54) and (22.54,750) .. (30.6,750) -- (201.4,750) .. controls (209.46,750) and (216,756.54) .. (216,764.6) -- (216,808.4) .. controls (216,816.46) and (209.46,823) .. (201.4,823) -- (30.6,823) .. controls (22.54,823) and (16,816.46) .. (16,808.4) -- cycle ;
\draw   (15,851.6) .. controls (15,843.54) and (21.54,837) .. (29.6,837) -- (200.4,837) .. controls (208.46,837) and (215,843.54) .. (215,851.6) -- (215,895.4) .. controls (215,903.46) and (208.46,910) .. (200.4,910) -- (29.6,910) .. controls (21.54,910) and (15,903.46) .. (15,895.4) -- cycle ;
\draw   (16,939.6) .. controls (16,931.54) and (22.54,925) .. (30.6,925) -- (201.4,925) .. controls (209.46,925) and (216,931.54) .. (216,939.6) -- (216,983.4) .. controls (216,991.46) and (209.46,998) .. (201.4,998) -- (30.6,998) .. controls (22.54,998) and (16,991.46) .. (16,983.4) -- cycle ;
\draw    (216.5,785) -- (263.6,832.57) ;
\draw [shift={(265,834)}, rotate = 225.63] [color={rgb, 255:red, 0; green, 0; blue, 0 }  ][line width=0.75]    (10.93,-3.29) .. controls (6.95,-1.4) and (3.31,-0.3) .. (0,0) .. controls (3.31,0.3) and (6.95,1.4) .. (10.93,3.29)   ;
\draw   (262,843.4) .. controls (262,833.79) and (269.79,826) .. (279.4,826) -- (641.6,826) .. controls (651.21,826) and (659,833.79) .. (659,843.4) -- (659,895.6) .. controls (659,905.21) and (651.21,913) .. (641.6,913) -- (279.4,913) .. controls (269.79,913) and (262,905.21) .. (262,895.6) -- cycle ;
\draw    (217,876) -- (258,876) ;
\draw [shift={(260,876)}, rotate = 180] [color={rgb, 255:red, 0; green, 0; blue, 0 }  ][line width=0.75]    (10.93,-3.29) .. controls (6.95,-1.4) and (3.31,-0.3) .. (0,0) .. controls (3.31,0.3) and (6.95,1.4) .. (10.93,3.29)   ;
\draw    (218,963) -- (262.59,918) ;
\draw [shift={(264,917)}, rotate = 135] [color={rgb, 255:red, 0; green, 0; blue, 0 }  ][line width=0.75]    (10.93,-3.29) .. controls (6.95,-1.4) and (3.31,-0.3) .. (0,0) .. controls (3.31,0.3) and (6.95,1.4) .. (10.93,3.29)   ;
\draw    (451,877) -- (464,877) ;
\draw [shift={(467,877)}, rotate = 180] [fill={rgb, 255:red, 0; green, 0; blue, 0 }  ][line width=0.08]  [draw opacity=0] (8.93,-4.29) -- (0,0) -- (8.93,4.29) -- cycle    ;
\draw    (550,877) -- (563,877) ;
\draw [shift={(566,877)}, rotate = 180] [fill={rgb, 255:red, 0; green, 0; blue, 0 }  ][line width=0.08]  [draw opacity=0] (8.93,-4.29) -- (0,0) -- (8.93,4.29) -- cycle    ;

\draw (66.25,806.3) node [anchor=south] [inner sep=0.75pt]  [font=\tiny] [align=left] {$\displaystyle \;\;\;P[i$ $-1][j$ $-1]$\\\;\;\;(ATGG,TGC)};
\draw (69.25,893.3) node [anchor=south] [inner sep=0.75pt]  [font=\tiny] [align=left] {$\displaystyle P[ i-1][ j]$\\(ATGG,TGCA)};
\draw (70.25,981.3) node [anchor=south] [inner sep=0.75pt]  [font=\tiny] [align=left] {$\displaystyle P[ i][ j-1]$\\(ATGGC,TGC)};
\draw (314.25,878.65) node  [font=\tiny] [align=left] {$\displaystyle \quad\quad P[ i][ j]$\\(ATGGC,TGCA)};
Text Node
\draw (376,827) node [anchor=north west][inner sep=0.75pt]  [font=\scriptsize] [align=left] {$\displaystyle \textsc{Overlay}$};
\draw (469,827) node [anchor=north west][inner sep=0.75pt]  [font=\scriptsize] [align=left] {$\displaystyle \textsc{Subdivide}$};
\draw (581,827) node [anchor=north west][inner sep=0.75pt]  [font=\scriptsize] [align=left] {$\displaystyle \textsc{Merge}$};

\end{tikzpicture}
\caption{Incoming nodes used for computing the pieces at the node $P[i][j]$ in the execution DAG.
}
\label{figure: medag}
\end{figure*}

\noindent \textit{Execution DAG.} Formally we define a {\it compact execution graph} $G_e=(V_e,E_e)$ as follows. For the base cases, we have nodes labeled by $(s,s')\in\cB(s_1,s_2)$ storing the base case solutions $(w_{s,s'},\tau_{s,s'})$ over the unpartitioned parameter space $\cP=\R^d$. For $i,j>0$, we have a node $v_{i,j}$ labeled by $(s_1[:\!i],s_2[:\!j])$ and the corresponding partition $P[i][j]$ of the parameter space, with incoming edges from nodes 
of the relevant subproblems $\{(s_1[:\!i_{v',l}],s_2[:\!j_{v',l}])\}_{l\in[L_{v'}]}$ where $v'=q(s_1[:\!i],s_2[:\!j])$. This graph is a DAG since every directed edge is from some node $v_{i,j}$ to a node $v_{i',j'}$ with $i'+j'>i+j$ in typical sequence alignment dynamic programs (Appendix \ref{app:example}); an example showing a few nodes of the DAG is depicted in Figure \ref{figure: medag}. Algorithm \ref{alg:medag} gives a procedure to compute the partition of the parameter space for any given problem instance $(s_1,s_2)$ using the compact execution DAG.

\noindent We give intuitive overviews of the routines   \textsc{ComputeOverlayDP}, \textsc{ComputeSubdivisionDP} and \textsc{ResolveDegeneraciesDP} in Algorithm \ref{alg:medag} (formal details in Appendix \ref{app:seq-al-1}). 
\begin{itemize}
    \item \textsc{ComputeOverlayDP} computes an overlay $\cP_i$ of the input polytopic subdivisions $\{\cP_s\mid s\in S_i\}$ and uses Clarkson's algorithm for intersecting polytopes with output-sensitive efficiency. We show that the overlay can be computed by solving at most $\tilde{R}^{L}$ linear programs (Algorithm \ref{alg:co}, Lemma \ref{lem:tco-seq}).
    \item \textsc{ComputeSubdivisionDP} applies Algorithm \ref{alg:subpolytopes}, in each piece of the overlay we need to find the polytopic subdivision induced by $O(L^2)$ hyperplanes (the set of hyperplanes depends on the piece). This works because all relevant subproblems have the same fixed solution within any piece of the overlay. 
    \item Finally \textsc{ResolveDegeneraciesDP} merges pieces where the optimal alignment is identical (\textsc{Merge} in Figure \ref{figure: medag}) using a simple search over the resulting subdivision.
\end{itemize}

\noindent For our implementation of the subroutines, we have the following guarantee for Algorithm \ref{alg:medag} (proof and implementation details in Appendix \ref{app:seq-al-1}).
\begin{theorem}\label{thm:seq-al-1}
    Let $R_{i,j}$ denote the number of pieces in $P[i][j]$, and $\Tilde{R}=\max_{i\le m, j\le n}R_{i,j}$. If the time complexity for computing the optimal alignment is $O(T_{\textsc{dp}})$, then Algorithm \ref{alg:medag} can be used to compute the pieces for the dual class function for any problem instance $(s_1,s_2)$, in time $O(d!L^{4d}\Tilde{R}^{2L+1}T_{\textsc{dp}})$.
\end{theorem}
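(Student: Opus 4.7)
My plan is to bound the work at each node of the compact execution DAG and then sum over all nodes. The DAG has one node per pair $(i,j)$ with $i\le m$ and $j\le n$, i.e. $O(mn)$ nodes, which is absorbed into the $T_{\textsc{dp}}$ factor (since $T_{\textsc{dp}}=\Omega(mn)$ for the standard sequence alignment DP). Processing nodes in topological order (non-decreasing $i+j$) via Algorithm~\ref{alg:medag}, the per-node work is the sum of the costs of \textsc{ComputeOverlayDP}, \textsc{ComputeSubdivisionDP}, and \textsc{ResolveDegeneraciesDP}; I will establish a per-node bound of $O(d! L^{4d}\Tilde{R}^{2L+1})$, which multiplied by the node count gives the claim.

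I would analyze each subroutine in turn. \textsc{ComputeOverlayDP} overlays the at most $L$ input partitions (each with $\le\Tilde{R}$ cells) associated with the subproblems in $S_i$. Every cell of the overlay is an intersection of one cell chosen from each input partition, so the overlay has at most $\Tilde{R}^{L}$ cells, each bounded by $\le L\Tilde{R}$ candidate facets (the union of the input partitions' facets). I would enumerate these cells by BFS on the cell-adjacency graph, analogously to Algorithm~\ref{alg:subpolytopes}, invoking \textsc{Clarkson} on the candidate facet list at each discovered cell to prune redundant constraints and expose neighbors; each call has $O(L\Tilde{R})$ constraints in $d$ variables. Inside a single overlay cell $p$, the DP update~(\ref{eqn:dp}) reduces to a minimum of $L_{q(P_j)}\le L$ linear functions of $\rho$, because within $p$ every subproblem cost $\sigma_{q(P_j),l}(\rho,P_j)$ equals a previously-computed $C_t(\rho)$ that is a fixed linear function on $p$. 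Hence \textsc{ComputeSubdivisionDP} partitions $p$ using $\binom{L}{2}=O(L^2)$ hyperplanes, yielding $O(L^{2d})$ sub-cells by the standard hyperplane-arrangement bound and $O(L^{2}\cdot\mathrm{LP}(d,L^2))$ cost per sub-cell via Clarkson. Finally, \textsc{ResolveDegeneraciesDP} performs a BFS over the completed subdivision that merges adjacent sub-cells agreeing on the argmin $l^{*}$ in~(\ref{eqn:dp}), at output-linear cost.

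Combining: per node there are at most $\Tilde{R}^{L}\cdot L^{2d}$ sub-cells; the Clarkson/LP calls at each sub-cell contribute $\mathrm{LP}(d,L^2)=O(L^{2d})$; the scan through the $O(L\Tilde{R})$ candidate-facet list at each overlay cell contributes an extra factor of $\Tilde{R}$ (turning $\Tilde{R}^L$ into $\Tilde{R}^{L+1}$); and the Clarkson BFS over the overlay graph contributes another $\Tilde{R}^L$ factor since neighbor enumeration at an overlay cell must inspect the boundary shared with each of the up to $\Tilde{R}^L$ other cells. The $d!$ factor accounts for enumerating neighbors across facets of each cell in arbitrary dimension, paralleling the role of this term in Theorem~\ref{thm:execution-leaves-main}. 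Altogether this gives $O(d! L^{4d}\Tilde{R}^{2L+1})$ per DAG node, and multiplying by the $O(T_{\textsc{dp}})$ DAG nodes gives the bound. The main obstacle is making \textsc{ComputeOverlayDP} output-sensitive in arbitrary dimension: unlike the clustering setting (Algorithm~\ref{alg:vtxpolytope}), the half-space constraints defining an overlay cell come from $L$ distinct partitions rather than a single pairwise-comparison list, so I would need to verify carefully that feeding their union to \textsc{Clarkson} exposes exactly the correct facets of each overlay cell and that the adjacency BFS visits each cell only once. Once this (and the analogous adjacency argument inside \textsc{ComputeSubdivisionDP}) is settled, the memoization of partitions across DAG nodes (line~\ref{algline:prev-iter} of Algorithm~\ref{alg:medag}) makes the final bookkeeping routine.
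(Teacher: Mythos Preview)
Your overall decomposition matches the paper's: analyze each of the three subroutines per DAG node, establish a per-node bound, then multiply by $|V_e|=O(T_{\textsc{dp}})$. The key structural observations (the overlay has at most $\Tilde{R}^L$ cells; inside each overlay cell the DP update compares $L$ linear functions so $O(L^2)$ hyperplanes induce at most $L^{2d}$ sub-cells) are correct and are exactly what the paper uses.

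Two points deserve correction. First, your ``main obstacle''---making the overlay BFS output-sensitive---is one the paper simply sidesteps: its \textsc{ComputeOverlayDP} brute-force enumerates all $\Tilde{R}^L$ tuples $(j_1,\dots,j_L)$ of cells from the $L$ input subdivisions, applies \textsc{Clarkson} to each tuple's hyperplane union (at most $L\Tilde{R}$ hyperplanes), and retains the nonempty results. This costs $O(\Tilde{R}^L\cdot L\Tilde{R}\cdot\mathrm{LP}(d,\Tilde{R}^L))=O(d!L\Tilde{R}^{2L+1})$ via Seidel and needs no adjacency argument at all.

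Second, your factor-by-factor accounting is off in places. The assertion $\mathrm{LP}(d,L^2)=O(L^{2d})$ is wrong; Seidel gives $\mathrm{LP}(d,c)=O(d!c)$, so $\mathrm{LP}(d,L^2)=O(d!L^2)$. Likewise the $d!$ does not come from ``enumerating neighbors across facets''---it is Seidel's constant. The $L^{4d}$ in the final bound actually arises in \textsc{ResolveDegeneraciesDP}: the subdivided overlay has $|V_c|\le L^{2d}\Tilde{R}^L$ cells, and the BFS over its adjacency graph is crudely bounded by $|E_c|\le|V_c|^2=L^{4d}\Tilde{R}^{2L}$. Combining the three subroutine bounds gives the theorem, with the $\Tilde{R}^{2L+1}$ contributed by the overlay step and the $L^{4d}$ by the merge step.
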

\begin{proof}[Proof Sketch] The overlay $P_i$ has at most $\Tilde{R}^L$ cells, each cell can be computed by redundant hyperplane removal (using Clarkson's algorithm) in $\Tilde{O}(d!\Tilde{R}^{2L+1})$ time. For any cell in the overlay, all the required subproblems have a fixed optimal alignment, and there are at most $L^2$ hyperplanes across which the DP update can change. As a consequence we can find the subdivision of each cell in $P_i$ by adapting Algorithm \ref{alg:subpolytopes} in $\Tilde{O}(\Tilde{R}^{2L})$ total time, and the number of cells in the subdivided overlay $P_i'$ is still $\Tilde{R}^L$. Resolving degeneracies is easily done by a simple BFS finding maximal components over the cell adjacency graph of $P_i'$, again in $\Tilde{O}(\Tilde{R}^{2L})$ time. Finally the subroutines (except \textsc{ComputeSubdivisionDP}, for which we added time across all $p\in P_i$) run $|V_e|$ times, the number of nodes in the execution DAG, which is $O(T_{\textsc{dp}})$. 
\end{proof}

\begin{algorithm2e}[t] 
    \caption{\textsc{ComputeCompactExecutionDAG}\label{alg:medag}}
  \begin{algorithmic}[1]
    \STATE\textbf{Input}: {Execution DAG $G_e=(V_e,E_e)$, problem instance $(s_1,s_2)$}\\
    $\cP_0\gets\cP$\\
    \STATE$v_1,\dots,v_n \gets$ topological ordering of vertices $V_e$\\
    \FOR{$i=1$ to $n$}\STATE{
    Let $S_i$ be the set of nodes with incoming edges to $v_i$ \\
    \STATE For $v_s\in S_i$, let $\cP_s$ denote the  partition corresponding to $v_s$\\
    \STATE $\cP_i\gets\textsc{ComputeOverlayDP}(\{\cP_s\mid s\in S_i\})$ \\\label{algline:comp-olay}
    \FOR{each $p\in \cP_i$}\STATE{$p'\gets\textsc{ComputeSubdivisionDP}(p, (s_1,s_2))$\\
    \STATE$\cP_i\gets \cP_i\setminus \{p\} \cup p'$}\ENDFOR
    \STATE$\cP_i\gets\textsc{ResolveDegeneraciesDP}(\cP_i)$
    }
    \ENDFOR
    \STATE\textbf{return} Partition $\cP_n$
    
  \end{algorithmic}
\end{algorithm2e}

\noindent For the special case of $d=2$, we show that (Theorem \ref{thm:seq-al-2}, Appendix \ref{app:seq-al}) the pieces may be computed in $O(RT_{\textsc{dp}})$ time using the ray search technique of \cite{megiddo1978combinatorial}. We conclude with a remark on the size of $R$.


\section{Conclusion}

We develop approaches for tuning multiple parameters for several different combinatorial problems. We are interested in the case where the loss as a function of the parameters is challenging to optimize directly due to presence of sharp transition boundaries, and present algorithms for enumerating the pieces induced by the boundaries when these boundaries are linear.  Our approaches are output-polynomial and efficient for a constant number of parameters. 
Our algorithms for the three problems share common ideas including an output-sensitive computational geometry based algorithm for removing redundant constraints from a system of linear inequalities, and an execution graph which captures concisely the different algorithmic states simultaneously for all parameter values. Our techniques may be useful for tuning other combinatorial algorithms beyond the ones considered here, it is an interesting open question to develop general techniques for a broad class of problems. We only consider a small constant number of parameters --- developing algorithms that scale better with the number of parameters is another direction with potential for future research.

\section{Acknowledgements}

We thank Dan DeBlasio for useful discussions on the computational biology literature. We also thank Avrim Blum and Mikhail Khodak for helpful feedback. 
This material is based on work supported by the National Science Foundation under grants CCF-1910321, IIS-1901403, and SES-1919453; the Defense Advanced Research Projects Agency under cooperative agreement HR00112020003; a Simons Investigator Award; an AWS Machine Learning Research Award; an Amazon Research Award; a Bloomberg Research Grant; a Microsoft Research Faculty Fellowship.

\bibliographystyle{alpha}
\bibliography{main}

\clearpage
\appendix

\section{Augmented Clarkson's Algorithm}\label{app:clarkson}

\noindent We describe here the details of the Augmented Clarkson's algorithm, which modifies the algorithm of Clarkson \cite{clarkson1994more} with additional bookkeeping needed for tracking the partition cells in Algorithm \ref{alg:subpolytopes}. The underlying problem solved by Clarkson's algorithm may be stated as follows.

{\it Problem Setup.} Given a linear system of inequalities $Ax\le b$, an inequality $A_ix\le b_i$ is said to be {\it redundant} in the system if the set of solutions is unchanged when the inequality is removed from the system. Given a  system $(A\in\R^{m\times d},b\in\R^{m})$, find an equivalent system with no redundant inequalities.

Note that to test if a single inequality $A_ix\le b_i$ is redundant, it is sufficient to solve the following LP in $d$ variables and $m$ constraints.
\begin{align}\label{eqn:single redundancy}
    \begin{split}\mathbf{maximize}\quad & A_ix\\
    \mathbf{subject\;to}\quad & A_jx\le b_j,\quad \forall j\in[m]\setminus \{i\}\\
    & A_ix\le b_i+1\end{split}
\end{align}

Using this directly to solve the redundancy removal problem gives an algorithm with running time $m\cdot\rm{LP}(d,m)$, where $\rm{LP}(d,m)$ denotes the time to solve an LP in $d$ variables and $m$ constraints. This can be improved using Clarkson's algorithm if the number of non-redundant constraints $s$ is much less than the total number of constraints $m$ (Theorem \ref{thm:redundancy}).

\begin{algorithm2e}[h] 
    \caption{\textsc{AugmentedClarkson}($z,H=(A,b),c$)\label{alg:clarkson}}
    \textbf{Input}: $A\in\R^{m\times d},b\in\R^{m},z\in\R^d$, sign-pattern $c\in\{0,1,-1\}^m$.\red{handle m vs t nicely}\\
    \textbf{Output}: list of non-redundant hyperplanes $I\subseteq[m]$, points in neighboring cells $Z\subset\R^d$.\\
    $I\gets\emptyset,J\gets[m],Z\gets\emptyset$\\
    \While{$J\ne\emptyset$}{
    Select $k\in J$\\
    Detect if $A_kx\le b_k$ is redundant in $A_{I\cup\{k\}}x\le b_{I\cup\{k\}}$ by solving LP (\ref{eqn:single redundancy}).\\
    $x^*\gets$ optimal solution of the above LP\\
    \If{redundant}{
    $J\gets J\setminus \{k\}$
    }{
    $j,z^*\gets \text{RayShoot}(A,b,z,x^*)$\tcc*{Computes $j$ such that $A_jx=b_j$ is a facet-inducing hyperplane hit by ray from $z$ along $x^*-z$}
    $J\gets J\setminus \{j\}$\\
    $c'_j\gets -c_j, c'_i\gets c_i \;\forall\; i\in[m]\setminus \{j\}$\\
    $I\gets I\cup \{j\}, Z\gets Z\cup\{[c',z^*]\}$
    }
    }
    \textbf{return} {$I,Z$} 

\end{algorithm2e}

We assume that an interior point $z\in\R^d$ satisfying $Ax<b$ is given. At a high level, one maintains the set of non-redundant constraints $I$ discovered so far i.e. $A_ix\le b_i$ is not redundant for each $i\in I$. When testing a new index $k$, the algorithm solves an LP of the form \ref{eqn:single redundancy} and either detects that $A_kx\le b_k$ is redundant, or finds index $j\in [m]\setminus I$ such that $A_jx\le b_j$ is non-redundant. The latter involves the use of a procedure RayShoot$(A,b,z,x)$ which finds the non-redundant hyperplane hit by a ray originating from $z$ in the direction $x-z$ ($x \in\R^d$) in the  system $A,b$. The size of the LP needed for this test is $\rm{LP}(d,|I|+1)$ from which the complexity follows.

To implement the $\text{RayShoot}$ procedure, we can simply find the intersections of the ray $x^*-z$ with the hyperplanes $A_jx\le b_j$ and output the one closest to $z$ (defining the cell facet in that direction). We also output an interior point from the adjacent cell during this computation, which saves us time relative to \cite{sleumer1999output} where the interior point is computed for each cell (our Clarkson based approach also circumvents their need for the Raindrop procedure \cite{bremner1997primal}). Finally we state the running time guarantee of Algorithm \ref{alg:clarkson}, which follows from the original result of Clarkson \cite{clarkson1994more}.

\begin{algorithm2e}[h] 
    \caption{\text{RayShoot}\label{alg:rayshoot}}
    \textbf{Input}: $A\in\R^{m\times d},b\in\R^{m},z\in\R^d,x\in\R^d$.\\
    \If{$A_i\cdot(x-z)=0$ for some $i$}{$z\gets z+(\epsilon,\epsilon^2,\dots,\epsilon^d)$ for sufficiently small $\epsilon$ \tcc*{Ensure general position.}
    }
    $t_i\gets \frac{b_i-A_i\cdot z}{A_i\cdot(x-z)}$\tcc*{intersection of $z+t(x-z)$ with $A_ix=b_i$}
    $j=\argmin_i\{t_i\mid t_i>0\}, t'=\max\{\min_{i\ne j}\{t_i\mid t_i>0\},0\}$\\\red{handle infinite/unbounded facet}
    \textbf{return} {$j,z+\frac{t_j+t'}{2}(x-z)$}

\end{algorithm2e}

\begin{theorem}[Augmented Clarkson's algorithm] \label{thm:ac}
    Given a list $L$ of $k$ half-space constraints in $d$ dimensions, Algorithm \ref{alg:clarkson} outputs the set $I \subseteq [L]$ of non-redundant constraints in $L$, as well as auxiliary neighbor information, in time $O(k \cdot \mathrm{LP}(d, |I|+1))$, where $\mathrm{LP}(v, c)$ is the time for solving an LP with $v$ variables and $c$ constraints.
\end{theorem}

\section{Additional details and proofs from Section \ref{sec:auction}}\label{app:auction}

\begin{proof}[Proof of Theorem \ref{thm:auction-menu}]
    In the terminology of Theorem \ref{thm:enumeration}, we have $d=2\ell$, $E\le R^2$, $V=R$, $T_{\text{CLRS}}=O(K\ell)$, $t_{\text{LRS}}\le K\ell$. By \cite{chan2018improved}, we have $\sum_{c\in V_H}\mathrm{LP}(d, |I_c|+1)\le O(Ed^{d/2+1})\le O(R^2(2\ell)^{\ell+1})$. Thus, Theorem \ref{thm:enumeration} implies a runtime bound on Algorithm \ref{alg:subpolytopes} of $\tilde{O}(dE+VT_{\text{CLRS}}+t_{\text{LRS}}  \cdot \sum_{c\in V_H}\mathrm{LP}(d, |I_c|+1))=\tilde{O}(R^2(2\ell)^{\ell+2}K)$.
\end{proof}

\begin{corollary}\label{cor:auction-menu}
There exists an implementation of \textsc{ComputeLocallyRelevantSeparators} in Algorithm \ref{alg:subpolytopes}, which given valuation functions $v_i(\cdot)$ for $i\in[m]$, computes all the $R_{\Sigma}$ pieces of the total dual class function $U_{\langle v_1,\dots, v_m\rangle}(\cdot)=\sum_iu_{v_i}(\cdot)$ in time $\tilde{O}(R_{\Sigma}^2(2\ell)^{\ell+2}mK)$, where the menu length is $\ell$, there are $K$ units of the good and $m$ is the number of buyers.
\end{corollary}

\begin{proof}
    We first compute the pieces for each of the problem instances (single buyers) and then the pieces in the sum dual class function using Theorem \ref{thm:erm}. By Theorem \ref{thm:auction-menu}, the former takes time $\tilde{O}(mR^2(2\ell)^{\ell+2}K)$ and the latter can be implemented in time $O((m+2\ell)R_{\Sigma}^2+mK\ell \sum_{c\in V_S}\mathrm{LP}(d, |I_c|+1))=\tilde{O}(R_{\Sigma}^2(2\ell)^{\ell+2}mK)$, which dominates the overall running time.
\end{proof}

\subsection{Piecewise structure of the dual class function}

The following lemma restates the result from \cite{balcan2018general} in terms of Definition \ref{def:ps}. Note that $u_\rho$ in the following denotes the revenue function (or seller's utility) and should not be confused with the buyer utility function $u_i$.

\begin{lemma}\label{lem:menu-tariff-dual-piecewise}
Let $\cU$ be the set of functions 
$\{u_\rho : v(\cdot) \mapsto p_1^{j^*}+p_2^{j^*}q^* \mid  q^*,j^* =\argmax_{q,j} v(q)-\rho^j\cdot\langle 1,q\rangle, \rho^j=\langle p_1^j,p_2^j\rangle\}$
 that map valuations $v(\cdot)$ to $\R$. The dual class $\cU^*$ is $(\cF,(K\ell)^2)$-piecewise decomposable, where $\cF = \{f_c : \cU \rightarrow \R \mid c \in \R^{2\ell}\}$ consists of linear functions $f_c : u_\rho\mapsto \rho\cdot c$.
\end{lemma}

\noindent We also bound the number of pieces $R$ in the worst-case for $\ell=1$. The following bound implies that our algorithm is better than prior best algorithm which achieves an $O(m^3K^3)$ runtime bound, even for worst case outputs.

\begin{theorem}\label{lem:tariff-pieces}
    Let menu length $\ell=1$. The number of pieces $R_{\Sigma}$ in the total dual class function $U_{\langle v_1,\dots, v_m\rangle}(\cdot)=\sum_iu_{v_i}(\cdot)$ is at most  $O(m^2K)$.
\end{theorem}

\begin{proof}
    By Lemma \ref{lem:menu-tariff-dual-piecewise}, the dual class is $(\cF, K^2)$-piecewise decomposable, where $\cF$ is the class of linear functions. That is, the two-dimensional parameter space $(p_1,p_2)$ can be partitioned into polygons by at most $K^2$ straight lines such that any dual class function $u_{v_i}$ is a linear function inside any polygon.

    We first tighten the above result to show that the dual class is in fact $(\cF, 2K+2)$-piecewise decomposable, that is the number of bounding lines for the pieces is $O(K)$. This seems counterintuitive since for any buyer $i$, we have $\Theta(K^2)$ lines $u_i(q)\ge u_i(q')$ for $q< q'\in\{0,\dots,K\}$. If $q>0$, $u_i(q)= u_i(q')$ are axis-parallel lines with intercepts $\frac{v_i(q')-v_i(q)}{q'-q}$. Since for any pair $q,q'$ the buyer has a fixed (but opposite) preference on either side of the axis-parallel line, we have at most $K$ distinct horizontal `slabs' corresponding to buyer's preference of quantities, i.e.\ regions between lines $p_2=a$ and $p_2=b$ for some $a,b>0$.
 Thus we have at most $K$ non-redundant lines. Together with another $K$ lines $u_i(0)= u_i(q')$ and the axes, we have $2K+2$ bounding lines in all as claimed.

 We will next use this tighter result to bound the number of points of intersection of non-collinear non-axial bounding lines, let's call them {\it crossing points}, across all instances $\langle v_1,\dots, v_m\rangle$. Consider a pair of instances given by buyer valuation functions $v_i,v_j$. We will establish that the number of crossing points are at most $4K$ for the pair of instances. 
 Let $l_i$ and $l_j$ be bounding lines for the pieces of $u_{v_i}$ and $u_{v_j}$ respectively. If they are both axis-parallel, then they cannot result in a crossing point. For pairs of bounding lines $l_i$ and $l_j$ such that  $l_i$ is axis-parallel and $l_j$ is not, we can have at most $K$ crossing points in all. This is because any fixed $l_i$ can intersect at most one such $l_j$ since buyer $j$'s preferred quantity $q_j$ is fixed along any horizontal line, unless $q_j$ changes across $l_i$ in which case the crossing points for the consecutive $l_j$'s coincide. Thus, there is at most one such crossing point for each of at most $K$ axis-parallel $l_i$'s. By symmetry, there are at most $K$ crossing points between $l_i,l_j$ where $l_j$ is axis parallel and $l_i$ is not. Finally, if neither $l_i,l_j$ is axis parallel, we claim there can be no more than $2K$ crossing points. Indeed, if we arrange these points in the order of increasing $p_2$, then the preferred quantity of at least one of the buyers $i$ or $j$ strictly decreases between consecutive crossing points. Thus, across all instances, there are at most $2m^2K$ crossing points.

 Finally, observe that the cell adjacency graph $G_U$ for the pieces of the total dual class function $U_{\langle v_1,\dots, v_m\rangle}(\cdot)$ is planar in this case. The vertices of this graph correspond to crossing points, or intersections of bounding lines with the axes. The latter is clearly $O(mK)$ since there are $O(K)$ bounding lines in any problem instance. Using the above bound on crossing points, the number of vertices in $G_U$ is $O(m^2K)$. Since $G_U$ is a simple, connected, planar graph, the number of faces is no more than twice the number of vertices and therefore the number of pieces $R_{\Sigma}$ is also $O(m^2K)$.
\end{proof}

\red{ \subsection{Proof of Theorem \ref{thm:auction-menu}}
\begin{proof}[Proof of Theorem \ref{thm:auction-menu}]
    We will  use Algorithm \ref{alg:auction-menu} to compute the dual pieces. As shown in Theorem \ref{thm:enumeration}, the breadth-first search over the region adjacency graph takes $O(E_r \cdot \mathrm{LP}(2\ell, \ell mK) + V_r \cdot \ell  m K)$ time, where $E_r,V_r$ are number of edges and vertices respectively in the graph. Using \cite{chan2018improved} for solving the LP gives $\mathrm{LP}(2\ell, \ell mK)=O((2\ell)!\ell mK)=O((2\ell )^{2\ell +1}mK)$. We have $E_r=O(V_r^2)$. Finally the vertices of the region adjacency graph correspond to the dual pieces, i.e.\ $V_r=R$, which gives a running time bound of $O(R^2(2\ell )^{2\ell +1}mK)$.
\end{proof}
}

\subsection{Optimal algorithm for Single TPT pricing}\label{app:tpt-opt}
Consider the setting with menu-length $\ell =1$.
The key insight is to characterize the polytopic structure of the pieces of the dual class function for a single buyer. We do this in Lemma \ref{lem:tpt-structure}.

\begin{lemma}\label{lem:tpt-structure}
Consider a single buyer with valuation function $v(\cdot)$. The buyer buys zero units of the item except for a set $\varrho_v\subset\R^2$, where $\varrho_v$ is a convex polygon with at most $K+2$ sides. Moreover, $\varrho_v$ can be subdivided into $K'\le K$ polygons $\varrho_v^{(i)}$, each a triangle or a trapezoid with bases parallel to the $\rho_1$-axis, such that for each $i\in[K']$ the buyer buys the same quantity $q^{(i)}$ of the item for all prices in $\varrho_v^{(i)}$.
\end{lemma}
\begin{proof}
We proceed by an induction on  $K$, the number of items. For $K=1$, it is straightforward to verify that $\varrho_v$ is the triangle $p_1\ge0,p_2\ge 0,p_1+p_2\le v(1)$.

Let $K>1$. If we consider the restriction of the valuation function $v(\cdot)$ to $K-1$ items, we have a convex polygon $\varrho'_v$ satisfying the induction hypothesis. To account for the $K$-th item we only need to consider the region $p_1\ge0,p_2\ge 0,p_2\le \frac{v(K)-v(q)}{K-q}$ for $0<q<K$, and $p_1+p_2K\le v(K)$. If this region is empty, $\varrho_v=\varrho'_v$, and we are done. Otherwise, denoted by $\varrho_v^{(K')}$ with $q^{(K')}=K$, the region where the buyer would buy $K$ units of the item is a trapezoid with bases parallel to the $\rho_1$-axis. We claim that $\varrho_v=\left(\varrho'_v \cap p_2\le \frac{v(K)-v(q)}{K-q}\right)\cup \varrho_v^{(K')}$ and it satisfies the properties in the lemma.

We have $q^{(K'-1)}=\argmin_q \frac{v(K)-v(q)}{K-q}$ such that the buyer's preference changes from $q'=q^{(K'-1)}$ to $q^{(K')}=K$ units across the line $p_2= \frac{v(K)-v(q')}{K-q'}$. To prove $\varrho_v$ is convex, we use the inductive hypothesis on $\varrho'_v$ and observe that $\rho_1=v(K)-p_2K$ coincides with $\rho_1'=v(q')-p_2q'$ for $p_2= \frac{v(K)-v(q')}{K-q'}$. Also the only side of $\varrho_v$ that is not present in $\varrho'_v$ lies along the line $p_1+p_2K= v(K)$, thus $\varrho_v$ has at most $K+2$ sides. The subdivison property is also readily verified given the construction of $\varrho_v$ from $\varrho_v'$ and $\varrho_v^{(K')}$.
\end{proof}

\begin{algorithm2e}[t]
\caption{ComputeFixedAllocationRegions}
\label{alg:cfar}
\textbf{Input}: $v_i(\cdot)$, valuation functions\\ 
1. For $i=1\dots m$ do\\
2. $Q\gets\emptyset$ (stack), $q'=1$, $h=v_i(1)$.\\
3. For $q=2\dots K$ do\\
3.1 $\quad$ $h'\gets (v_i(q)-v_i(q'))/(q-q')$\\
3.2 $\quad$ if $0<h'<h$\\
$\quad\quad$ $\quad\quad$ $h\gets h'$\\
 $\quad\quad$ $\quad\quad$ Push $(q,h')$ onto $Q$\\
$\quad\quad$ $\quad\quad$ $q'\gets q$\\
3.3 $\quad$ else if $0<h'$\\
$\quad\quad$ $\quad\quad$ while $h'\ge h$ do\\
$\quad\quad$ $\quad\quad$ $\quad\quad$ Pop $(q',h)$ from $Q$\\
$\quad\quad$ $\quad\quad$ $\quad\quad$ $(q_1,h_1)\gets \text{Top}(Q)$\\
$\quad\quad$ $\quad\quad$ $\quad\quad$ $h'\gets (v_i(q)-v_i(q_1))/(q-q_1)$\\
$\quad\quad$ $\quad\quad$ $\quad\quad$ $h\gets h_1$\\
 $\quad\quad$ $\quad\quad$ $\quad\quad$ $q'\gets q$\\
 $\quad\quad$ $\quad\quad$ Push $(q,h')$ onto $Q$\\
4. For $(q,h)\in Q$ do\\
 $\quad\quad$ Obtain $\varrho_{v_i}^{(j)}$ for $q^{(j)}=q$ using lines $p_2=h$ and $p_1=v(q)-p_2q$.\\
5. Compute intersection of segments in $\varrho_{v_i}^{(j)}$ for $i\in [m]$ using \cite{chazelle1992optimal}.\\
6. Use the intersection points to compute boundaries of all polygons (pieces) formed by the intersections. 
\end{algorithm2e}

\noindent Based on this structure, we propose Algorithm \ref{alg:cfar} which runs in in $O(mK\log(mK)+R_{\Sigma})$ time.

\begin{theorem}\label{thm:tpt-structure}
    There is an algorithm (Algorithm \ref{alg:cfar}) that, given valuation functions $v_i(\cdot)$ for $i\in[m]$, computes all the $R$ pieces of the total dual class function $U_{\langle v_1,\dots, v_m\rangle}(\cdot)$ for $K$ units of the good from the $m$ samples in $O(mK\log(mK)+R_{\Sigma})$ time.
\end{theorem}

\begin{proof} Note that if $0<q<q'$ and $v_i(q)> v_i(q')$, then for any $\rho_1\ge0,\rho_2\ge 0$ we have that $u_i(q) = v_i(q)-(p_1+p_2q)>v_i(q')-(p_1+p_2q)$, or the buyer will never prefer quantity $q'$ of the item over the entire tariff domain. Thus, we will assume the valuations $v_i(q)$ are monotonic in $q$ (we can simply ignore valuations at the larger value for any violation). Algorithm \ref{alg:cfar} exploits the structure in Lemma \ref{lem:tpt-structure} and computes the $O(K)$ line segments bounding the dual class pieces for a single buyer $i$ in $O(K)$ time. 
Across $m$ buyers, we have $O(mK)$ line segments (computed in $O(mK)$ time). The topological plane-sweep based algorithm of \cite{chazelle1992optimal} now computes all the intersection points in $O(mK(\log mK) + R_{\Sigma})$ time.
Here we have used that the number of polytopic vertices is $O(R_{\Sigma})$ using standard result for planar graphs.
\end{proof}

\noindent We further show that this bound is essentially optimal. A runtime lower bound of $\Omega(mK+R_{\Sigma})$ follows simply due to the amount of time needed for reading the complete input and producing the complete output. We prove a stronger lower bound which matches the above upper bound by reduction to the {\it element uniqueness problem} (given a list of $n$ numbers, are there any duplicates?) for which an $\Omega(n\log n)$ lower bound is known in the algebraic decision-tree model of computation.

\begin{theorem}\label{thm:tpt-lb}
Given a list of $n$ numbers $\cL=\langle x_1,\dots,x_n\rangle\in\N^n$, there is a linear time reduction to a $m$-buyer, $K$-item TPT pricing given by $v_i(\cdot),i\in[m]$, with $mK=\Theta(n)$, such that the pieces of the total dual class function $U_{\langle v_1,\dots, v_m\rangle}(\cdot)$ can be used to solve the {\it element uniqueness problem} for $\cL$ in $O(n)$ time.
\end{theorem}

\begin{proof}
Let $mK=n$ be any factorization of $n$ into two factors. We construct a $m$-buyer, $K+1$ item single TPT pricing scheme as follows. Define $y_j=x_j+\max_kx_k + 1$ for each $x_j$ in the list $\cL$. For every buyer $i\in[m]$, we set $v_i(1)=\max_kx_k + 1$ and $v_i(q+1)= \sum_{j=1}^q y_{j + (i-1)K}$ for each $q\in[K]$. Buyer $i$'s pieces include the segments $p_2=(v_i(q+1)-v_i(q))/(q+1-1)=x_{q+(i-1)K}$ for $q\in[K]$ (Lemma \ref{lem:tpt-structure}). Thus, across all buyers $i\in[m]$, we have $mK=n$ segments along the lines $p_2=x_j$ for $j\in[n]$. We say a duplicate is present if there are fewer than $mK$ segments parallel to the $p_1$-axis in the pieces of the total dual class function, otherwise we say `No' (i.e. all elements are distinct). This completes the linear-time reduction.
\end{proof}

\section{Comparing the quality of single, complete and median linkage procedures on different data distributions} \label{appendix:linkage-examples}

We will construct clutering instances where each of two-point based linkage procedures, i.e. single, complete and median linkage, dominates the other two procedures. Let $T_{\single}^S, T_{\complete}^S$ and $T^S_\median$ denote the cluster tree on clustering instance $S$ using $D_\single,D_{\complete}$ and $D_\median$ as the merge function (defined in Section \ref{sec:clustering}) respectively for some distance metric $d$ which will be evident from context. We have the following theorem.

\red{Add corollary/remark based on Colin's paper to convert instances to distributions.}

\begin{theorem}
For any $n\ge 10$, for $i\in\{1,2,3\}$, there exist clustering instances $S_i$ with $|S_i|=n$ and target clusterings $\cC_i$ such that the hamming loss of the optimal pruning of the cluster trees constructed using single, complete and median linkage procedures (using the same distance metric $d$) satisfy
\begin{enumerate}
    \item[(i)] $\ell(T^{S_1}_\single, \cC_1)=O(\frac{1}{n})$, $\ell(T^{S_1}_\complete, \cC_1)=\Omega(1)$ and $\ell(T^{S_1}_\median, \cC_1)=\Omega(1)$,
    \item[(ii)] $\ell(T^{S_2}_\complete, \cC_2)=O(\frac{1}{n})$, $\ell(T^{S_2}_\single, \cC_2)=\Omega(1)$ and $\ell(T^{S_2}_\median, \cC_2)=\Omega(1)$,
    \item[(iii)] $\ell(T^{S_3}_\median, \cC_3)=O(\frac{1}{n})$, $\ell(T^{S_3}_\complete, \cC_3)=\Omega(1)$ and $\ell(T^{S_3}_\single, \cC_3)=\Omega(1)$.
\end{enumerate}
\end{theorem}

\begin{proof}
    In the following constructions we will have $S_i\subset \R^2$ and the distance metric $d$ will be the Euclidean metric. Also we will have number of target clusters $k=2$.
    
    \vspace*{0.3cm}
    {\textbf {Construction of} $S_1,\cC_1$}. For $S_1$, we will specify the points using their polar coordinates. We place a single point $x$ at the origin $(0,\phi)$ and $\frac{n-1}{8}$ points each along the unit circle at locations $y_1=(1,0),y_2=(1,\frac{\pi}{4}-\epsilon),y_3=(1,\frac{\pi}{2}),y_4=(1,\frac{3\pi}{4}-\epsilon),y_5=(1,\pi),y_6=(1,\frac{5\pi}{4}-\epsilon),y_7=(1,\frac{3\pi}{2})$ and $y_8=(1,\frac{7\pi}{4}-\epsilon)$, where $\epsilon=0.001$. Also set $\cC_1=\{\{x\},S_1\setminus \{x\}\}$.
    
    In each linkage procedure, the first $n-9$ merges will join coincident points at locations $y_j,j\in[8]$, let $\Tilde{y}_j$ denote the corresponding sets of merged points. The next four merges will be $z_j:=\{\Tilde{y}_j,\Tilde{y}_{j+1}\}$ for $j\in\{1,3,5,7\}$ for each procedure since $d(y_j,y_{j+1})=\sqrt{2-2\cos(\frac{\pi}{4}-\epsilon)}<\min\{\sqrt{2-2\cos(\frac{\pi}{4}+\epsilon)},1\}$, again common across all procedures. Now single linkage will continue to merge clusters on the unit circle since $\sqrt{2-2\cos(\frac{\pi}{4}+\epsilon)}<1$, however both complete and median linkage will join each of $z_j,j\in\{1,3,5,7\}$ to the singleton cluster $\{x\}$ since the median (and therefore also the maximum distance between points in $z_j,z_{j'},j\ne j'$ is at least $\sqrt{2}>1$. Therefore a 2-pruning\footnote{A $k_0$-pruning for a tree $T$ is a partition of the points contained in $T$'s root
into $k_0$ clusters such that each cluster is an internal node of $T$.} of $T^{S_1}_\single$ yields $\cC_1$ i.e $\ell(T^{S_1}_\single, \cC_1)=0$, while a 2-pruning of $T^{S_1}_\complete$ or $T^{S_1}_\median$ would yield $\{z_j,S_1\setminus z_j\}$ for some $j\in\{1,3,5,7\}$, corresponding to a hamming loss of $\frac{1}{4}+\Omega(\frac{1}{n})=\Omega(1)$.
    
    \vspace*{0.3cm}
    {\textbf {Construction of} $S_2,\cC_2$}. For $S_2$, we will specify the points using their Cartesian coordinates. We place single points $x_1,x_2$ at $(0,0)$ and $(3.2,0.5)$ and $\frac{n-2}{2}$ points each at $y_1=(1.1,1.8)$ and $y_2=(1.8,0.5)$. We set $\cC_2=\{\{(x,y)\in S_2\mid y>1\},\{(x,y)\in S_2\mid y\le 1\}\}$. The distances between pairs of points may be ordered as
    \begin{align*}d(x_2,y_2)&=1.4 < d(y_1,y_2)\approx 1.5 < d(x_1,y_2)\approx 1.9 < d(x_1,y_1)\approx 2.1< d(x_2,y_1)\approx 2.5 \\&<  d(x_1,x_2)\end{align*}
    
    All linkage procedures will merge the coincident points at $y_1$ and $y_2$ (respectively) for the first $n-4$ merges. Denote the corresponding clusters by $\Tilde{y}_1$ and $\Tilde{y}_2$ respectively. The next merge will be $z_2:=\{x_2,\Tilde{y}_2\}$ in all cases. Now  single linkage will join $z_2$ with $\Tilde{y}_1$. Further, since $n\ge 10, \frac{n-2}{2}\ge 4$ and therefore the median distance between $z_2$ and $\Tilde{y}_1$ is also $d(y_1,y_2)$. However, since $d(x_1,y_1)<d(x_2,y_1)$, the complete linkage procedure will merge $\{x_1,z_2\}$. Finally, the two remaining clusters are merged in each of the two procedures. Clearly, 2-pruning of $T^{S_2}_\complete$ yields $\cC_2$ or $\ell(T^{S_2}_\complete, \cC_2)=0$. However, $\ell(T^{S_2}_\single, \cC_2)=\ell(T^{S_2}_\median, \cC_2)=\frac{1}{2}-O(\frac{1}{n})=\Omega(1)$.
    \vspace*{0.3cm}
    
\begin{figure*}[!t]

\centering

\includegraphics[scale=0.62]{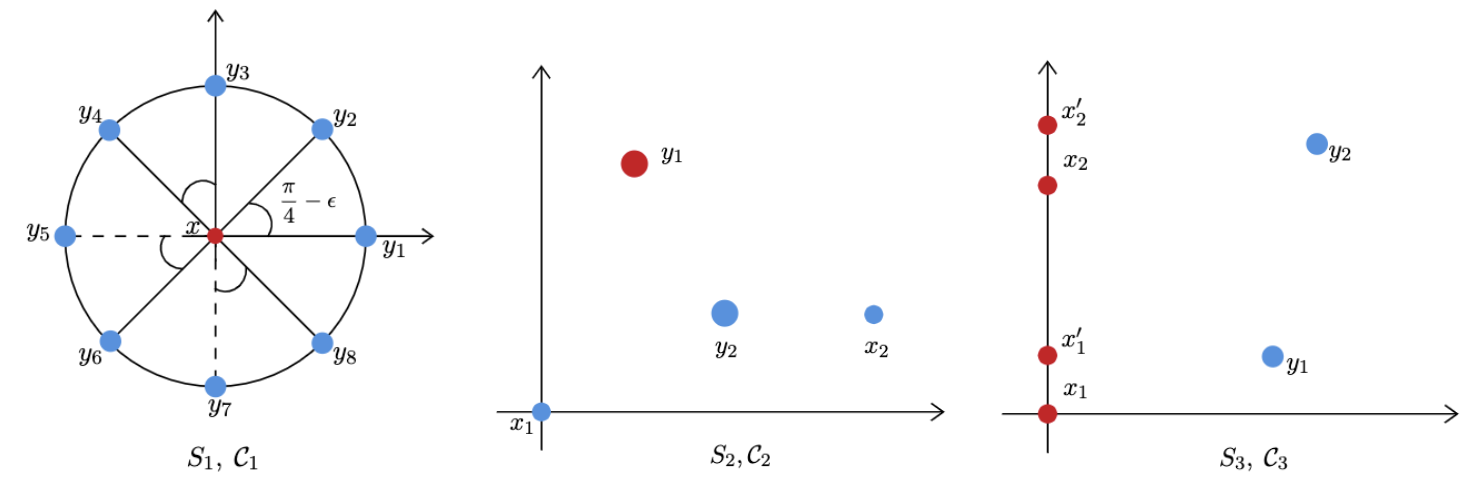}
\caption{Construction of clustering instances showing the need for interpolating linkage heuristics. We give concrete instances and target clusterings where each of two-point based linkage procedures, i.e. single, complete and median linkage, dominates the other two. 
}
\label{figure: s1c1}
\end{figure*}
    
    {\textbf {Construction of} $S_3,\cC_3$}. We specify the points in $S_3$ using their Cartesian coordinates. We place $\frac{n-1}{6}$  points each at $x_1=(0,0),x_2'=(0,1+2\epsilon)$, $\frac{n-1}{12}$ points each at $x_1'=(0,\epsilon),x_2=(0,1+\epsilon)$, $\frac{n-1}{4}$ points each at $y_1=(1+0.9\epsilon,\epsilon),y_2=(1+\epsilon,1+1.9\epsilon)$, and one point $z_1=(0,2)$ with $\epsilon=0.3$. With some abuse of notation we will use the coordinate variables defined above to also denote the collection of points at their respective locations. We set $\cC_3=\{\{(x,y)\in S_2\mid x\le 0\},\{(x,y)\in S_2\mid x> 0\}\}$. 
    
    After merging the coincident points, all procedures will merge clusters $\tilde{x}_1:=\{x_1,x_1'\}$ and $\tilde{x}_2:=\{x_2,x_2',z_1\}$. Let's now consider the single linkage merge function. We have $D_\single(\tilde{x}_1,\tilde{x}_2;d)=1$ and all other cluster pairs are further apart. The next merge is therefore $\tilde{x}:=\{\tilde{x}_1,\tilde{x}_2\}$. Also, $D_\single(\tilde{x},y_1;d)=1+0.9\epsilon<\min\{D_\single(\tilde{x},y_2;d),D_\single(y_1,y_2;d)\}$ leading to the merge $\{\tilde{x},y_1\}$, and finally $y_2$ is merged in. A 2-pruning therefore has loss $\ell(T^{S_3}_\single, \cC_3)=\Omega(1)$. On the other hand, $D_\median(\tilde{x}_1,\tilde{x}_2;d)=1+\epsilon>D_\median(y_1,y_2;d)=\sqrt{(1+0.9\epsilon)^2+0.01\epsilon^2}$ and $D_\median(\tilde{x}_1,y_1;d)=\sqrt{(1+0.9\epsilon)^2+\epsilon^2}>\{D_\median(y_1,y_2;d),D_\median(\tilde{x}_1,\tilde{x}_2;d)\}$. As a result, median linkage would first merge $\{y_1,y_2\}$, followed by $\{\tilde{x}_1,\tilde{x}_2\}$, and 2-pruning yields $\cC_3$. Complete linkage also merges $\{y_1,y_2\}$ first. But $D_\complete(\tilde{x}_1,\tilde{x}_2;d)=2 >D_\complete(\tilde{x}_1,\{y_1,y_2\};d)$. Thus, $\ell(T^{S_3}_\complete, \cC_3)=\Omega(1)$.
    
\end{proof}

\section{Additional details and proofs from Section \ref{sec:clustering}} \label{app:clustering}

\begin{definition}[2-point-based merge function \cite{balcan2019learning}]\label{def:2pb}
    A merge function $D$ is \emph{2-point-based} if for any pair of clusters $A, B \subseteq \cX$ and any metric $d$, there exists a set of points $(a, b) \in A \times B$ such that $D(A, B; d) = d(a, b)$. Furthermore, the selection of $a$ and $b$ only depend on the relative ordering of the distances between points in $A$ and $B$. More formally, for any metrics $d$ and $d'$ such that $d(a, b) \leq d(a', b')$ if and only if $d'(a, b) \leq d'(a', b')$, then $D(A, B; d) = d(a, b)$ implies $D(A, B; d') = d'(a, b)$.
\end{definition}
For instance, single, median and complete linkage are 2-point-based, since the merge function $D(A,B;d)$ only depends on the distance $d(a, b)$ for some $a \in A, b \in B$. We have the following observation about our parameterized algorithm families $D^\Delta_\rho(A,B;\delta)$ when $\Delta$ consists of 2-point-based merge functions which essentially establishes piecewise structure with linear boundaries (in the sense of Definition \ref{def:ps}). 

\begin{lemma} \label{lem:piecewise-decomp-2}
    Suppose $S \in \Pi$ is a clustering instance, $\Delta$ is a set of 2-point-based merge functions with $|\Delta| = l$, and $\delta$ is a set of distance metrics with $|\delta| = m$. Consider the family of clustering algorithms with the parameterized merge function $D^\Delta_\rho(A, B; \delta)$. The corresponding dual class function $u_S(\cdot)$ is piecewise constant with $O(|S|^{4lm})$ linear boundaries.
\end{lemma}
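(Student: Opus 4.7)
The plan is to exploit the 2-point-based structure to rewrite every candidate merge value as a linear function of $\rho$ whose coefficients are a fixed (though $\rho$-dependent in selection) collection of pairwise distances of points in $S$. Once this reduction is in place, the algorithm's execution is constant on any region of $\cP$ where the ordering of all such linear functions is preserved, and the number of hyperplanes defining those orderings can be counted combinatorially.

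First, I would fix any two candidate clusters $A,B\subseteq S$, a 2-point-based merge function $D_i\in\Delta$, and a metric $d_j\in\delta$. By Definition~\ref{def:2pb}, there exist points $a^{(i,j)}_{A,B}\in A$ and $b^{(i,j)}_{A,B}\in B$, depending only on the ordering of distances under $d_j$ (which is fixed because $d_j$ is fixed), such that $D_i(A,B;d_j)=d_j(a^{(i,j)}_{A,B},b^{(i,j)}_{A,B})$. Consequently,
\[
 D^\Delta_\rho(A,B;\delta)=\sum_{i\in[l],j\in[m]}\alpha^{(\rho)}_{i,j}\, d_j\!\left(a^{(i,j)}_{A,B},b^{(i,j)}_{A,B}\right)
\]
is a linear function of $\rho$ whose coefficients are drawn from the fixed multiset of pairwise distances $\{d_j(p,q) : p,q\in S,\ j\in[m]\}$.

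Second, I would observe that the entire execution of the linkage algorithm on $S$ is determined by the outcomes of comparisons of the form $D^\Delta_\rho(A_1,B_1;\delta) \le D^\Delta_\rho(A_2,B_2;\delta)$ for cluster pairs $(A_1,B_1),(A_2,B_2)$ that can arise at some merge step. Each such comparison, by the previous step, defines a hyperplane in $\rho$-space whose normal has coordinates of the form $d_j(a^{(i,j)}_{A_1,B_1},b^{(i,j)}_{A_1,B_1}) - d_j(a^{(i,j)}_{A_2,B_2},b^{(i,j)}_{A_2,B_2})$. Within any cell of the arrangement formed by all such hyperplanes, every comparison has a fixed sign, so the merged pair at every step is the same, the resulting cluster tree is identical, and $u_S(\rho)$ takes a single value. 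Hence $u_S$ is piecewise constant with boundaries lying on these hyperplanes.

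Finally, I would bound the number of distinct hyperplanes. The normal vector of any boundary hyperplane is completely determined by specifying, for each of the $lm$ pairs $(i,j)$, a pair of unordered point-pairs from $S$ (one for each side of the comparison). Since there are $\binom{|S|}{2}^2 = O(|S|^4)$ choices per coordinate and $lm$ coordinates in total, the number of distinct boundary hyperplanes is at most $O(|S|^{4lm})$, giving the claimed bound. The main subtlety, and the step that most deserves care, is justifying that it suffices to range over all choices of extreme-point pairs from $S$ rather than over all (possibly exponentially many) reachable cluster pairs: this follows precisely because the 2-point-based property forces the coefficient vector of each linear function to be drawn from this finite, polynomial-sized set, regardless of which particular reachable clusters $A_1,B_1,A_2,B_2$ produced the comparison. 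Reusing the analogous argument given for Lemma~\ref{lem:clustering-piecewise-2point} (with parameter dimension $d=lm-1$, so that $4d+4=4lm$) then recovers the stated piecewise-decomposability bound.
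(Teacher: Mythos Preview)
Your proposal is correct and follows essentially the same approach as the paper: both use the 2-point-based property to write each comparison $D^\Delta_\rho(A,B;\delta)\le D^\Delta_\rho(A',B';\delta)$ as a linear inequality whose normal is determined by a choice of four points from $S$ for each of the $lm$ coordinates, yielding $O(|S|^{4lm})$ boundary hyperplanes. One minor caution: your final appeal to Lemma~\ref{lem:clustering-piecewise-2point} is circular in the paper's logical order (that lemma is derived as the $m=1$ case of the present one), but your preceding argument is already self-contained, so simply drop that sentence.
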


\begin{proof}
    Let $(a_{ij}, b_{ij}, a'_{ij}, b'_{ij})_{1 \leq i \leq l, 1 \leq j \leq m} \subseteq S$ be sequences of $lm$ points each; for each such $a$, let $g_a : \cP \to \R$ denote the function
    \begin{align*}
        g_a(\rho) = \sum_{i \in [l], d_j \in \delta} \alpha_{i, j}(\rho) (d_j(a_{ij}, b_{ij} ) - d_j(a_{ij}', b_{ij}'))
    \end{align*}
    and let $\cG = \{ g_a \mid (a_{ij}, b_{ij}, a_{ij}', b_{ij}')_{1 \leq i \leq l, 1 \leq j \leq m} \subseteq S \}$ be the collection of all such linear functions; notice that $|\cG| = O(|S|^{4lm})$. Fix $\rho, \rho' \in \cP$ with $g(\rho)$ and $g(\rho')$ having the same sign patterns for all such $g$. For each $A, B, A', B' \subseteq S$,  $D_i \in \Delta$, and $d_j \in \delta$, we have $D_i(A, B; d_j) = d_j(a, b)$ and $D_i(A', B'; d_j) = d_j(a', b')$ for some $a, b, a', b' \in S$ (since $D_i$ is 2-point-based). Thus we can write $D_\rho(A, B; \delta) = \sum_{i \in [m], d_j \in \delta} \alpha_{i,j}(\rho) d_j(a_{ij}, b_{ij})$ for some $a_{ij}, b_{ij} \in S$; similarly, $D_{\rho}(A', B'; \delta) = \sum_{i \in [m], d_j \in \delta} \alpha_{i,j}(\rho) d_j(a_{ij}', b_{ij}')$ for some $a_{ij}', b_{ij}' \in S$. As a result, $D_\rho(A, B; \delta) \leq D_{\rho}(A', B'; \delta)$ if and only if
    \[
    \sum_{i \in [l], d_j \in \delta} \alpha_{i,j}(\rho) \left( d_j(a_{ij}, b_{ij}) - d_j(a_{ij}', b_{ij}')\right) \leq 0
    \]
    which is exactly when $g_a(\rho) \leq 0$ for some sequence $a$. Since $g_a(\rho)$ and $g_a(\rho')$ have the same sign pattern, we have $D_\rho(A, B; \delta) \leq D_\rho(A', B'; \delta)$ if and only if $D_{\rho'}(A, B; \delta) \leq D_{\rho'}(A', B'; \delta)$. So $\rho$ and $\rho'$ induce the same sequence of merges, meaning the algorithm's output is constant on each piece induced by $g$, as desired.
\end{proof}

From Lemma~\ref{lem:piecewise-decomp-2}, we obtain a bound on the number of hyperplanes needed to divide $\cP$ into output-constant pieces. Let $H$ be a set of hyperplanes which splits $\cP$ into output-constant pieces; then, a naive approach to finding a dual-minimizing $\rho \in \cP$ is to enumerate all pieces generated by $H$, requiring $O(|H|^d)$ runtime. However, by constructing regions merge-by-merge and successively refining the parameter space, we can obtain a better runtime bound which is output-sensitive in the total number of pieces.

\begin{proof}[Proof of Lemma \ref{lem:clustering-piecewise-2point}]
    This is a simple corollary of Lemma~\ref{lem:piecewise-decomp-2} for $m=1$. In this case $l=d+1$.
\end{proof}

\begin{lemma}\label{lem:clustering-piecewise-main}
Consider the family of clustering algorithms with the parameterized merge function $D^1_\rho(A, B; \delta)$. Let $T_{\rho}^{S}$ denote the cluster tree computed using the parameterized merge function $D^\Delta_\rho(A, B; d_0)$ on sample $S$. Let $\cU$ be the set of functions 
$\{u_\rho : S \mapsto \ell(T_{\rho}^{S},\cC) \mid \rho\in\R^d\}$
 that map a clustering instance $S$ to $\R$. The dual class $\cU^*$ is $(\cF,|S|^4)$-piecewise decomposable, where $\cF = \{f_c : \cU \rightarrow \R \mid c \in \R\}$ consists of constant functions $f_c : u_\rho\mapsto c$.
\end{lemma}

The key observation for the proof comes from \cite{balcan2019learning} where it is observed that two parameterized distance metrics $d_{\rho_1},d_{\rho_2}$ behave identically (yield the same cluster tree) on a given dataset $S$ if the relative distance for all pairs of two points $(a,b),(a',b')\in S^2\times S^2$, $d_{\rho}(a,b)-d_{\rho}(a',b')$, has the same sign for $\rho_1,\rho_2$. This corresponds to a partition of the parameter space with $|S|^4$ hyperplanes, with all distance metrics behaving identically in each piece of the partition. More formally, we have

\begin{proof}[Proof of Lemma \ref{lem:clustering-piecewise-main}]
    Let $S$ be any clustering instance. Fix points $a,b,a',b'\in S$. Define the linear function $g_{a,b,a',b'}(\rho)=\sum_i\rho_i(d_i(a,b)-d_i(a',b'))$. If $d_\rho(\cdot,\cdot)$ denotes the interpolated distance metric, we have that $d_\rho(a,b)\le d_\rho(a',b')$ if and only if $g_{a,b,a',b'}(\rho)\le 0$. Therefore we have a set $H=\{g_{a,b,a',b'}(\rho)\le 0\mid a,b,a',b'\in S\}$ of $|S|^4$ hyperplanes such that in any piece of the sign-pattern partition of the parameter space by the hyperplanes, the interpolated distance metric behaves identically, i.e. for any $\rho,\rho'$ in the same piece $d_\rho(a,b)\le d_\rho(a',b')$ iff $d_{\rho'}(a,b)\le d_{\rho'}(a',b')$. The resulting clustering is therefore identical in these pieces.
This means
that for any connected component R of $\R^d\setminus H$, there exists a real value $c_R$ such that $u_
\rho(s_1, s_2) = c_R$
for all $\rho \in \R^d$. By definition of the dual, $u^*_{s_1,s_2}(u_\rho)=u_
\rho(s_1, s_2) = c_R$.
 For each hyperplane $h\in H$, let $g^{(h)}\in\cG$ denote the corresponding halfspace. Order these $k=|S|^4$ functions arbitrarily as $g_1,\dots,g_k$.
For a given connected component $R$ of $\R^d\setminus H$, let $\mathbf{b}_R \in \{0, 1\}^k$
 be the corresponding sign pattern. Define the function $f^{(\mathbf{b}_R)}= f_{c_R}$ and for $\mathbf{b}$ not corresponding to any $R$, $f^{(\mathbf{b})}= f_0$. Thus, for each $\rho\in\R^d$,
 \begin{align*}
     u^*_{s_1,s_2}(u_\rho)=\sum_{\mathbf{b}\in\{0, 1\}^k}\mathbb{I}\{g_i(u_\rho)=b_i\forall i\in[k]\}f^{(\mathbf{b})}(u_\rho).
 \end{align*}
\end{proof}

\begin{corollary}
    For any clustering instance $S \in \Pi$, the dual class function $u_S(\cdot)$ for the family in Lemma \ref{lem:clustering-piecewise-main}  is piecewise constant with $O\left(|S|^{4d}\right)$ pieces.
\end{corollary}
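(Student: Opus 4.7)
The plan is to combine Lemma \ref{lem:clustering-piecewise-main} with the standard combinatorial bound on the number of cells in a hyperplane arrangement. By Lemma \ref{lem:clustering-piecewise-main}, the dual class $\cU^*$ associated with the family $D^1_\rho(A,B;\delta)$ is $(\cF,\cG,k)$-piecewise decomposable with $k = |S|^4$ linear threshold functions $g_1,\dots,g_k$, each corresponding to a hyperplane $H_i \subset \R^d$ of the form $\sum_j \rho_j (d_j(a,b) - d_j(a',b')) = 0$ for some quadruple $(a,b,a',b') \in S^4$. On every connected component of $\R^d \setminus \bigcup_{i=1}^k H_i$, the value $u_S(\rho)$ is constant (since the sign pattern $(g_1(\rho),\dots,g_k(\rho))$ determines the piece function $f_{\mathbf{b}}$ applied).

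Hence it suffices to upper bound the number of full-dimensional cells in the arrangement $\cA(H_1,\dots,H_k)$. I would invoke the classical Buck/Zaslavsky bound: an arrangement of $k$ hyperplanes in $\R^d$ induces at most
\[
\sum_{i=0}^{d} \binom{k}{i} = O(k^d)
\]
full-dimensional regions, attained when the hyperplanes are in general position. Substituting $k = |S|^4$ gives the claimed bound of $O(|S|^{4d})$ pieces.

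The only minor wrinkle is that $u_S$ might not actually take distinct values on every such region (some adjacent cells may yield the same constant), but the corollary only requires an upper bound on the number of pieces, and grouping adjacent cells with equal value can only decrease this count. There is no substantive obstacle — the entire argument is a one-line application of the hyperplane arrangement bound to the decomposition provided by Lemma \ref{lem:clustering-piecewise-main}.
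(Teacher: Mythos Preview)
Your proposal is correct and is exactly the implicit argument the paper relies on: the corollary is stated without proof immediately after Lemma~\ref{lem:clustering-piecewise-main}, and the intended justification is precisely the hyperplane-arrangement cell count $\sum_{i=0}^d\binom{k}{i}=O(k^d)$ applied with $k=|S|^4$.
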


\begin{lemma} \label{lem:piecewise-decomp-1}
    Let $S \in \Pi$ be a clustering instance, $\Delta$ be a set of merge functions, and $\delta$ be a set of distance metrics. Then, the corresponding dual class function $u_S(\cdot)$ is piecewise constant with $O(16^{|S|})$ linear boundaries of pieces.
\end{lemma}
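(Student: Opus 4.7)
The plan is to mirror the argument used for Lemma~\ref{lem:piecewise-decomp-2}, but without exploiting the 2-point-based structure. Since the merge functions in $\Delta$ can be arbitrary, we cannot reduce $D_i(A,B;d_j)$ to a single pairwise distance $d_j(a,b)$; instead, we simply treat $D_i(A,B;d_j)$ as an opaque real number that depends on the (full) cluster pair $(A,B)$, the merge function $D_i$, and the metric $d_j$. Our boundary functions will then be indexed by pairs of cluster pairs.

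More concretely, for any two ordered pairs of subsets $(A,B)$ and $(A',B')$ of $S$, define the function
\[
    h_{A,B,A',B'}(\rho) \;=\; D^\Delta_\rho(A,B;\delta) - D^\Delta_\rho(A',B';\delta) \;=\; \sum_{i\in[l],\, j\in[m]} \alpha_{i,j}(\rho)\bigl(D_i(A,B;d_j) - D_i(A',B';d_j)\bigr).
\]
Because $\alpha^{(\rho)}$ depends linearly on $\rho$ and the parenthesized terms are constants determined by $S,\Delta,\delta$, each $h_{A,B,A',B'}$ is an affine function of $\rho$. Let $\cH$ denote the set of hyperplanes $\{h_{A,B,A',B'}=0\}$ over all such cluster pairs.

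The key observation is that if $\rho,\rho'$ lie in the same open region of $\R^d\setminus\bigcup \cH$, then for every pair $(A,B),(A',B')$ we have $\mathrm{sgn}\,h_{A,B,A',B'}(\rho)=\mathrm{sgn}\,h_{A,B,A',B'}(\rho')$. This means that $D^\Delta_\rho(A,B;\delta)\le D^\Delta_\rho(A',B';\delta)$ iff $D^\Delta_{\rho'}(A,B;\delta)\le D^\Delta_{\rho'}(A',B';\delta)$, so the relative ordering of merge-function values is identical for $\rho$ and $\rho'$. An easy induction on merge steps (the initial clusters are the singletons of $S$, and each merge step picks the pair with smallest merge value, producing clusters whose identity depends only on previously executed merges) then shows that the entire sequence of merges—and hence the full cluster tree, and hence $u_S$—is identical for $\rho$ and $\rho'$. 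Thus $u_S$ is constant on each open region cut out by $\cH$.

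It remains to count $|\cH|$. Every cluster that could ever arise is a subset of $S$, so there are at most $2^{|S|}$ candidate clusters, at most $(2^{|S|})^2 = 4^{|S|}$ ordered cluster pairs $(A,B)$, and at most $(4^{|S|})^2 = 16^{|S|}$ pairs of cluster pairs. Therefore $|\cH| = O(16^{|S|})$, giving the claimed bound on the number of linear boundaries. The main (and essentially only) subtlety is verifying that every algorithmically relevant comparison is captured by some $h_{A,B,A',B'}$; this follows because the linkage algorithm's only $\rho$-dependent decision at each step is the choice of which pair to merge, which is exactly an ordering comparison between merge-function values over the currently surviving cluster pairs.
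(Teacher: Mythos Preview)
Your proof is correct and follows essentially the same approach as the paper: define one affine boundary function $h_{A,B,A',B'}(\rho)=D^\Delta_\rho(A,B;\delta)-D^\Delta_\rho(A',B';\delta)$ for each pair of cluster pairs, observe that a fixed sign pattern on these functions fixes the ordering of all merge values and hence the entire merge sequence, and count $(2^{|S|})^4=16^{|S|}$ such functions. Your write-up is in fact slightly more explicit than the paper's (you spell out the induction over merge steps and the affine dependence on $\rho$), but the argument is identical.
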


\begin{proof}
    For each subset of points $A, B, A', B' \subseteq S$, let $g_{A,B,A',B'} : \cP \to \R$ denote the function
    \begin{align*}
        g_{A,B,A',B'}(\rho) = D_\rho(A, B; \delta) - D_\rho(A', B'; \delta)
    \end{align*}
    and let $\cG$ be the collection of all such functions for distinct subsets $A, B, A', B'$. Observe that $\cG$ is a class of linear functions with $|\cG| \leq \left(2^{|S|}\right)^4 = 16^{|S|}$. Suppose that for $\rho, \rho' \in \cP$, $g(\rho)$ and $g(\rho')$ have the same sign for all $g \in G$; then, the ordering over all cluster pairs $A, B$ of $D_\rho(A, B; \delta)$ is the same as that of $D_{\rho'}(A, B; \delta)$. At each stage of the algorithm, the cluster pair $A, B \subseteq S$ minimizing $D_\rho(A, B; \delta)$ is the same as that which minimizes $D_{\rho'}(A, B; \delta)$, so the sequences of merges produced by $\rho$ and $\rho'$ are the same. Thus the algorithm's output is constant on the region induced by $g_{A, B, A', B'}$, meaning $u_S(\cdot)$ is piecewise constant on the regions induced by $\cG$, which have linear boundaries.
\end{proof}

\subsection{Execution Tree}\label{appendix:execution-tree}

\begin{lemma} \label{lem:execution-subdivision}
    Let $S$ be a clustering instance and $T$ be its execution tree with respect to $\cP$. Then, if a sequence of merges $\cM = [u_1, u_2, \dots, u_t]$ is attained by $\cA_\rho$ for some $\rho \in \cP$, then there exists some $v \in T$ at depth $t$ with $v = (\cM, \cQ)$ and with $\cQ \subseteq \cP$ being the exact set of values of $\rho$ for which $\cA_\rho$ may attain $\cM$. Conversely, for every node $v = (\cM, \cQ) \in T$, $\cM$ is a valid sequence of merges attainable by $\cA_\rho$ for some $\rho \in \cP$.
\end{lemma}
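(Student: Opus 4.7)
The plan is to prove both directions by straightforward induction on the depth $t$ of the node, using the recursive definition of the execution tree.

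For the forward direction, I would induct on $t$. The base case $t=0$ is immediate: the empty merge sequence is trivially attained by every $\rho \in \cP$, and the root $r = ([], \cP)$ has $\cQ = \cP$, which is exactly the set of $\rho \in \cP$ attaining the empty sequence. For the inductive step, suppose the claim holds at depth $t$, and let $\cM = [u_1, \dots, u_{t+1}]$ be attained by $\cA_\rho$ for some $\rho \in \cP$. Then the prefix $\cM' = [u_1, \dots, u_t]$ is also attained by $\cA_\rho$, so by the inductive hypothesis there is a node $v' = (\cM', \cQ') \in T$ at depth $t$ with $\rho \in \cQ'$ and $\cQ'$ equal to the exact set of parameters achieving $\cM'$. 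By the recursive definition of $\text{children}(v')$, there is exactly one child of $v'$ of the form $v = (\cM, \cQ_{u_{t+1}})$, where $\cQ_{u_{t+1}} \subseteq \cQ'$ is the set of $\rho' \in \cQ'$ for which the $(t+1)$-st merge performed by $\cA_{\rho'}$ is $u_{t+1}$. Since $\rho$ attains all of $\cM$, we have $\rho \in \cQ_{u_{t+1}}$. To see that $\cQ_{u_{t+1}}$ is exactly the set attaining $\cM$ in all of $\cP$ (not merely in $\cQ'$), note that any $\rho' \in \cP$ attaining $\cM$ must in particular attain $\cM'$, hence lies in $\cQ'$ by the inductive hypothesis, and therefore in $\cQ_{u_{t+1}}$ by the definition of the child.

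For the converse, I would again induct on the depth $t$ of the node $v = (\cM, \cQ) \in T$. At depth $0$ the claim is vacuous since $\cM = []$. For the inductive step, any depth-$(t+1)$ node $v = (\cM, \cQ)$ arises as a child of some depth-$t$ node $v' = (\cM', \cQ')$ with $\cM = [\cM', u_{t+1}]$; by the definition of $\text{children}(v')$, $\cQ \neq \emptyset$, and for any $\rho \in \cQ$ the algorithm $\cA_\rho$ produces first $t$ merges $\cM'$ (by the inductive hypothesis applied to $v'$, since $\cQ \subseteq \cQ'$) and then the $(t+1)$-st merge $u_{t+1}$, so $\cA_\rho$ attains the full sequence $\cM$.

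The main (small) subtlety will be in the forward direction: one must be careful to argue that the $\cQ_{u_{t+1}}$ supplied by the definition of $\text{children}(v')$ captures exactly those $\rho \in \cP$ attaining $\cM$, rather than just those in $\cQ'$. This reduces to the observation above that the prefix relation is forced — any $\rho$ attaining $\cM$ automatically attains $\cM'$ and hence lies in $\cQ'$ — which immediately lets us identify the "exact set in $\cQ'$" guaranteed by the IH with the "exact set in $\cP$" demanded by the lemma. No other step requires more than unwinding the recursive definition.
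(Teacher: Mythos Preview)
Your proposal is correct and follows essentially the same approach as the paper's proof: induction on the depth $t$, handling the forward direction via the prefix-closure observation and the inductive hypothesis, and the backward direction by tracing a node back to its parent. The subtlety you flag about upgrading ``exact set in $\cQ'$'' to ``exact set in $\cP$'' is precisely the point the paper also spells out.
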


\begin{proof}
    We proceed by induction on $t$. For $t = 0$, the only possible sequence of merges is the empty sequence, which is obtained for all $\rho \in \cP$. Furthermore, the only node in $T$ at depth $0$ is the root $([], \cP)$, and the set $\cP$ is exactly where an empty sequence of merges occurs.
    
    Now, suppose the claim holds for some $t \geq 0$. We show both directions in the induction step.
    
    For the forward direction, let $\cM_{t+1} = [u_1, u_2, \dots, u_t, u_{t+1}]$, and suppose $\cM_{t+1}$ is attained by $\cA_\rho$ for some $\rho \in \cP$. This means that $\cM_t = [u_1, u_2, \dots, u_t]$ is attained by $\cA_\rho$ as well; by the induction hypothesis, there exists some node $v_t = (\cM_t, \cQ_t) \in T$ at depth $t$, where $\rho \in \cQ_t$ and $\cQ_t$ is exactly the set of values for which $\cA$ may attain $\cM_t$. Now, $u_{t+1}$ is a possible next merge by $\cA_\rho$ for some $\rho \in \cQ_t$; by definition of the execution tree, this means $v_t$ has some child $v_{t+1} = (\cM_{t+1}, \cQ_{t+1})$ in $T$ such that $\cQ_{t+1}$ is the set of values where $u_{t+1}$ is the next merge in $\cQ_t$. Moreover, $\cQ_{t+1}$ is exactly the set of values $\rho \in \cP$ for which $A_\rho$ can attain the merge sequence $\cM_{t+1}$. In other words for any $\rho' \in \cP \setminus \cQ_{t+1}$, $A_{\rho'}$ cannot attain the merge sequence $\cM_{t+1}$. Otherwise, either some $\rho' \in \cP \setminus \cQ_t$ attains $\cM_{t+1}$, meaning $A_{\rho'}$ attains $\cM_t$ (contradicting the induction hypothesis), or $A_{\rho'}$ attains $\cM_{t+1}$ for some $\rho' \in \cQ_{t+1} \setminus \cQ_t$, contradicting the definition of $\cQ_{t+1}$.
    
    For the backward direction, let $v_{t+1} = (\cM_{t+1}, \cQ_{t+1}) \in T$ at depth $t+1$. Since $v_{t+1}$ is not the root, $v_{t+1}$ must be the child of some node $v_t$, which has depth $t$. By the induction hypothesis, $v_t = (\cM_t, \cQ_t)$, where $\cM_t = [u_1, u_2, \dots, u_t]$ is attained by $\cA_\rho$ for some $\rho \in \cP$. Thus by definition of the execution tree, $\cM_{t+1}$ has the form $[u_1, u_2, \dots, u_t, (A, B)]$, for some merging of cluster pairs $(A, B)$ which is realizable for $\rho \in \cQ_t$. Thus $\cM_{t+1}$ is a valid sequence of merges attainable by $\cA_\rho$ for some $\rho \in \cP$.
\end{proof}

\begin{lemma} \label{lem:subdivision-convex}
    Let $S$ be a clustering instance and $T$ be its execution tree with respect to $\cP$. Suppose $\cP$ is a convex polytope; then, for each $v = (\cM, \cQ) \in T$, $\cQ$ is a convex polytope.
\end{lemma}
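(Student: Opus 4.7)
The plan is to prove this by induction on the depth $t$ of the node $v = (\cM, \cQ) \in T$. The base case $t = 0$ is immediate: the root is $([],\cP)$, and $\cP$ is assumed to be a convex polytope.

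For the inductive step, suppose the claim holds for all nodes at depth $t$, and let $v_t = (\cM_t, \cQ_t)$ be such a node with $\cQ_t$ a convex polytope. By definition of the execution tree, for any child $v_{t+1} = (\cM_t \circ [(A, B)], \cQ_{A,B})$, the region $\cQ_{A,B}$ is the set of $\rho \in \cQ_t$ for which the pair $(A,B)$ achieves the smallest value of the interpolated merge function among all remaining cluster pairs at step $t+1$. That is,
\begin{align*}
    \cQ_{A,B} = \cQ_t \cap \bigcap_{(A', B')} \left\{ \rho \in \cP \;:\; D^\Delta_\rho(A, B; \delta) \leq D^\Delta_\rho(A', B'; \delta) \right\},
\end{align*}
where $(A', B')$ ranges over the remaining cluster pairs after the merges in $\cM_t$. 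So it suffices to show that each constraint $D^\Delta_\rho(A, B; \delta) \leq D^\Delta_\rho(A', B'; \delta)$ defines a halfspace in $\rho$; then $\cQ_{A,B}$ is the intersection of the convex polytope $\cQ_t$ with finitely many halfspaces, hence a convex polytope.

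The key observation driving the linearity of these constraints is the same one that underlies Lemma~\ref{lem:piecewise-decomp-2}: since each $D_i \in \Delta$ is 2-point-based, for any cluster pair $A, B$ and any metric $d_j \in \delta$ we have $D_i(A, B; d_j) = d_j(a_{ij}, b_{ij})$ for a specific pair $(a_{ij}, b_{ij}) \in A \times B$ whose selection depends only on the ordering induced by $d_j$ on $A \times B$, and in particular is independent of $\rho$. Therefore
\begin{align*}
    D^\Delta_\rho(A, B; \delta) = \sum_{i \in [l], d_j \in \delta} \alpha_{i,j}^{(\rho)} \, d_j(a_{ij}, b_{ij})
\end{align*}
is a linear function of $\rho$ (because the coefficients $\alpha_{i,j}^{(\rho)}$ are affine in $\rho$ by construction, and the scalars $d_j(a_{ij}, b_{ij})$ are fixed once $A, B$ are fixed). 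The difference of two such linear functions is also linear, so each constraint in the definition of $\cQ_{A,B}$ is a halfspace, completing the induction.

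The step I would expect to require the most care is the observation that the 2-point selection is $\rho$-independent once we fix the metric $d_j$ --- without this, $D^\Delta_\rho$ could be only piecewise linear in $\rho$ (as would happen for average linkage, for instance), and the child cells would not straightforwardly be convex polytopes. For the interpolation-of-metrics family $D^1_\rho(A, B; \delta)$, one would need a slightly different argument refining $\cQ_t$ first along the $|S|^4$ hyperplanes of Lemma~\ref{lem:clustering-piecewise-main} so that within each subcell the metric ordering --- and therefore the 2-point selection --- is fixed; but for the statement as given with linear boundary functions, the inductive argument above suffices.
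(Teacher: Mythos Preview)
Your proof is correct and follows essentially the same inductive argument as the paper: induct on depth, note that each child region is the parent polytope intersected with finitely many constraints of the form $D_\rho(A,B;\delta)\le D_\rho(A',B';\delta)$, and observe these are halfspaces because $D_\rho$ is affine in $\rho$.

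One small correction to your commentary: the 2-point-based assumption is \emph{not} needed here, and your remark that average linkage would make $D^\Delta_\rho$ only piecewise linear is off. By definition $D^\Delta_\rho(A,B;\delta)=\sum_{i,j}\alpha_{i,j}^{(\rho)}\,D_i(A,B;d_j)$, and each $D_i(A,B;d_j)$ is a fixed scalar once $A,B,d_j$ are fixed---regardless of whether $D_i$ is single, complete, median, or average linkage---while the $\alpha_{i,j}^{(\rho)}$ are affine in $\rho$; so the whole expression is always affine in $\rho$. The paper's proof accordingly just asserts affinity without invoking 2-point-basedness. (Your caveat about the metric-interpolation family $D^1_\rho$ is well taken, though: there the parameter enters \emph{inside} the merge function and the argument genuinely changes.)
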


\begin{proof}
    We proceed by induction on the tree depth $t$. For $t = 0$, the only node is $([], \cP)$, and $\cP$ is a convex polytope. Now, consider a node $v \in T$ at depth $t+1$; by definition of the execution tree, $v = (\cM_v, \cQ_v)$ is the child of some node $u \in T$, where the depth of $u$ is $t$. Inductively, we know that $w = (\cM_w, \cQ_w)$, for some convex polytope $\cQ_w$. We also know $\cM_w$ has the form $\cM_w = [u_1, u_2, \dots, u_t]$, and thus $\cQ_v$ is defined to be the set of points $\rho \in \cQ_w$ where the merge sequence $\cM_v = [u_1, u_2, \dots, u_t, (A, B)]$ is attainable for some fixed $A, B \subseteq S$. Notice that the definition of being attainable by the algorithm $\cA_\rho$ is that $D_\rho(A, B; \delta)$ is minimized over all choices of next cluster pairs $A', B'$ to merge. That is, $\cQ_v$ is the set of points
    \begin{align*}
        \cQ_v = \{ \rho \in \cQ_w \mid D_\rho(A, B; \delta) \leq D_\rho(A', B'; \delta) 
        \text{ for all available cluster pairs } A', B' \text{ after } \cM_w \}
    \end{align*}
    Since $D_\rho(A, B; \delta)$ is an affine function of $\rho$, the constraint $D_\rho(A, B; \delta) \leq D_\rho(A', B'; \delta)$ is a half-space. In other words, $\cQ_v$ is the intersection of a convex polytope $\cQ_w$ with finitely many half space constraints, meaning $\cQ_v$ is itself a convex polytope.
\end{proof}

It follows from Lemma~\ref{lem:subdivision-convex} that $\cP_i$ forms a convex subdivision of $\cP$, where each $\cP_{i+1}$ is a refinement of $\cP_i$; Figure~\ref{figure: refinement} (in the appendix) shows an example execution tree corresponding to a partition of a $2$-dimensional parameter space. From Lemma~\ref{lem:execution-subdivision}, the sequence of the first $i$ merges stays constant on each region $P \in \cP_i$. Our algorithm computes a representation of the execution tree of an instance $S$ with respect to $\cP$; to do so, it suffices to provide a procedure to list the children of a node in the execution tree. Then, a simple breadth-first search from the root will enumerate all the leaves in the execution tree. 

Now, our goal is to subdivide $P$ into regions in which the $(j+1)^{\text{st}}$ merge is constant. Each region corresponds to a cluster pair being merged at step $j+1$. Since we know these regions are always convex polytopes (Lemma~\ref{lem:subdivision-convex}), we can provide an efficient algorithm for enumerating these regions.

Our algorithm provides an output-sensitive guarantee by ignoring the cluster pairs which are never merged. Supposing there are $n_t$ unmerged clusters, we start with some point $x \in P$ and determine which piece $W$ it is in. Then, we search for more non-empty pieces contained in $P$ by listing the ``neighbors'' of $W$. The neighbors of $W$ are pieces inside $P$ that are adjacent to $W$; to this end, we will more formally define a graph $G_P$ associated with $P$ where each vertex is a piece and two vertices have an edge when the pieces are adjacent in space. Then we show that we can enumerate neighbors of a vertex efficiently and establish that $G_P$ is connected. It follows that listing the pieces is simply a matter of running a graph search algorithm from one vertex of $G_P$, thus only incurring a cost for each \emph{non-empty piece} rather than enumerating through all $n_t^4$ pairs of pieces.

\subsection{Auxiliary lemmas and proofs of runtime bounds for our algorithm}\label{app:clustering-main}
\begin{lemma} \label{lem:convex-closed}
    Fix an affine function $f : \R \to \R^d$ via $f(x) = xa + b$, for $a, b \in \R^d$ and $a \neq 0^d$. For a subset $S \subseteq \R$, if $f(S)$ is convex and closed, then $S$ is also convex and closed.
\end{lemma}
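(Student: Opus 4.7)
The plan is to exploit the fact that, since $a \neq 0^d$, the affine map $f$ is injective and continuous, so that $f$ sets up a bijection between $S$ and $f(S)$ that pulls both convexity and closedness back from $\R^d$ to $\R$.

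First I would record injectivity explicitly: if $f(x)=f(x')$ then $(x-x')a = 0$, and since $a\neq 0^d$ this forces $x=x'$. Consequently, $S = f^{-1}(f(S))$.

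For convexity of $S$, pick any $x,y \in S$ and $t\in[0,1]$. Since $f$ is affine, $f(tx+(1-t)y) = tf(x)+(1-t)f(y)$. The right-hand side lies in $f(S)$ because $f(x),f(y)\in f(S)$ and $f(S)$ is convex by hypothesis, so $f(tx+(1-t)y)\in f(S)$. By injectivity of $f$, the preimage under $f$ of any point of $f(S)$ is a single element of $S$, so $tx+(1-t)y\in S$. Hence $S$ is convex.

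For closedness, use that $f$ is continuous (being affine on a finite-dimensional space), and $f(S)$ is closed in $\R^d$. Then $f^{-1}(f(S))$ is closed in $\R$ as the preimage of a closed set under a continuous map. Combined with $S = f^{-1}(f(S))$ from the first paragraph, this yields that $S$ is closed.

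There is no real obstacle here beyond making sure injectivity is invoked in both places; convexity needs it to conclude $tx+(1-t)y\in S$ from $f(tx+(1-t)y)\in f(S)$, and closedness needs it to identify $S$ with $f^{-1}(f(S))$ rather than a strictly larger set. The hypothesis $a\neq 0^d$ is exactly what supplies this injectivity, so the result would fail for $a=0^d$ (a constant map), where $f(S)$ is always a single point regardless of $S$.
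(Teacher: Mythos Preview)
Your proof is correct. The convexity argument is essentially identical to the paper's. For closedness, the paper instead gives an explicit $\varepsilon$-$\delta$ argument: it shows $\R\setminus S$ is open by picking $x\notin S$, using openness of $\R^d\setminus f(S)$ to find a ball around $f(x)$ disjoint from $f(S)$, and then choosing $\varepsilon = r/\|a\|_2$ so that $f$ maps $N(x,\varepsilon)$ into that ball. Your route via $S=f^{-1}(f(S))$ and continuity of $f$ is cleaner and avoids the explicit computation, at the cost of invoking the general topological fact that preimages of closed sets under continuous maps are closed; the paper's argument is more self-contained but longer. Both rely on injectivity in the same essential way.
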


\begin{proof}
    First note that $f$ is injective, since $a \neq 0^d$. To show convexity of $S$, take arbitrary $x, y \in S$ and $\lambda \in [0, 1]$; we show that $\lambda x + (1 - \lambda) y \in S$. Consider $f(\lambda x + (1 - \lambda)y)$:
    \begin{align*}
        f(\lambda x + (1 - \lambda)y) &= (\lambda x + (1-\lambda)y)a + b \\&= \lambda(xa + b) + (1 - \lambda)(ya + b)
    \end{align*}
    By definition, $ya + b, xa + b \in f(S)$, so it follows that $f(\lambda x + (1 - \lambda)y) \in f(S)$ by convexity of $f(S)$. So there exists some $z \in S$ with $f(z) = f(\lambda x + (1 - \lambda)y)$, but since $f$ is injective, $\lambda x + (1 - \lambda) y = z \in S$. Thus $S$ is convex.

    To show closedness of $S$, we show $\R \setminus S$ is open. Let $N(x, r)$ denote the open ball of radius $r$ around $x$, in either one-dimensional or $d$-dimensional space. Let $x \in \R \setminus S$; we know $f(x) \notin f(S)$ since $f$ is injective. Since $\R^d \setminus f(S)$ is open, there exists some $r > 0$ with $N(f(x), r) \subseteq \R^d \setminus f(S)$. Then, take $\e = \frac{r}{\|a\|_2} > 0$; for every $y \in N(x, \e)$, we have
    \begin{align*}
        \| f(x) - f(y) \|_2 = \| xa + b - ya - b \|_2 < |x-y|\|a\|_2 \leq r
    \end{align*}
    and so $f(y) \in N(f(x), r) \subseteq \R^d \setminus f(S)$, meaning $y \notin S$ since $f$ is injective. Thus $N(x, \e) \subseteq R \setminus S$, meaning $S$ is closed as desired.
\end{proof}

This allows us to prove the following key lemma. We describe a proof sketch first. For arbitrary $(i, j), (i', j') \in V_P^*$, we show that there exists a path from $(i, j)$ to $(i', j')$ in $G_P$. We pick arbitrary points $w \in Q_{i,j}, x \in Q_{i',j'}$; we can do this because by definition, $V_P^*$ only has elements corresponding to non-empty cluster pairs. Then, we draw a straight line segment in $P$ from $w$ to $x$. When we do so, we may pass through other sets on the way; each time we pass into a new region, we traverse an edge in $G_P$, so the sequence of regions we pass through on this line determines a $G_P$-path from $w$ to $x$.

\begin{lemma} \label{lem:connected-main}
    The vertices $V_P^*$ of the region adjacency graph $G_P$ form a connected component; all other connected components of $G_P$ are isolated vertices.
\end{lemma}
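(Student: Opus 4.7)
The plan is to handle the two halves of the lemma separately, with the connectivity of $V_P^*$ being the substantive direction.

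The isolated-vertex claim is immediate from the definitions. If $(i, j) \in V_P \setminus V_P^*$, then $Q_{i, j} = \emptyset$, so $Q_{i, j} \cap Q_{i', j'} = \emptyset$ for every other vertex $(i', j')$, and no edge in $E_P$ is incident to $(i, j)$.

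For the connectivity direction I would fix any two vertices $(i, j), (i', j') \in V_P^*$, pick representatives $w \in Q_{i, j}$ and $x \in Q_{i', j'}$ (both cells are nonempty), and exhibit a path in $G_P$ by tracing the straight segment $\gamma(t) = (1-t)w + t x$ for $t \in [0, 1]$. Since $P$ is convex (Lemma~\ref{lem:subdivision-convex}), this segment lies in $P$. Each cell $Q_{a, b}$ is the intersection of $P$ with finitely many closed half-spaces of the form $\{\rho : D_\rho(a, b) \leq D_\rho(a', b')\}$, so $Q_{a, b}$ is closed and convex. Consequently $Q_{a, b} \cap \gamma([0, 1])$ is closed and convex in $\R^d$, and Lemma~\ref{lem:convex-closed} (applied to the affine map $\gamma$) gives that $\gamma^{-1}(Q_{a, b}) \cap [0, 1]$ is a closed interval in $[0, 1]$ (possibly empty or a single point).

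The finitely many (at most $\binom{n_t}{2}$) nonempty such intervals cover $[0, 1]$, since every point of $P$ lies in some cell. Ordering the intervals by their left endpoints produces a finite sequence of cluster pairs $(i_0, j_0) = (i, j), (i_1, j_1), \ldots, (i_k, j_k) = (i', j')$ such that, for each $s$, the intervals associated with $(i_s, j_s)$ and $(i_{s+1}, j_{s+1})$ share at least one point $t_s^* \in [0, 1]$ (otherwise the cover of $[0,1]$ would have a gap). At $\gamma(t_s^*)$ both $Q_{(i_s, j_s)}$ and $Q_{(i_{s+1}, j_{s+1})}$ contain the same point, so by the edge definition $\{(i_s, j_s), (i_{s+1}, j_{s+1})\} \in E_P$. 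This yields a path in $G_P$ from $(i, j)$ to $(i', j')$.

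The one piece that needs a little care is extracting the finite ordered sequence of cells along $\gamma$: I would note that although $\gamma$ may pass through boundary points lying in several cells simultaneously, each cluster pair contributes at most one closed interval on $[0, 1]$, so the cover is finite, and the interval ordering together with the closedness of each interval forces consecutive intervals to share an endpoint. That shared endpoint is exactly the witness for an edge in $G_P$, which is the only nontrivial step of the argument.
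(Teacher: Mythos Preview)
Your proposal is correct and follows essentially the same approach as the paper: both pick representatives in two nonempty cells, trace the straight segment between them (which stays in $P$ by convexity), invoke Lemma~\ref{lem:convex-closed} to see that each cell meets the segment in a closed subinterval of $[0,1]$, and then chain overlapping intervals to produce a $G_P$-path. Your version is slightly more complete in that you explicitly dispatch the isolated-vertex claim for $V_P \setminus V_P^*$, which the paper's proof leaves implicit.
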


\begin{proof}
    It suffices to show that for arbitrary vertices $(i_1, j_1), (i_2, j_2) \in V_P$, there exists a path from $(i_1, j_1)$ to $(i_2, j_2)$ in $G_P$. For ease of notation, define $Q_1 = Q_{i_1, j_1}$ and $Q_2 = Q_{i_2, j_2}$.

Fix arbitrary points $u \in Q_1$ and $w \in Q_2$. If $u = w$ then we're done, since the edge from $(i_1, j_1)$ to $(i_2, j_2)$ exists, so suppose $u \neq w$. Consider the line segment $L$ defined as
\begin{align*}
    L = \{ \lambda u + (1 - \lambda)w : \lambda \in [0, 1] \}
\end{align*}
Since $Q_1, Q_2 \subseteq P$, we have $u, w \in P$. Furthermore, by convexity of $P$, it follows that $L \subseteq P$.

Define the sets $R_{i, j}$ as
\begin{align*}
    R_{i, j} = Q_{i, j} \cap L
\end{align*}

Since each $Q_{i, j}$ and $L$ are convex and closed, so is each $R_{i, j}$. Furthermore, since $\bigcup_{i, j} Q_{i, j} = P$, we must have $\bigcup_{i, j} R_{i, j} = L$. Finally, define the sets $S_{i, j}$ as
\begin{align*}
    S_{i, j} = \{ t \in [0, 1] : tu + (1 - t) w \in R_{i, j} \} \subseteq [0, 1]
\end{align*}
Note that $S_{i, j}$ is convex and closed; the affine map $f : S_{i, j} \to R_{i, j}$ given by $f(x) = xu + (1 - x)w = x(u-w) + w$ has $R_{i, j}$ as an image. Furthermore, $u-w \neq 0^d$; by Lemma~\ref{lem:convex-closed}, the preimage $S_{i, j}$ must be convex and closed. Furthermore, $\bigcup_{i, j} S_{i, j} = [0, 1]$.

The only convex, closed subsets of $[0, 1]$ are closed intervals. We sort the intervals in increasing order based on their lower endpoint, giving us intervals $I_1, I_2, \dots, I_\ell$. We also assume all intervals are non-empty (we throw out empty intervals). Let $\s(p)$ denote the corresponding cluster pair associated with interval $I_p$; that is, if the interval $I_{p}$ is formed from the set $S_{i, j}$, then $\s(p) = (i, j)$.

Define $a_i, b_i$ to be the lower and upper endpoints, respectively, of $I_i$. We want to show that for all $1 \leq i \leq \ell - 1$, the edge $\{\s(i), \s(i+1)\} \in E_P$; this would show that $\s(1)$ is connected to $\s(\ell)$ in the $V_P$. But $\s(1) = (i_1, j_1)$ and $\s(\ell) = (i_2, j_2)$, so this suffices for our claim.

Now consider intervals $I_i = [a_i, b_i]$ and $I_{i+1} = [a_{i+1}, b_{i+1}]$. It must be the case that $b_i = a_{i+1}$; otherwise, some smaller interval would fit in the range $[b_i, a_{i+1}]$, and it would be placed before $I_{i+1}$ in the interval ordering.

Since $b_i \in I_i \cap I_{i+1}$, by definition, $ub_i + (1 - b_i)w \in R_{\s(i)} \cap R_{\s(i+1)}$. In particular, $ub_i + (1 - b_i)w \in Q_{\s(i)} \cap Q_{\s(i+1)}$; by definition of $E_P$, this means $\{ \s(i), \s(i+1) \} \in E_P$, as desired.
\end{proof}

\thmleaves*\red{verify proof; d! bit?}

\begin{proof}
    Let $T$ be the execution tree with respect to $S$, and let $T_t$ denote the vertices of $T$ at depth $t$. From Theorem~\ref{thm:enumeration}, for each node $v = (\cM, \cQ) \in T$ with depth $t$, we can compute the children of $v$ in time $O(n_t^2 \cdot \text{LP}(d, E_v) + V_v \cdot n_t^2 K)$, where $V_v$ is the number of children of $v$, and
    \begin{align*}
        E_v = \left|\left\{ \begin{array}{c} (\cQ_1, \cQ_2) \in \cP_{t+1}^2 \mid \cQ_1 \cap \cQ_2 \neq \emptyset \text{ and } \\ u_1 = (\cM_1, \cQ_1), u_2 = (\cM_2, \cQ_2) \\\text{ for some children } u_1, u_2 \text{ of } v\end{array} \right\}\right|.
    \end{align*}
    Now, observe
    \begin{align*}
        \sum_{v \in T_{t+1}} E_v \leq H_{t+1}
    \end{align*}
    since $H_{t+1}$ counts all adjacent pieces $\cQ_1, \cQ_2$ in $\cP_{t+1}$; each pair is counted at most once by some $E_v$. Similarly, we have $\sum_{v \in T_{t+1}} V_v \leq R_{t+1}$, since $R_{t+1}$ counts the total size of $\cP_{t+1}$. Note that $n_{t+1} = (n - t)$, since $t$ merges have been executed by time $t+1$, so $n_i = n - i + 1$. Seidel's algorithm is a randomized algorithm that may be used for efficiently solving linear programs in low dimensions, the expected running time for solving an LP in $d$ variables and $m$ constraints is $O(d! \cdot E_v)$ (also holds with high probability, e.g. Corollary 2.1 of \cite{seidel1991small}). There are also deterministic algorithms with the same (in fact slightly better) worst-case runtime bounds \cite{chan2018improved}.
    Therefore, we can set $\text{LP}(d, E_v) = O(d! \cdot E_v)$. So that the total cost of computing $\cP_i$ is 
    \begin{align*}
        O\left( \sum_{i=1}^{n}  \sum_{v \in T_i} d! \cdot E_v (n-i+1)^2 + V_v K (n-i+1)^2  \right) = O\left(\sum_{i=1}^{n} \left(d! \cdot H_i + R_i K \right)(n-i+1)^2 \right)
    \end{align*}
    as desired.
\end{proof}

\section{Further details and proofs from Section \ref{sec:seq-align}}\label{app:seq-alignment}

We provide further details for Algorithm \ref{alg:medag} in Appendix \ref{app:seq-al-1}, and other proofs from Section \ref{sec:seq-align} are located in Appendix \ref{app:seq-al}.

\subsection{Subroutines of Algorithm \ref{alg:medag}}\label{app:seq-al-1}
We start with some well-known terminology from computational geometry.

\begin{definition}\label{def:subdivision}
    A {\it (convex) subdivision} $S$ of $P\subseteq\R^d$ is a finite set of  disjoint $d$-dimensional (convex) sets (called {\it cells}) whose union
is $P$. The {\it overlay} $S$ of subdivisions $S_1,\dots,S_n$ is defined as all nonempty sets of the form $\bigcap_{i\in[n]}s_i$ with $s_i\in S_i$. With slight abuse of terminology, we will refer to closures of cells also as cells.
\end{definition} 

\noindent The \textsc{ComputeOverlay} procedure takes a set of partitions, which are convex polytopic subdivisions of $\R^d$, and computes their overlay. We will represent a convex polytopic subdivision as a list of cells, each represented as a list of bounding hyperplanes. Now to compute the overlay of subdivisions $P_1,\dots,P_L$, with  lists of cells $\cC_1,\dots,\cC_L$ respectively, we define $|\cC_1|\times \dots \times |\cC_L|$ sets of hyperplanes $H_{j_1,\dots,j_L}=\{\bigcup_{l\in [L]}\cH(c^{(l)}_{j_l})\}$, where $c^{(l)}_{j_l}$ is the $j_l$-th cell of $P_l$ and $\cH(c)$ denotes the hyperplanes bounding cell $c$. We compute the cells of the overlay by applying Clarkson's algorithm \cite{clarkson1994more} to each $H_{j_1,\dots,j_L}$. We have the following guarantee about the running time of Algorithm \ref{alg:co}.

\begin{algorithm2e}[h] 
    \caption{\textsc{ComputeOverlayDP}\label{alg:co}}
    \textbf{Input}: {Convex polytopic subdivisions $P_1,\dots,P_L$ of $\R^d$, represented as lists $\cC_j$ of hyperplanes for each cell in the subdivision}\\
    $\cH(c^{(l)}_{j_l})\gets$ hyperplanes bounding $j_l$-th cell of $P_l$ for $l\in[L], j_l\in \cC_l$\\
    \For{each $j_1,\dots,j_L \in |\cC_1|,\dots,|\cC_L|$}{
    $H_{j_1,\dots,j_L}\gets\{\bigcup_{l\in [L]}\cH(c^{(l)}_{j_l})\}$\\
    $H'_{j_1,\dots,j_L}\gets\textsc{Clarkson}(H_{j_1,\dots,j_L})$
    }
    $\cC\gets$ non-empty lists of hyperplanes in $H'_{j_1,\dots,j_L}$ for $j_l\in \cC_l$\\
    \textbf{return} Partition represented by $\cC$
    
\end{algorithm2e}

\begin{lemma}\label{lem:tco-seq}
    Let $R_{i,j}$ denote the number of pieces in $P[i][j]$, and $\Tilde{R}=\max_{i\le m, j\le n}P[i][j]$.
    There is an implementation of the \textsc{ComputeOverlayDP} routine in Algorithm \ref{alg:medag} which computes the overlay of $L$ convex polytopic subdivisions in time $O(L\Tilde{R}^{L+1}\cdot\mathrm{LP}(d,\Tilde{R}^L+1))$, which is $O(d!L\Tilde{R}^{2L+1})$ using algorithms for solving low-dimensional LPs \cite{chan2018improved}.
\end{lemma}
\begin{proof}
   Consider Algorithm \ref{alg:co}. We apply the Clarkson's algorithm at most $\Tilde{R}^{L}$ times, once corresponding to each $L$-tuple of cells from the $L$ subdivisions. Each iteration corresponding to cell $c$ in the output overlay $\cO$ (corresponding to $\cC$) has a set of at most $L\Tilde{R}$ hyperplanes and yields at most $R_c$ non-redundant hyperplanes. By Theorem \ref{thm:redundancy}, each iteration takes time $O(L\Tilde{R}\cdot\mathrm{LP}(d,R_c+1))$, where $\mathrm{LP}(d,R_c+1)$ is bounded by $O(d!R_c)$ for the algorithm of \cite{chan2018improved}. Note that $\sum_cR_c$ corresponds to the total number of edges in the cell adjacency graph of $\cO$, which is bounded by $\Tilde{R}^{2L}$. Further note that $R_c\le \Tilde{R}^L$ for each $c\in\cC$ and $|\cC|\le \Tilde{R}^L$ to get a runtime bound of $O(L\Tilde{R}^{L+1}\cdot\mathrm{LP}(d,\Tilde{R}^L+1))$.
\end{proof}

\noindent We now consider an implementation for the \textsc{ComputeSubdivisionDP} subroutine. The algorithm computes the hyperplanes across which the subproblem used for computing the optimal alignment changes in the recurrence relation (\ref{eqn:dp}) by adapting Algorithm \ref{alg:subpolytopes}. We restate  the algorithm in the context of sequence alignment as Algorithm \ref{alg:cs}.

\begin{algorithm2e}[t]

    \caption{\textsc{ComputeSubdivisionDP}\label{alg:cs}}
    
    \textbf{Input}: {Convex Polytope $P$, problem instance $(s_1,s_2)$}\\
    $v\gets$ the DP {\it case} $q((s_1,s_2))$ for the problem instance\\
    $\rho_0 \gets $ an arbitrary point in $P$\\
    $(t^0_1,t^0_2) \gets $ optimal alignment of $(s_1,s_2)$ for parameter $\rho_0$, using subproblem $(s_1[:\!i_{v,l_0}],s_2[:\!j_{v,l_0}])$ for some $l_0\in[L_v]$\\
    $\texttt{mark} \gets \emptyset, $
    $\texttt{polys} \gets $ new hashtable, 
    $\texttt{poly\_queue} \gets $ new queue\\
    $\texttt{poly\_queue}.\texttt{enqueue}(l_0)$\\
    \While{$\texttt{poly\_queue}.\texttt{non\_empty}()$}{
    $l \gets \texttt{poly\_queue}.\texttt{dequeue}()$\;
    Continue to next iteration if $l \in \texttt{mark}$\\
    $\texttt{mark} \gets \texttt{mark} \cup \{ l \}$\\
    $\cL \gets P$ \\
    \For{all subproblems $(s_1[:\!i_{v,l_1}],s_2[:\!j_{v,l_1}])$ for $l_1\in[L_v],l_1\ne l$}{
        Add the half-space inequality $b^T \rho \leq c$ corresponding to $c_{v,l}(\rho,(s_1,s_2))\le c_{v,l_1}(\rho,(s_1,s_2))$ to $\cL$ \tcc*{Label the constraint $(b, c)$ with $l_1$}
    }

    $I \gets  \textsc{Clarkson}(\cL)$\\
    $\texttt{poly\_queue}.\texttt{enqueue}(l')$ 
    \\for each $l' \text{ such that the constraint labeled by it is in }I$\\
    $\texttt{polys}[l]  \gets \{ \cL[\ell] : \ell \in I\}$
    }
    \textbf{return} \texttt{polys}
    
\end{algorithm2e}

\begin{lemma}\label{lem:tcs-seq}
    Let $R_{i,j}$ denote the number of pieces in $P[i][j]$, and $\Tilde{R}=\max_{i\le m, j\le n}R_{i,j}$. There is an implementation of \textsc{ComputeSubdivisionDP} routine in Algorithm \ref{alg:medag} with running time at most $O((L^{2d+2}+L^{2d}\Tilde{R}^L)\cdot\mathrm{LP}(d,L^2+\Tilde{R}^L))$ for each outer loop of Algorithm \ref{alg:medag}. If the algorithm of \cite{chan2018improved} is used to solve the LP, this is at most $O(d!L^{2d+2}\Tilde{R}^{2L})$.
\end{lemma}
\begin{proof}
Consider Algorithm \ref{alg:cs}. For any piece $p$ in the overlay, all the required subproblems have a fixed optimal alignment, and we can find the subdivision of the piece by adapting Algorithm \ref{alg:subpolytopes} (using $O(L^2+\Tilde{R}^L)$ hyperplanes corresponding to subproblems and piece boundaries). The number of pieces in the subdivision is at most $L^{2d}$ since we have at most $L^2$ hyperplanes intersecting the piece, so we need $O(L^{2d+2}+L^{2d}\Tilde{R}^L)$ time to list all the pieces $\cC_p$. The time needed to run Clarkson's algorithm is upper bounded by $O(\sum_{c\in\cC_p}(L^2+\Tilde{R}^L)\cdot\mathrm{LP}(d,R_c+1))=O(\sum_{c\in\cC_p}(L^2+\Tilde{R}^L)\cdot\mathrm{LP}(d,L^2+\Tilde{R}^L))=O((L^{2d+2}+L^{2d}\Tilde{R}^L)\cdot\mathrm{LP}(d,L^2+\Tilde{R}^L))$. Using \cite{chan2018improved} to solve the LP, this is at most $O(d!\Tilde{R}^{2L}L^{2d+4})$. 
\end{proof}

\begin{lemma}\label{lem:merge-seq}
    Let $R_{i,j}$ denote the number of pieces in $P[i][j]$, and $\Tilde{R}=\max_{i\le m, j\le n}R_{i,j}$. There is an implementation of \textsc{ResolveDegeneraciesDP} routine in Algorithm \ref{alg:medag} with running time at most $O(\Tilde{R}^{2L}L^{4d})$ for each outer loop of Algorithm \ref{alg:medag}.
\end{lemma}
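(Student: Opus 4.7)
The plan is to realise \textsc{ResolveDegeneraciesDP} as a breadth-first search on the cell adjacency graph $G$ of the subdivided overlay $\cP_i'$: I would identify maximal connected groups of cells on which the optimal alignment is identical, and union each such group into a single dual piece. The correctness argument I have in mind uses Lemma \ref{lem:seq-dual-piecewise}: the set of parameters $\rho$ for which any fixed alignment is optimal is a convex polytope, in particular connected, so the true dual pieces of $u_{(s_1[:i],s_2[:j])}(\cdot)$ coincide exactly with the connected equivalence classes that the BFS discovers. The key observation that makes the per-edge equality check cheap is that within any overlay cell $q\in\cP_i$ all subproblems appearing in the DP update (\ref{eqn:dp}) already have a fixed optimal alignment, so the alignment computed at a sub-cell $c\subseteq q$ is determined entirely by the minimising subproblem index $l\in[L_v]$ that \textsc{ComputeSubdivisionDP} has already stored for $c$; two adjacent sub-cells can therefore be compared in $O(1)$.

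The running-time bookkeeping goes as follows. Lemma \ref{lem:tco-seq} gives at most $\Tilde{R}^L$ cells in the overlay, and Lemma \ref{lem:tcs-seq} subdivides each overlay cell using at most $L^2$ hyperplanes, producing at most $L^{2d}$ sub-cells per overlay cell (an arrangement of $m$ hyperplanes in $\R^d$ has $O(m^d)$ full-dimensional cells). Thus $|\cP_i'|\le \Tilde{R}^L L^{2d}$, and the adjacency graph $G$ has at most $O(|\cP_i'|^2)=O(\Tilde{R}^{2L}L^{4d})$ edges in the worst case. A standard BFS runs in time $O(|V(G)|+|E(G)|)$; compressing each equivalence class into a single dual piece by dropping the hyperplanes internal to the class is one pass over the component's edges and is absorbed into the BFS cost.

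The remaining ingredient is the construction of $G$ itself. The plan is to piggyback on the region-adjacency BFS that \textsc{ComputeSubdivisionDP} already performs inside each overlay cell (adapted from Algorithm \ref{alg:subpolytopes}), which yields all intra-overlay-cell adjacencies for free, and then stitch the per-overlay-cell subgraphs together along inter-overlay-cell boundaries using the hyperplane labels that \textsc{ComputeOverlayDP} carries through from Algorithm \ref{alg:co}. With these adjacency pointers maintained as an invariant of the earlier subroutines, an $O(|\cP_i'|^2)$ pairwise scan with $O(1)$ work per candidate pair suffices, and the total cost stays at $O(\Tilde{R}^{2L}L^{4d})$.

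The main obstacle is precisely this last point. A naive construction of $G$ that reconstructs adjacencies from the raw hyperplane-list representations of cells would incur $\Omega(|\cP_i'|^2 \cdot r)$ cost, where $r = \Theta(\Tilde{R}^L + L^2)$ is the per-cell boundary complexity, overshooting the target bound by a factor of $r$. So the proof has to be careful to specify that Algorithms \ref{alg:co} and \ref{alg:cs} expose adjacency information as a byproduct of their own traversals — not merely the final polytopic descriptions — and that \textsc{ResolveDegeneraciesDP} consumes that information directly. Under this convention the running time of one outer iteration of Algorithm \ref{alg:medag} spent inside \textsc{ResolveDegeneraciesDP} is $O(\Tilde{R}^{2L}L^{4d})$, as required.
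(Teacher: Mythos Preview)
Your proposal is correct and follows essentially the same approach as the paper: a BFS over the cell adjacency graph of the subdivided overlay, with the vertex bound $|\cP_i'|\le \Tilde{R}^L L^{2d}$ coming from at most $\Tilde{R}^L$ overlay cells each split into at most $L^{2d}$ sub-cells, and the edge bound $|E|\le |V|^2$ giving the $O(\Tilde{R}^{2L}L^{4d})$ running time. You actually go further than the paper in two respects---you justify correctness via the convexity of optimal-alignment regions and you flag (and resolve) the issue of constructing $G$ within the budget---whereas the paper's proof simply asserts the BFS bound without discussing how the adjacency graph is obtained.
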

\begin{proof}
The \textsc{ResolveDegeneraciesDP} is computed by a simple BFS over the cell adjacency graph $G_c=(V_c,E_c)$ (i.e. the graph with polytopic cells as nodes and edges between polytopes sharing facets). We need to find  (maximal) components of a subgraph of the cell adjacency graph where each node in the same component has the same optimal alignment. This is achieved by a simple BFS in $O(|V_c|+|E_c|)$ time. Indeed, by labeling each polytope with the corresponding optimal alignment, we can compute the components of the subgraph of $G_c$ with edges restricted to nodes joining the same optimal alignment.   Note that the resulting polytopic subdivision after the merge is still a convex subdivision using arguments in Lemma \ref{lem:seq-dual-piecewise}, but applied to appropriate sequence alignment subproblem. As noted in the proof of Lemma \ref{lem:tcs-seq}, we have $|V_c|\le L^{2d}\Tilde{R}^{L}$ since the number of cells within each piece $p$ is at most $L^{2d}$ and there are at most $\Tilde{R}^{L}$ pieces in the overlay. Since $|E_c|\le |V_c|^2$, we have an implementation of \textsc{ResolveDegeneraciesDP} in time $O((L^{2d}\Tilde{R}^{L})^2)=O(\Tilde{R}^{2L}L^{4d})$.
\end{proof}

\noindent Finally we can put all the above together to give a proof of Theorem \ref{thm:seq-al-1}.

\begin{proof}[Proof of Theorem \ref{thm:seq-al-1}]
    The proof follows by combining Lemma \ref{lem:tco-seq}, Lemma \ref{lem:tcs-seq} and Lemma \ref{lem:merge-seq}. Note that in the execution DAG, we have $|V_e|\le|E_e|=O(T_{\textsc{dp}})$. Further, we invoke \textsc{ComputeOverlayDP} and \textsc{ResolveDegeneraciesDP} $|V_e|$ times across all iterations and \textsc{ComputeSubdivisionDP} across the $|V_e|$ outer loops. 
\end{proof}

\subsection{Additional Proofs}\label{app:seq-al}
The following results closely follow and extend the corresponding results from \cite{balcan2021much}. Specifically, we generalize to the case of two sequences of unequal length, and provide sharper bounds on the number of distinct alignments and boundary functions in the piecewise decomposition (even in the special case of equal lengths). We first have a bound on the total number of distinct alignments.

\begin{lemma}\label{lem:num-alignments-main}
For a fixed pair of sequences $s_1,s_2\in\Sigma^m\times\Sigma^n$, with $m\le n$, there are at most $m(m+n)^m$ distinct alignments.
\end{lemma}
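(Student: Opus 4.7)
The plan is to enumerate alignments by their length and then bound a small multinomial sum. Let $L = |t_1| = |t_2|$ denote the alignment length. Since $t_1,t_2$ are space-extensions of $s_1,s_2$ respectively we have $L \ge m$ and $L \ge n$, while the condition that no column of the alignment is $(-,-)$ means every column consumes at least one character from $s_1$ or $s_2$, so $L \le m+n$. Thus $L$ ranges over $\{n, n+1, \dots, m+n\}$, giving only $m+1$ possible values.

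For a fixed $L$, every alignment corresponds to a partition of the $L$ columns into three types: \emph{match/mismatch} columns (both entries non-dash), \emph{space-in-$s_2$} columns (the $t_2$ entry is $-$), and \emph{space-in-$s_1$} columns (the $t_1$ entry is $-$), with respective counts $m+n-L$, $L-n$, and $L-m$. Once the column types are chosen, the non-dash entries are uniquely forced by the left-to-right orders of $s_1$ and $s_2$. Hence the number of alignments of length $L$ equals $\binom{L}{m+n-L,\,L-n,\,L-m} = \binom{L}{m}\binom{m}{m+n-L}$, where the second form arises by first selecting the positions of $s_1$'s characters inside $t_1$ and then, among those $m$ positions, choosing which are match/mismatch columns.

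Substituting $k = m+n-L$, the total number of alignments becomes
\[
\sum_{k=0}^{m} \binom{m+n-k}{m}\binom{m}{k}.
\]
I would then bound each factor crudely: $\binom{m+n-k}{m} \le \binom{m+n}{m} \le (m+n)^m/m!$, and $\sum_{k=0}^m \binom{m}{k} = 2^m$. This yields an overall bound of $2^m(m+n)^m/m!$. Since $2^m \le m \cdot m!$ for $m \ge 2$, this is at most $m(m+n)^m$ as claimed; the small case $m = 1$ is handled separately by direct enumeration.

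The main obstacle is essentially bookkeeping: verifying the identity $\binom{L}{m+n-L,\,L-n,\,L-m} = \binom{L}{m}\binom{m}{m+n-L}$ and the numerical inequality $2^m/m! \le m$. The core combinatorial idea is simply the decomposition of columns by type; an equivalent framing interprets each alignment as a monotone lattice path from $(0,0)$ to $(m,n)$ using right, down, and diagonal steps (the Delannoy enumeration), which leads to exactly the same sum.
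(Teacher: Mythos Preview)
Your decomposition is exactly the paper's: indexing by alignment length $L$ is equivalent to the paper's indexing by $i$, the number of gaps in $t_2$, via $L=n+i$, and your multinomial $\binom{L}{m+n-L,\,L-n,\,L-m}$ equals the paper's product $\binom{n+i}{i}\binom{n}{n-m+i}$. The only difference is in the final crude estimate: the paper bounds each term directly by $(m+n)^m$ and multiplies by the number of terms, whereas you pull out $(m+n)^m/m!$ uniformly and absorb $\sum_k\binom{m}{k}=2^m$ via $2^m\le m\cdot m!$. Both routes are fine for $m\ge 2$.

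The gap is your treatment of $m=1$. You defer it to ``direct enumeration'', but if you actually carry that out you find $n$ alignments of length $n$ (the single $s_1$-character placed in any of $n$ positions) and $n+1$ alignments of length $n+1$, for a total of $2n+1$, which exceeds the claimed bound $m(m+n)^m=n+1$ whenever $n\ge 1$. So the lemma as literally stated fails at $m=1$. The paper's proof has the matching defect: it sums only over $i\in[m]=\{1,\dots,m\}$, silently omitting the $i=0$ term (alignments with no gaps in $t_2$), which is why its arithmetic appears to close. Your intermediate bound $2^m(m+n)^m/m!$ is in fact valid for all $m\ge 1$ (at $m=1$ it gives $2(n+1)\ge 2n+1$); it is only the last inequality $2^m\le m\cdot m!$ that breaks. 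For the downstream use of the lemma, a bound such as $(m+1)(m+n)^m$---which is what both arguments genuinely establish---would serve equally well.
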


\begin{proof}
    For any alignment $(t_1, t_2)$, by definition, we have $|t_1| = |t_2|$ and for all $i \in [|t_1|]$, if $t_1[i] = -$, then
$t_2[i] \ne -$ and vice versa. This implies that $t_1$ has exactly $n-m$ more gaps than $t_2$. To prove the upper bound, we count the number of alignments $(t_1, t_2)$ where $t_2$ has exactly $i$ gaps for $i\in[m]$. There are ${n+i \choose i}$
choices for placing the gap in $t_2$. Given a fixed $t_2$ with $i$ gaps, there are ${n \choose n-m+i}$ choices for placing the gap in $t_1$. Thus, there are at most ${n+i \choose i}{n \choose n-m+i}=\frac{(n+i)!}{i!(m-i)!(n-m+i)!}\le (m+n)^m$ possibilities since $i\le m$. Summing over all $i$, we have at most $m(m+n)^m$ alignments of $s_1,s_2$.
\end{proof}

\noindent This implies that the dual class functions are piecewise-structured in the following sense.

\begin{lemma}\label{lem:seq-dual-piecewise}
Let $\cU$ be the set of functions 
$\{u_\rho : (s_1, s_2) \mapsto u(s_1,s_2,\rho) \mid \rho\in\R^d\}$
 that map sequence pairs $s_1,s_2\in\Sigma^m\times\Sigma^n$ to $\R$ by computing the optimal alignment cost $C(s_1, s_2, \rho)$ for a set of features $(l_i(\cdot))_{i\in[d]}$. The dual class $\cU^*$ is $(\cF,\cG,m^2(m+n)^{2m})$-piecewise decomposable, where $\cF = \{f_c : \cU \rightarrow \R \mid c \in \R\}$ consists of constant functions $f_c : u_\rho\mapsto c$ and $\cG=\{g_w:\cU\rightarrow \{0,1\}\mid w\in\R^d\}$ consists of halfspace indicator functions $g_w : u_\rho \mapsto \mathbb{I}\{w\cdot\rho<0\}$.
\end{lemma}
\begin{proof}
     Fix a  pair of sequences $s_1$ and $s_2$. Let $\tau$ be the set of optimal alignments  as we range over all parameter vectors $\rho \in \R^d$. By Lemma \ref{lem:num-alignments-main}, we have $|\tau|\le m(m+n)^m$. For any alignment $(t_1,t_2)\in\tau$, the algorithm $A_\rho$ will return $(t_1,t_2)$ if and only if
\begin{align*}\sum_{i=1}^d\rho_i  l_i(s_1,s_2,t_1,t_2) > \sum_{i=1}^d\rho_i  l_i(s_1,s_2,t_1',t_2')\end{align*}
for all $(t_1',t_2') \in \tau\setminus \{(t_1,t_2)\}$. Therefore, there is a set $H$ of at most ${|\tau|\choose 2}\le m^2(m+n)^{2m}$  hyperplanes such that across all parameter vectors $\rho$ in a single connected component of $\R^d\setminus H$, the output of the algorithm $A_\rho$ on $(s_1,s_2)$ is fixed. This means
that for any connected component R of $\R^d\setminus H$, there exists a real value $c_R$ such that $u_
\rho(s_1, s_2) = c_R$
for all $\rho \in \R^d$. By definition of the dual, $u^*_{s_1,s_2}(u_\rho)=u_
\rho(s_1, s_2) = c_R$.
 For each hyperplane $h\in H$, let $g^{(h)}\in\cG$ denote the corresponding halfspace. Order these $k={|\tau|\choose 2}$ functions arbitrarily as $g_1,\dots,g_k$.
For a given connected component $R$ of $\R^d\setminus H$, let $\mathbf{b}_R \in \{0, 1\}^k$
 be the corresponding sign pattern. Define the function $f^{(\mathbf{b}_R)}= f_{c_R}$ and for $\mathbf{b}$ not corresponding to any $R$, $f^{(\mathbf{b})}= f_0$. Thus, for each $\rho\in\R^d$,
 \begin{align*}u^*_{s_1,s_2}(u_\rho)=\sum_{\mathbf{b}\in\{0, 1\}^k}\mathbb{I}\{g_i(u_\rho)=b_i\forall i\in[k]\}f^{(\mathbf{b})}(u_\rho).\end{align*}
\end{proof}

\noindent For the special case of $d=2$, we have an algorithm which runs in time $O(RT_{\textsc{dp}})$, where $R$ is the number of pieces in $P[m][n]$ which improves on the prior result $O(R^2+RT_{\textsc{dp}})$ for two-parameter sequence alignment problems. The algorithm employs the ray search technique of \cite{megiddo1978combinatorial} (also employed by \cite{gusfield1994parametric} but for more general sequence alignment problems) and enjoys the following runtime guarantee.

\begin{theorem}\label{thm:seq-al-2}
For the global sequence alignment problem with $d=2$, for any problem instance $(s_1,s_2)$, there is an algorithm to compute the pieces for the dual class function in $O(RT_{\textsc{dp}})$ time, where $T_{\textsc{dp}}$ is the time complexity of computing the optimal alignment for a fixed parameter $\rho\in\R^2$, and $R$ is the number of pieces of $u_{(s_1,s_2)}(\cdot)$.
\end{theorem}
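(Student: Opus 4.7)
The plan is to adapt a planar polygon-walking approach to the $d=2$ setting, using Megiddo's parametric search to find boundary edges of each piece in $O(T_{\textsc{dp}})$ time. By Lemma 4.1 (the analog for sequence alignment), the dual class function $u_{(s_1,s_2)}(\cdot)$ is piecewise constant with $R$ convex polygonal pieces that tile $\R^2$. Crucially, this tiling is a planar convex subdivision, so by Euler's formula the total number of edges in the piece-adjacency graph is $O(R)$. This means that if we can identify one piece (call it the seed) by running the DP at an arbitrary $\rho_0 \in \R^2$ to recover $\tau(s_1,s_2,\rho_0)$, and then iteratively discover neighbors one boundary edge at a time, the total combinatorial work is proportional to $R$ rather than $R^2$.

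The discovery step is the core of the algorithm. Suppose the current piece $P$ is identified by its optimal alignment $\tau_P$, and we know $\rho_P$ in the interior. To find the edge of $P$ intersected by a ray $\rho_P + t\vec{v}$, we apply Megiddo's parametric search. Every decision made by the DP at parameter $\rho_P + t\vec{v}$ is a comparison between linear functions in the two-dimensional parameter, which along the ray collapses to a comparison between two linear functions of the scalar $t$. The DP's output changes only at one of the $O(T_{\textsc{dp}})$ critical values of $t$ produced by these comparisons, and parametric search locates the smallest such critical $t$ at which the DP genuinely makes a different choice, in time $\tilde O(T_{\textsc{dp}})$ (effectively $O(T_{\textsc{dp}})$ for the problem at hand via the standard parallel-simulation argument). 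The boundary line of $P$ along this direction is then the supporting hyperplane of that decision, and pushing infinitesimally past $t^\ast$ and rerunning the DP produces the optimal alignment of the adjacent piece.

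To enumerate all of $P$'s boundary edges, I would walk around $\partial P$: start with one boundary edge found by the ray search above, then at each endpoint of the current edge (found as the smallest $t$ at which a second boundary becomes tight as we sweep along the edge) pivot to the next boundary line, again using the same parametric-search primitive. This is analogous to traversing the faces of a convex polygon produced by an arrangement. A global BFS over pieces, keyed by the alignment $\tau_P$ (or equivalently by the vector of DP decisions, stored in a hash table so that each discovered neighbor is processed at most once), then enumerates all $R$ pieces. Summing the cost: each piece contributes $O(\deg(P))$ calls to the parametric-search primitive, each at cost $O(T_{\textsc{dp}})$; by planarity $\sum_P \deg(P) = O(R)$, giving total time $O(R\, T_{\textsc{dp}})$.

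The main obstacle I anticipate is the careful handling of degeneracies and the discrepancy between ``the DP gives a different answer'' and ``we have crossed into a genuinely new piece of $u_{(s_1,s_2)}$.'' Because $u_{(s_1,s_2)}$ depends only on the final alignment and not on internal DP choices, some DP decisions may flip without changing the utility, so adjacent pieces of the DP-level subdivision must be merged on the fly (this is analogous to the \textsc{ResolveDegeneraciesDP} step in the general algorithm). I would handle this by comparing the alignment $\tau_{P'}$ returned after crossing an edge to $\tau_P$: if equal, the putative edge is spurious and we continue the boundary walk through it; otherwise it is a true boundary. A similar check rules out pathological rays that pass through a vertex of the subdivision, which can be avoided by a symbolic perturbation of the ray direction. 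With these two subtleties resolved, the $O(R\, T_{\textsc{dp}})$ bound of Theorem~\ref{thm:seq-al-2} follows.
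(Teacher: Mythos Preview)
Your approach is workable in spirit but both more elaborate than needed and missing the structural observation that makes the paper's proof a few lines. The cost function $c(s_1,s_2,t_1,t_2,\rho)=\sum_k \rho_k\, l_k(\cdot)$ is \emph{linear and homogeneous} in $\rho$, so every boundary between two pieces is a line through the origin; the pieces are cones, not general polygons, and the entire subdivision is determined by its trace on the segment from $(0,1)$ to $(1,0)$. This collapses the $d=2$ problem to finding $R$ intervals on a one-dimensional line. The paper's algorithm is then just the classical Megiddo ray search (the 1978 technique, not the 1983 parametric-search machinery you invoke): run the DP at the two endpoints $(0,1)$ and $(1,0)$; if the returned alignments $\tau,\tau'$ differ, compute the point on the segment where their costs coincide and run the DP there; if a new alignment $\tau''$ appears, recurse on the pairs $(\tau,\tau'')$ and $(\tau'',\tau')$. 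Each DP call either confirms a boundary or discovers a new piece, so at most $R+1$ calls suffice and $O(R\,T_{\textsc{dp}})$ is immediate.

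By contrast, your outline treats the subdivision as a generic planar arrangement and leans on Megiddo's parametric search to locate each polygon edge in $O(T_{\textsc{dp}})$ time. That step is the gap: parametric search in $\tilde O(T_{\textsc{dp}})$ requires a parallel version of the DP with polylogarithmic depth, which you assert via ``the standard parallel-simulation argument'' but never establish; the natural wavefront parallelization of edit distance has depth $\Theta(m+n)$, which would inflate each ray search by that factor. Your scheme must also cope with unbounded pieces (so ``walking around $\partial P$'' is not literally well-defined) and with the on-the-fly merging you flag at the end. None of this is fatal, but the homogeneity observation sidesteps all of it and gives the bound with essentially no machinery.
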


\begin{proof}
    We note that for any alignment $(t_1,t_2)$, the boundary functions for the piece where $(t_1,t_2)$ is an optimal alignment are straight lines through the origin of the form \begin{align*}
        \rho_1  l_1(s_1,s_2,t_1,&t_2) +\rho_2 l_2(s_1,s_2,t_1,t_2) > \rho_1  l_1(s_1,s_2,t_1',t_2') +\rho_2 l_2(s_1,s_2,t_1',t_2')
    \end{align*}
    for some alignment $(t_1',t_2')$ different from $(t_1,t_2)$. The intersection of these halfplanes is either the empty set or the region between two straight lines through the origin. The output subdivision therefore only consists of the axes and straight lines through the origin in the positive orthant.
    
    We will present an algorithm using the ray search technique of \cite{megiddo1978combinatorial}. The algorithm computes the optimal alignment $(t_1,t_2)$, $(t_1',t_2')$ at points $\rho=(0,1)$ and $\rho=(1,0)$. If the alignments are identical, we conclude that $(t_1,t_2)$ is the optimal alignment everywhere. Otherwise, we find the optimal alignment $(t_1'',t_2'')$ for the intersection of line $L$ joining $\rho=(0,1)$ and $\rho=(1,0)$, with the line $L'$ given by 
    \begin{align*}\rho_1  l_1(s_1,s_2,t_1,&t_2) +\rho_2 l_2(s_1,s_2,t_1,t_2) > \rho_1  l_1(s_1,s_2,t_1',t_2') +\rho_2 l_2(s_1,s_2,t_1',t_2')\end{align*}
    If $(t_1'',t_2'')=(t_1,t_2)$ or $(t_1'',t_2'')=(t_1',t_2')$, we have exactly 2 optimal alignments and the piece boundaries are given by $L'$ and the axes. Otherwise we repeat the above process for alignment pairs $(t_1'',t_2''),(t_1',t_2')$ and $(t_1'',t_2''),(t_1,t_2)$. Notice we need to compute at most $R+1$ dynamic programs to compute all the pieces, giving the desired time bound.
\end{proof}

\end{document}